\newcommand{\1}{\mathds{1}}
\newcommand{\isometry}[2]{\mathbb{V}_\mathrm{iso}(#1, #2)}
\newcommand{\SU}{\mathrm{SU}}
\newcommand{\CC}{\mathbb{C}}
\newcommand{\RR}{\mathbb{R}}
\newcommand{\NN}{\mathbb{N}}
\newcommand{\Syoung}{\mathbb{S}_\mathrm{Young}}
\newcommand{\young}[2]{{\mathbb{Y}^{#1}_{#2}}}
\newcommand{\mcA}{\mathcal{A}}
\newcommand{\mcB}{\mathcal{B}}
\newcommand{\mcU}{\mathcal{U}}
\newcommand{\mcI}{\mathcal{I}}
\newcommand{\mcO}{\mathcal{O}}
\newcommand{\mcL}{\mathcal{L}}
\newcommand{\mcS}{\mathcal{S}}
\newcommand{\mcH}{\mathcal{H}}
\newcommand{\mcR}{\mathcal{R}}
\newcommand{\mcV}{\mathcal{V}}
\newcommand{\mcP}{\mathcal{P}}
\newcommand{\mcX}{\mathcal{X}}
\newcommand{\mcY}{\mathcal{Y}}
\newcommand{\mcZ}{\mathcal{Z}}
\newcommand{\mcT}{\mathcal{T}}
\newcommand{\mcW}{\mathcal{W}}
\newcommand{\mcD}{\mathcal{D}}
\newcommand{\sfT}{\mathsf{T}}
\newcommand{\mfS}{\mathfrak{S}}
\newcommand{\mbS}{\mathbb{S}}
\newcommand{\dket}[1]{\vert {#1} \rangle\!\rangle}
\newcommand{\dbraket}[2]{\langle\!\langle {#1} \vert {#2} \rangle\!\rangle}
\newcommand{\dketbra}[2]{\vert {#1} \rangle\!\rangle\!\langle\!\langle {#2} \vert}
\def\dbraket#1{%
    \@ifnextchar\bgroup{%
        \dbraket@{#1}%
    }{%
        \langle\!\langle {#1} \vert {#1} \rangle\!\rangle%
    }%
}
\def\dbraket@#1#2{%
    \langle\!\langle {#1} \vert {#2} \rangle\!\rangle%
}
\def\dketbra#1{%
    \@ifnextchar\bgroup{%
        \dketbra@{#1}%
    }{%
        \vert {#1} \rangle\!\rangle\!\langle\!\langle {#1} \vert%
    }%
}
\def\dketbra@#1#2{%
    \vert {#1} \rangle\!\rangle\!\langle\!\langle {#2} \vert%
}
\newtheorem{definition}{Definition}
\newtheorem{theorem}[definition]{Theorem}
\newtheorem{corollary}[definition]{Corollary}
\newtheorem{lemma}[definition]{Lemma}
\newtheorem{conjecture}[definition]{Conjecture}
\begin{document}

\preprint{APS/123-QED}

\title{Quantum Advantage in Storage and Retrieval of Isometry Channels}

\author{Satoshi Yoshida}
\email{satoshiyoshida.phys@gmail.com}
\affiliation{Department of Physics, Graduate School of Science, The University of Tokyo, 7-3-1 Hongo, Bunkyo-ku, Tokyo 113-0033, Japan}
\author{Jisho Miyazaki}
\affiliation{Department of Physics, Graduate School of Science, The University of Tokyo, 7-3-1 Hongo, Bunkyo-ku, Tokyo 113-0033, Japan}
\affiliation{Ritsumeikan University BKC Research Organization of Social Sciences, 1-1-1 Noji-Higashi, Kusatsu, Shiga 525-8577, Japan}
\author{Mio Murao}
\affiliation{Department of Physics, Graduate School of Science, The University of Tokyo, 7-3-1 Hongo, Bunkyo-ku, Tokyo 113-0033, Japan}
\affiliation{Trans-scale Quantum Science Institute, The University of Tokyo, Bunkyo-ku, Tokyo 113-0033, Japan}

\date{\today}

\begin{abstract}
Storage and retrieval refer to the task of encoding an unknown quantum channel $\Lambda$ into a quantum state, known as the program state, such that the channel can later be retrieved.
There are two strategies for this task: \emph{classical} and \emph{quantum strategies}.
The classical strategy uses multiple queries to $\Lambda$ to estimate $\Lambda$ and retrieves the channel based on the estimate represented in classical bits.
The classical strategy turns out to offer the optimal performance for the storage and retrieval of unitary channels.
In this work, we analyze the asymptotic performance of the classical and quantum strategies for the storage and retrieval of isometry channels.
We show that the optimal fidelity for isometry estimation is given by $F = 1-{d(D-d)\over n} + O(n^{-2})$, where $d$ and $D$ denote the input and output dimensions of the isometry, and $n$ is the number of queries.
This result indicates that, unlike in the case of unitary channels, the classical strategy is suboptimal for the storage and retrieval of isometry channels, which requires $n = \Theta(\epsilon^{-1})$ to achieve the diamond-norm error $\epsilon$.
We propose a more efficient quantum strategy based on port-based teleportation, which stores the isometry channel in a program state using only $n = \Theta(1/\sqrt{\epsilon})$ queries, achieving a quadratic improvement over the classical strategy.
As an application, we extend our approach to general quantum channels, achieving improved program cost compared to prior results by  Gschwendtner, Bluhm, and Winter \href{https://doi.org/10.22331/q-2021-06-29-488}{[Quantum \textbf{5}, 488 (2021)]}.
\end{abstract}

\maketitle

{\it Introduction}.---
Universal programming is the task to store the action of a quantum channel $\Lambda$ to a quantum state called the program state $\phi_\Lambda$~\cite{nielsen1997programmable}.
It is aimed to establish a quantum analogue of a classical program, where bit strings represent channels on bit strings.
The size of the program state is called \emph{program cost}, and the no-programming theorem prohibits deterministic and exact implementation of universal programming using a finite program cost~\cite{nielsen1997programmable}.
To circumvent this no-go theorem, researchers developed probabilistic or approximate protocols~\cite{nielsen1997programmable, kim2001storing, vidal2002storing, hillery2002probabilistic, hillery2002implementation, winter2002scalable, yu2002milti, hillery2004improving, brazier2005probabilistic, hillery2006approximate, ishizaka2008asymptotic, ishizaka2009quantum, sedlak2019optimal, kubicki2019resource, sedlak2020probabilistic, yang2020optimal, banchi2020convex, gschwendtner2021programmability, pavlivcko2022robustness, schoute2024quantum} with generalization to measurements~\cite{duvsek2002quantum, fiuravsek2002universal, paz2003quantum, rovsko2003generalized, fiuravsek2004probabilistic, dariano2005efficient, bergou2006programmable, zhang2006universal, perez2006optimality, he2007programmable, sentis2010multicopy, bisio2011quantum, zhou2012multicopy, zhou2014success, jafarizadeh2017designing, chabaud2018optimal, lewandowska2022storage}, infinite-dimensional systems~\cite{gschwendtner2021infinite}, and generalized probabilistic theory~\cite{miyadera2023programming}.
Previous works construct storage-and-retrieval (SAR) protocols,  where the program state $\phi_\Lambda$ is prepared using multiple queries to $\Lambda$, and the number of queries is called \emph{query complexity}.
SAR protocols allow asynchronous quantum information processing~\cite{kim2025resource}, where the input state can be chosen after the application of the quantum channel $\Lambda$.
This feature is beneficial for the attack of quantum position verification protocols~\cite{beigi2011simplified} and remote quantum computing in the blind setting~\cite{yang2021representation}.

\begin{figure}
    \centering
    \includegraphics[width=\linewidth]{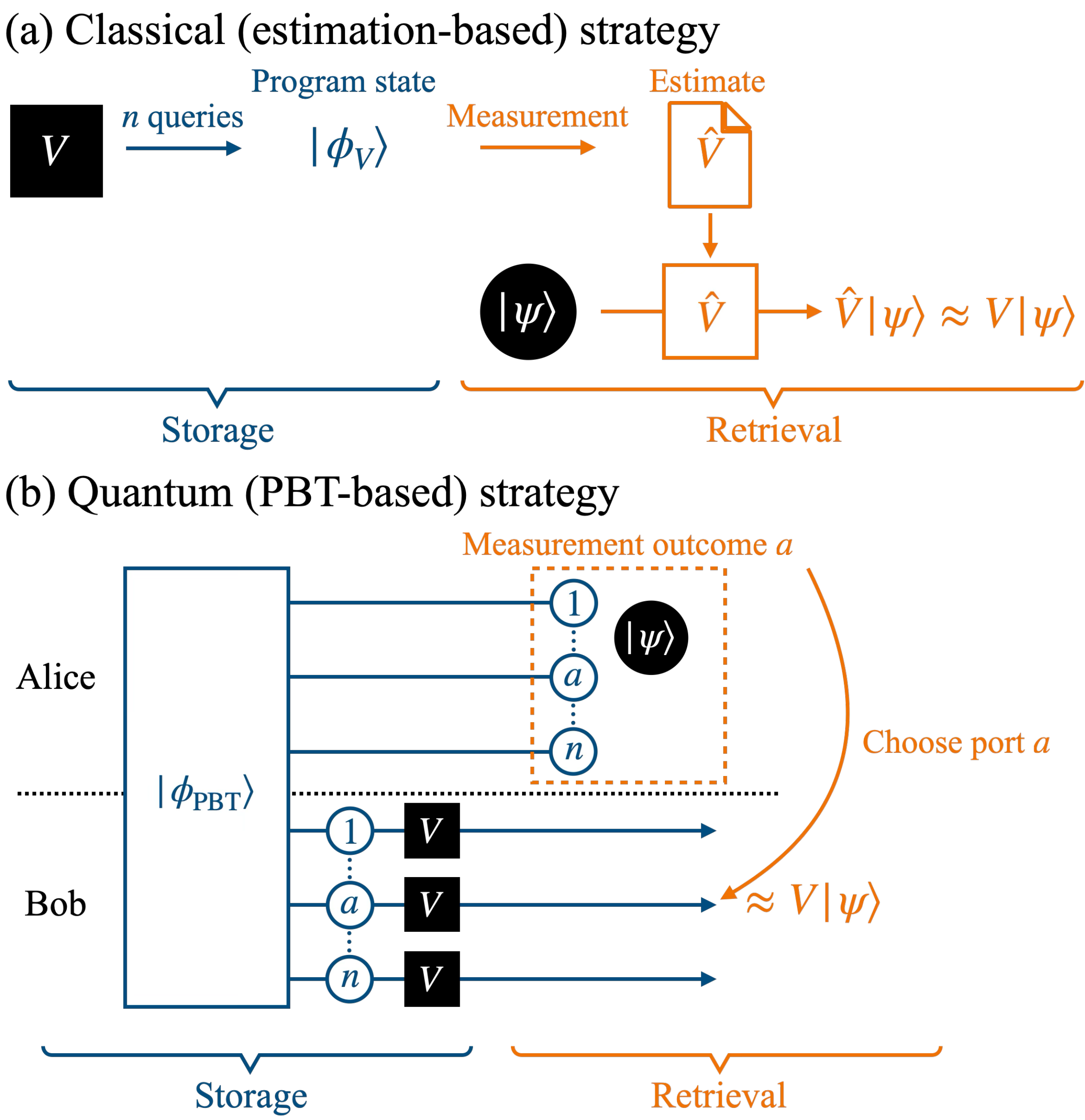}
    \caption{(a) Classical (estimation-based) strategy for dSAR of isometry channels $\mcV(\cdot)\coloneqq V\cdot V^\dagger$.
    One first prepares a quantum state $\ket{\phi_V}$ with multiple queries to $\mcV$, and measures $\ket{\phi_V}$ to obtain the estimate $\hat{V}$ corresponding to the isometry channel $\hat{\mcV}(\cdot)\coloneqq \hat{V}\cdot \hat{V}^\dagger$ as the measurement outcome.
    The quantum state $\ket{\phi_V}$ can be used as the program state, and the isometry channel can be retrieved by applying the estimated isometry channel $\hat{\mcV}$ on an input quantum state $\ket{\psi}$.\\
    (b) Quantum (PBT-based) strategy for dSAR of isometry channels.
    The sender (Alice) and the receiver (Bob) share a $2n$-qudit entangled state $\phi_\mathrm{PBT}$.
    PBT is the task to send an unknown quantum state $\ket{\psi}$, where Alice and Bob share a $2n$-qudit entangled state $\ket{\phi_\mathrm{PBT}}$.
    Alice applies a joint measurement on $\ket{\psi}$ and her share of $\ket{\phi_\mathrm{PBT}}$, sends the measurement outcome $a$ to Bob, and Bob chooses the port $a$ on his share of $\ket{\phi_\mathrm{PBT}}$ to obtain an input quantum state $\ket{\psi}$.
    The quantum state $(\1_d^{\otimes n} \otimes V^{\otimes n})\ket{\phi_\mathrm{PBT}}$ can be used as a program state for dSAR of an isometry channel $\mcV$.
    }
    \label{fig:comparison_classical_and_quantum}
\end{figure}

A natural strategy for SAR is the \emph{classical} strategy, which is proposed for the deterministic SAR (dSAR) of unitary channels, based on unitary estimation~\cite{acin2001optimal}, and also called the \emph{estimation-based} strategy.
It is classical in the sense that the strategy involves extracting the matrix representation of the unitary channel, which can be stored in a classical register.
Unitary estimation is the task to estimate an unknown unitary channel $\mcU(\cdot)\coloneqq U\cdot U^\dagger$ corresponding to $U\in \SU(d)$.
To do this, one first prepares a quantum state $\ket{\phi_U}$ using multiple queries to $\mcU$ and measures $\ket{\phi_U}$ to obtain the estimate $\hat{U} \in \SU(d)$ corresponding to the unitary channel $\hat{\mcU}(\cdot)\coloneqq \hat{U}\cdot\hat{U}^\dagger$.
The quantum state $\ket{\phi_U}$ can be used as the program state for dSAR and the retrieval of $\mcU$ is done by applying the estimated unitary channel $\hat{\mcU}$ [see Fig.~\ref{fig:comparison_classical_and_quantum}~(a)].

\footnotetext[1]{This Letter uses the big-O notation $O(\cdot)$, $\Omega(\cdot)$ and $\Theta(\cdot)$, defined as follows~\cite{arora2009computational}:
\begin{align}
    f(x) = O(g(x)) &\Leftrightarrow \limsup_{x\to \infty}{\abs{f(x) \over g(x)}} <\infty,\\
    f(x) = \Omega(g(x)) &\Leftrightarrow g(x) = O(f(x)),\\
    f(x) = \Theta(g(x)) &\Leftrightarrow f(x)=O(g(x)) \text{ and } f(x) = \Omega(g(x)).
\end{align}
Intuitively, $O(\cdot)$ represents a lower bound, $\Omega(\cdot)$ represents an upper bound, and $\Theta(\cdot)$ represents a tight bound.}

Beyond the classical strategy, researchers consider a quantum strategy for SAR, which is any strategy that is allowed in the quantum circuit model.
Besides the classical strategy, it includes the strategy based on port-based teleportation (PBT), called the \emph{PBT-based} strategy.
This strategy directly retrieves the stored channel without extracting its classical description.
The PBT-based strategy is proposed for the probabilistic SAR (pSAR) and dSAR of a general quantum channel $\Lambda$, where the pSAR retrieves $\Lambda$ exactly with a certain success probability, and the dSAR retrieves $\Lambda$ with a certain approximation error.
PBT is a variant of quantum teleportation, which uses a $2n$-qudit entangled state $\ket{\phi_\mathrm{PBT}}$ between the sender (Alice) and the receiver (Bob), and Bob chooses a port based on the measurement outcome~\cite{ishizaka2008asymptotic, ishizaka2009quantum}.
The quantum state $(\1_{\mcL(\CC^d)}^{\otimes n} \otimes \Lambda^{\otimes n})(\ketbra{\phi_\mathrm{PBT}})$ can be used as a program state, and the teleportation protocol retrieves the quantum channel $\Lambda$ [see Fig.~\ref{fig:comparison_classical_and_quantum}~(b)].

The investigation of SAR of unitary channels in the classical and quantum strategy provides insights into the advantage of quantum memory in learning tasks~\cite{huang2021information,aharonov2022quantum, huang2022quantum}.
The task of SAR of a quantum channel $\Lambda$ can be considered as a ``generative'' quantum learning, which generates the quantum state $\Lambda(\rho)$ for an input state $\rho$ using a trained model given as the program state $\phi_\Lambda$.
Note that SAR is also called ``quantum learning''~\cite{bisio2010optimal, mo2019quantum}.
While the optimal success probability of pSAR of unitary channels is shown to be achieved by the PBT-based strategy~\cite{studzinski2018port, sedlak2019optimal}, the optimal approximation error for dSAR of unitary channel is achieved by the classical strategy~\cite{bisio2010optimal, yang2020optimal} (see also Tab.~\ref{tab:summary}\footnotemark[1]).
The construction in Ref.~\cite{yang2020optimal} is based on the unitary estimation protocol achieving the Heisenberg limit (HL)~\cite{kahn2007fast, yang2020optimal,haah2023query, yoshida2024one, yoshida2025asymptotically}, which is made possible by accessing the input and output of the unitary.
Thus, there is \emph{no} quantum advantage in dSAR of unitary channels, i.e., the quantum strategy does not provide an advantage in the approximation error over the classical strategy.
Since the asymptotically optimal unitary estimation is done without quantum memory~\cite{haah2023query}, quantum advantage does not exist even if we restrict the classical strategy to that without quantum memory.

Despite the progress on SAR of unitary channels, less is known for SAR beyond unitary channels.
One of the important classes of quantum channels is the set of \emph{isometry channels}, which is defined by $\mcV(\cdot)\coloneqq V\cdot V^\dagger$ for $V:\CC^d\to\CC^D$ satisfying $V^\dagger V = \1_d$, where $\1_d$ is the identity operator on $\CC^d$.
The isometry channel represents encoding of quantum information onto a higher-dimensional space, used in various quantum information processing tasks such as quantum error correction and quantum communication~\cite{nielsen2010quantum}.
It can be interpreted as a unitary channel whose input space is restricted to a certain subspace, which apprears in various quantum algorithms, e.g., Grover's algorithm~\cite{grover1996fast} and the Harrow-Hassidim-Lloyd algorithm~\cite{harrow2009quantum}.
It also represents a general quantum channel via the Stinespring dilation~\cite{wilde2013quantum, watrous2018theory}, and the recent progress on random purification channel and random dilation superchannel provides a way to process general quantum channels via the dilation isometry channel~\cite{chen2024local,tang2025conjugate,pelecanos2025mixed,mele2025optimal,girardi2025random,walter2025random,mele2025random,chen2025quantum,girardi2025random2,yoshida2025random}.
Due to its ubiquitous use in quantum information processing, SAR of isometry channels finds various applications, e.g., storing unitary operations applied on a subspace and general quantum channels via the Stinespring dilation.
However, less is known about an isometry channel compared to a unitary channel, e.g., the optimal estimation of an isometry channel is not known.
Since isometry channel can be considered as the unitary channel where the input space is restricted to a subspace, it is not trivial whether the optimal estimation error obeys the HL (true for unitary estimation~\cite{yang2020optimal, haah2023query, yoshida2024one, yoshida2025asymptotically}) or the standard quantum limit (SQL; true for state estimation~\cite{bruss1999optimal}).

This work investigates dSAR and estimation of isometry channels.
We define the task \emph{isometry estimation} as a generalization of unitary estimation and compare the classical (estimation-based) strategy and the quantum (PBT-based) strategy for dSAR of isometry channels.
We show that the isometry estimation obeys the SQL and completely determine the leading term, which leads to showing the inefficiency of the estimation-based strategy in terms of the query complexity.
The query complexity of the PBT-based strategy is shown to be optimal, which shows a quadratic advantage over the classical strategy.
We also show the universal programming of general quantum channels as an application, whose program cost is smaller than the protocol shown in Ref.~\cite{gschwendtner2021programmability}.

\begin{table}
    \caption{Comparison of dSAR of unitary and isometry channels.
    The optimal query complexity for dSAR of a unitary channel is achieved by the classical strategy.
    The classical strategy for dSAR of isometry channels provides a sub-optimal query complexity (Thm.~\ref{thm:SQL_for_isometry}), while the quantum strategy provides the optimal one (Cor.~\ref{cor:optimal_SAR}).}
    \begin{ruledtabular}
    \begin{tabular}{c|c|c}
        & Classical strategy & Quantum strategy\\\hline
        Unitary & $n = {\Theta(d^2) \over \sqrt{\epsilon}}$~\cite{kahn2007fast, bisio2010optimal, yang2020optimal, haah2023query, yoshida2024one, yoshida2025asymptotically} & $n = {\Theta(d^2) \over \sqrt{\epsilon}}$~\cite{yoshida2024one}\\\hline
        Isometry & $n = {d(D-d) \over \epsilon} + O(1)$ \textbf{[Thm.~\ref{thm:SQL_for_isometry}]} & $n = {\Theta(d^2)\over \sqrt{\epsilon}}$ \textbf{[Cor.~\ref{cor:optimal_SAR}]}
    \end{tabular}
    \end{ruledtabular}
    \label{tab:summary}
\end{table}

{\it Definition of the tasks}.---
For a set of quantum channels $\mbS$, dSAR of $\mbS$ is the task to prepare a quantum state $\phi_\Lambda \in \mcL(\mcP)$ called the program state by $n$ queries of $\Lambda\in \mbS$, and retrieve a quantum channel $\Lambda\in\mbS$.
The number of queries is called the \emph{query complexity} of the protocol.
The size of the program state is called the \emph{program cost}, defined by $c_P\coloneqq \log \dim \mcP$.
The retrieval is done by applying a quantum channel $\Phi$, which is independent of $\Lambda$, on an input quantum state $\rho$ and the quantum state $\phi_\Lambda$.
The retrieved channel $\mcR_\Lambda$ is given by $\mcR_\Lambda(\rho)\coloneqq \Phi(\rho\otimes \phi_\Lambda)$, which approximates the original channel $\Lambda$.
The approximation error $\epsilon$ of the retrieved channel is called the \emph{retrieval error}, which is given by the worst-case diamond-norm error:
\begin{align}
    \epsilon\coloneqq {1\over 2} \sup_{\Lambda\in\mbS} \|\mcR_\Lambda-\Lambda\|_\diamond.
\end{align}
In this work, we consider three classes of the set $\mbS$:
\begin{itemize}
    \item Unitary channels: $\mbS = \mbS_{\mathrm{Unitary}}^{(d)}\coloneqq \{\mcU(\cdot)\coloneqq U\cdot U^\dagger \mid U\in\SU(d)\}$,
    \item Isometry channels: $\mbS = \mbS_{\mathrm{Isometry}}^{(d,D)} \coloneqq \{\mcV(\cdot)\coloneqq V\cdot V^\dagger \mid V\in \isometry{d}{D}\}$, where $\isometry{d}{D}\coloneqq \{V: \CC^d\to \CC^D \mid V^\dagger V = \1_d\}$.
    \item General quantum channels, i.e., completely positive and trace preserving (CPTP) maps: $\mbS = \mbS_{\mathrm{CPTP}}^{(d, D)}\coloneqq \{\Lambda:\mcL(\CC^d)\to \mcL(\CC^D) \mid \Lambda \text{ is a CPTP map}\}$, where $\mcL(\mcX)$ represents the set of linear operators on a Hilbert space $\mcX$.
\end{itemize}

Unitary estimation is the task defined as follows.
An unknown unitary operator $U\in\SU(d)$ is drawn from the Haar measure of $\SU(d)$, which represents a completely random distribution~\cite{mele2024introduction}.
The task is to estimate the corresponding unitary channel $\mcU(\cdot)\coloneqq U\cdot U^\dagger$ with $n$ queries to $\mcU$.
One can first prepare a quantum state $\phi_U$ using $n$ queries to $\mcU$, and measure $\phi_U$ to obtain the estimate $\hat{U}$ as the measurement outcome with the probability distribution denoted by $p(\hat{U} | U) \dd \hat{U}$.
The accuracy of the estimation is evaluated by the \emph{estimation fidelity}, which is given by the average-case channel fidelity:
\begin{align}
    F_\mathrm{est}\coloneqq \int \dd U \int \dd \hat{U} p(\hat{U}|U) F_\mathrm{ch}(U, \hat{U}),
\end{align}
where $\dd U$ and $\dd \hat{U}$ are the Haar measure of $\SU(d)$ and $F_\mathrm{ch}(U,\hat{U})$ is the channel fidelity~\cite{raginsky2001fidelity} between unitary channels $\mcU, \hat{\mcU}$ defined by $F_\mathrm{ch}(U,\hat{U})\coloneqq {1\over d^2} \abs{\Tr(U^\dagger \hat{U})}^2$.
Note that we can also consider the worst-case fidelity given by $\inf_{U\in \SU(d)} \int \dd \hat{U} p(\hat{U}|U) F_\mathrm{ch}(U, \hat{U})$, but this equal to the average-case one in the covariant protocol, which achieves the optimal fidelity~\cite{holevo2011probabilistic,chiribella2005optimal}.

We extend the task of unitary estimation to \emph{isometry estimation}, where an unknown isometry operator $V\in\isometry{d}{D}$ is drawn from the Haar measure of $\isometry{d}{D}$~\cite{zyczkowski2000truncations, kukulski2021generating}, and we evaluate the estimation fidelity by using the channel fidelity between a true isometry channel $\mcV(\cdot)\coloneqq V\cdot V^\dagger$ and an estimated isometry channel $\hat{\mcV}(\cdot)\coloneqq \hat{V}\cdot \hat{V}^\dagger$ defined by $F_\mathrm{ch}(V,\hat{V})\coloneqq {1\over d^2} \abs{\Tr(V^\dagger \hat{V})}^2$.
The task of isometry estimation covers unitary estimation and state estimation as the special cases ($D=d$ for unitary estimation and $d=1$ for state estimation).
We denote the optimal fidelity of isometry estimation by $F_\mathrm{est}(n,d,D)$.
The optimal retrieved error in the estimation-based strategy for dSAR of $\mathbb{S}_\mathrm{Isometry}^{(d,D)}$ is given by $\epsilon = 1-F_\mathrm{est}(n,d,D)$ [see the Supplemental Material (SM)~\cite{supple} for the details].
Similarly to the unitary estimaiton, we can also define the worst-case fidelity, which is equivalent to the average-case one.

Deterministic port-based teleportation (dPBT) is defined as follows.
The task of dPBT is to send an unknown quantum state $\rho\in\mcL(\mcA_0)$ from the sender (Alice) to the receiver (Bob) using a shared $2n$-qubit entangled state $\phi_\mathrm{PBT}\in \mcL(\mcA^n\otimes \mcB^n)$ between Alice and Bob, where $\mcA^n=\bigotimes_{a=1}^{n}\mcA_a, \mcB^n=\bigotimes_{a=1}^{n}\mcB_a$, $\mcA_0=\mcA_a=\mcB_a=\CC^d$, and $\mcB_a$ is called a port.
Alice applies a positive operator-valued measure (POVM) measurement $\{\Pi_a\}_{a=1}^{n}$ on the unknown quantum state $\rho$ and her share of $\phi_\mathrm{PBT}$, send the measurement outcome $a$ to Bob, and Bob chooses the port $a$ on his share of $\phi_\mathrm{PBT}$ [see Fig.~\ref{fig:comparison_classical_and_quantum}~(b)].
The quantum state Bob obtains is given by
\begin{align}
\label{eq:teleportation_channel}
    \Phi(\rho)\coloneqq \sum_{a=1}^{n} \Tr_{\mcA_0 \mcA^n \overline{\mcB_a}}[(\Pi_a\otimes \1_{\mcB^n})(\rho\otimes \phi_\mathrm{PBT})],
\end{align}
where $\overline{\mcB_a}\coloneqq \bigotimes_{i\neq a} \mcB_i$ and $\1_{\mcB^n}$ is the identity operator on $\mcB^n$.
We define the \emph{teleportation error} $\delta_\mathrm{PBT}\coloneqq {1\over 2}\|\Phi-\1_{\mcL(\CC^d)}\|_\diamond$,
where $\|\cdot\|_\diamond$ is the diamond norm~\cite{kitaev1997quantum, watrous2005notes} and $\1_{\mcL(\CC^d)}$ is the $d$-dimensional identity channel.
We denote the optimal teleportation error by $\delta_\mathrm{PBT}(n,d)$.
The optimal retrieved error in the PBT-based strategy for dSAR of $\mathbb{S}_\mathrm{Isometry}^{(d,D)}$ is given by $\epsilon = \delta_{\mathrm{PBT}}(n,d)$ (see the SM~\cite{supple} for the details).

\begin{figure}
    \centering
    \includegraphics[width=\linewidth]{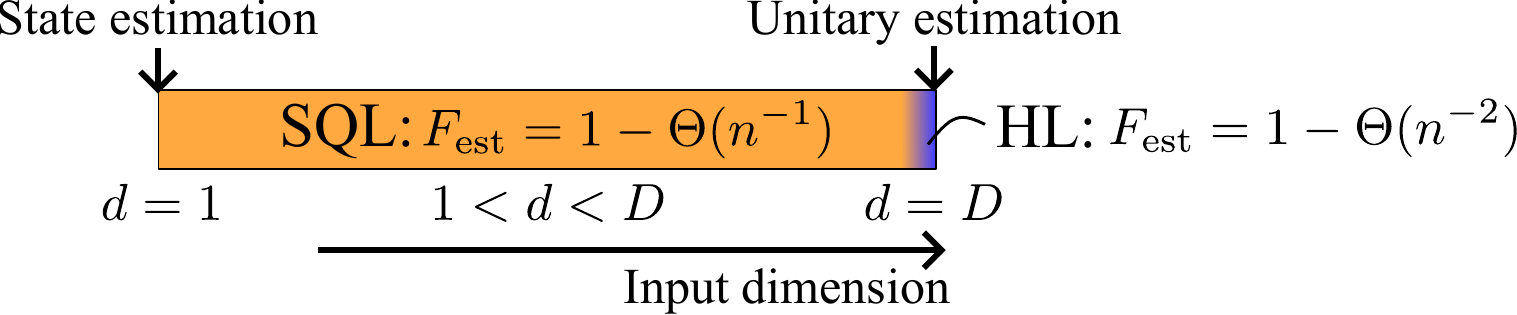}
    \caption{Quantum state estimation obeys the SQL, where the estimation fidelity $F_\mathrm{est}$ is given by $F_\mathrm{est} = 1- \Theta(n^{-1})$ for a query complexity $n$, while unitary estimation achieves the HL, where $F_\mathrm{est} = 1- \Theta(n^{-2})$ holds.
    This work shows that estimation of isometry $V\in\isometry{d}{D}$ obeys the SQL for all $d<D$ (see Thm.~\ref{thm:SQL_for_isometry}).}
    \label{fig:phase_diagram}
\end{figure}

{\it Standard quantum limit for the isometry estimation}.---
We derive the optimal fidelity of isometry estimation as shown in the following Theorem\footnotemark[1].

\begin{theorem}
\label{thm:SQL_for_isometry}
The optimal fidelity of isometry estimation is given by
    \begin{align}
        F_\mathrm{est}(n,d,D) = 1-{d(D-d) \over n}+O(n^{-2}),
        \label{eq:isometry_estimation_fidelity}
    \end{align}
which is achieved by a parallel protocol.
\end{theorem}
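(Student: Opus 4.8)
\emph{Proof strategy.} The plan is to reduce the optimization over all strategies to a finite-dimensional eigenvalue problem using group covariance and Schur--Weyl duality, and then to extract the first two terms of its asymptotics. The family $\{\mcV\mid V\in\isometry{d}{D}\}$ and the channel fidelity $F_\mathrm{ch}$ are invariant under $\SU(D)\times\SU(d)$ acting by $V\mapsto WVU^\dagger$, and the Haar measure on $\isometry{d}{D}$ is the induced invariant measure. Group averaging then shows the optimal protocol may be taken covariant, and---as for unitary estimation---adaptive strategies give no advantage over parallel ones, so it suffices to consider a parallel probe $\ket{\phi}\in(\CC^d)^{\otimes n}\otimes\mathcal{H}_\mathrm{ref}$ followed by a covariant POVM. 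By Schur--Weyl duality $(\CC^d)^{\otimes n}=\bigoplus_{\mu:\ell(\mu)\le d}\mcR_\mu^{(d)}\otimes\mathcal{M}_\mu$, and $\SU(d)$-covariance forces, with no loss of optimality, $\ket{\phi}=\sum_\mu\sqrt{p_\mu}\ket{\Phi_\mu}$, where $\mu$ ranges over Young diagrams with $n$ boxes and at most $d$ rows and $\ket{\Phi_\mu}$ is maximally entangled between $\mcR_\mu^{(d)}$ and an isomorphic reference.

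Since $V^{\otimes n}=\bigoplus_\mu V_\mu\otimes\1_{\mathcal{M}_\mu}$ with $V_\mu:\mcR_\mu^{(d)}\to\mcR_\mu^{(D)}$ the intertwining isometry, I would then insert the optimal covariant (``square-root'') measurement and carry out the Haar integral over the estimate $\hat V$. Because $\Tr(V^\dagger\hat V)$ is linear in $V$ and in $\hat V$, this integral couples only Young diagrams $\mu,\mu'$ that differ by relocating a single box, and it evaluates---via the Weingarten calculus for truncated unitaries, equivalently $\SU(D)$ integrals with the extra $(D-d)$-dimensional block traced out---to a real symmetric matrix $M=M(n,d,D)$ whose entries are products of $\SU(d)$- and $\SU(D)$-irrep dimensions and $\SU(d)\subset\SU(D)$ branching multiplicities. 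Writing $q_\mu=\sqrt{p_\mu}$ with $\sum_\mu q_\mu^2=1$, one obtains
\begin{align}
F_\mathrm{est}(n,d,D)=\frac{1}{d^2}\max_{\{p_\mu\}}\vec q^{\,\sfT}M\vec q=\frac{\lambda_{\max}(M)}{d^2},
\end{align}
and the matching converse is automatic: no covariant, hence no general, protocol beats $\lambda_{\max}(M)/d^2$.

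It remains to compute $\lambda_{\max}(M)$ as $n\to\infty$. Parametrizing $\mu_i=n/d+\sqrt n\,y_i$ with $\sum_i y_i=0$ and expanding the dimension and branching factors, $M$ becomes a discretization of a harmonic-oscillator-type Schrödinger operator in the fluctuations $\vec y$, multiplied by an overall factor $1-\tfrac{d(D-d)}{n}+O(n^{-2})$ coming from the ratios of $\SU(D)$- to $\SU(d)$-irrep dimensions near the flat shape $(n/d,\dots,n/d)$; this factor is the per-query cost of embedding a $d$-dimensional irrep into a $D$-dimensional one, the direct analogue of the pure-state-estimation penalty. The confining potential localizes the optimal $\{p_\mu\}$ to $O(\sqrt n)$-wide fluctuations about the flat shape, so the ``unitary-direction'' contribution---the piece that for $D=d$ produces the Heisenberg scaling---is only $O(n^{-2})$ and is absorbed into the remainder, giving $\lambda_{\max}(M)=d^2\big(1-\tfrac{d(D-d)}{n}+O(n^{-2})\big)$. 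This matches the known limits $D=d$ ($F_\mathrm{est}=1-\Theta(n^{-2})$, unitary estimation) and $d=1$ ($F_\mathrm{est}=\tfrac{n+1}{n+D}=1-\tfrac{D-1}{n}+O(n^{-2})$, pure-state estimation), and the achieving parallel protocol is the covariant one above with $\{p_\mu\}$ the ground state of the limiting operator.

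The main obstacle is the middle step: justifying that the optimal measurement is the covariant square-root one and that the reference side may be taken maximally entangled within each irrep, and---crucially---tracking the $\SU(d)\subset\SU(D)$ branching and dimension factors precisely enough to pin down the exact constant $d(D-d)$ rather than merely SQL scaling, all while keeping the continuum-limit remainder uniformly $O(n^{-2})$. This embedding has no counterpart in unitary estimation and is the qualitatively new ingredient.
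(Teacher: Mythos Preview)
Your reduction to a finite matrix eigenvalue problem via $\SU(D)\times\SU(d)$ covariance, Schur--Weyl duality, and a covariant measurement is correct and is exactly what the paper does (their Lemma~\ref{lem:isometry_estimation_matrix}): the optimal fidelity is the top eigenvalue of the matrix $M_{\mathrm{est}}(n,d,D)$ with entries $\frac{1}{d^2}\sum_{i,j}\delta_{\alpha+e_i,\beta+e_j}\,f(\alpha_i-i)f(\beta_j-j)$ where $f(x)=\sqrt{(x+d+1)/(x+D+1)}$, and the paper verifies that the optimal POVM is indeed the covariant one you describe.

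Your asymptotic step, however, contains an internal inconsistency. You assert that the optimal $\{p_\mu\}$ localize to an $O(\sqrt n)$ window about the flat shape \emph{and} that the ``unitary-direction'' (kinetic) contribution is therefore $O(n^{-2})$. These are incompatible: with any window of width $N$, the discrete-Laplacian deficit is $\Theta(N^{-2})$ (this is exactly the $D=d$ unitary-estimation mechanism), so $N\sim\sqrt n$ would produce a $\Theta(n^{-1})$ term that competes with the leading $d(D-d)/n$ and destroys the coefficient. Nor is the effective potential from the $f$-factor harmonic: near the flat shape the variation of $f(\alpha_i-i)^2$ is \emph{linear} in each $\alpha_i$; the constraint $\sum_i\alpha_i=n$ kills the linear term only in the sum, and the Young-diagram ordering $\alpha_1\ge\cdots\ge\alpha_d$ forces the base shape to sit $\Theta(N)$ away from flat in order to exploit the off-diagonal structure. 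Balancing the kinetic cost $N^{-2}$ against the resulting potential cost gives $N\sim n^{2/3}$ (as in the paper's trial vector), not $\sqrt n$, and subleading corrections strictly larger than $n^{-2}$. So the continuum/harmonic-oscillator heuristic, at the level of detail you give, does not close.

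The paper sidesteps all of this with two elementary, and quite different, arguments. For the \emph{upper bound} it uses the Perron--Frobenius row-sum bound $\lambda_{\max}\le\max_\alpha\sum_\beta(M_{\mathrm{est}})_{\alpha\beta}\le\bigl[\tfrac{1}{d}\sum_i f(\alpha_i-i)\bigr]^2$ and then Jensen's inequality on the concave function $f$ together with $\tfrac{1}{d}\sum_i(\alpha_i-i)=\tfrac{n}{d}-\tfrac{d+1}{2}$, giving the sharp bound $\bigl[f(\tfrac{n}{d}-\tfrac{d+1}{2})\bigr]^2=1-\tfrac{d(D-d)}{n}+O(n^{-2})$ in one line; this concavity/Jensen step is the key idea your proposal is missing, and it is what pins down the exact constant $d(D-d)$ without any continuum limit. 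For the \emph{lower bound} the paper simply writes down an explicit trial vector (sinusoidal weights on a rectangular window $\mathbb{S}_{\mathrm{Young}}$ of width $N=\Theta(n^{2/3})$) and evaluates the quadratic form directly.
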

\begin{proof}[Proof sketch]
The SQL $F_\mathrm{est}(n,d,D) \leq 1-\Theta(n^{-1})$ can be shown via the SQL of the parameter estimation~\cite{helstrom1969quantum, holevo2011probabilistic}.
We show the SQL of the quantum fisher information, which lower bounds the estimation error of the parameter estimation, by the ``Hamiltonian-not-in-Kraus-span'' (HNKS) condition shown in Ref.~\cite{zhou2021asymptotic}.
Then, by using the van Trees inequality~\cite{van2001detection}, we show the SQL of the isometry estimation (see the End Matter for the details).
Note that the HNKS condition is usually used to investigate the effect of noise in the parameter estimation, but we utilize it for characterizing estimation of noiseless channels.
The exact determination of the leading order term in Eq.~\eqref{eq:isometry_estimation_fidelity} is shown by evaluating the estimation fidelity using representation-theoretic arguments shown in Ref.~\cite{chiribella2005optimal} (see the SM~\cite{supple} for the details).
\end{proof}

This result is compatible with the previously known result for the state estimation \cite{bruss1999optimal}: $F_\mathrm{est}(n,1,d) = 1-{d-1\over d+n}$,
and the unitary estimation~\cite{kahn2007fast, yang2020optimal, haah2023query, yoshida2024one, yoshida2025asymptotically}: $F_\mathrm{est}(n,d,d) = 1-{\Theta(d^4)\over n^2} + O(n^{-3})$.
In particular, as long as $D>d$ holds, the fidelity of isometry estimation obeys the SQL $F_\mathrm{est} = 1-\Theta(n^{-1})$ similarly to the state estimation [see Fig.~\ref{fig:phase_diagram}~(a)].
This is in contrast with the unitary estimation corresponding to the case of $D=d$, which obeys the HL $F_\mathrm{est} = 1- \Theta(n^{-2})$.
This theorem also shows the exact coefficient of the leading term in $n$, which is unknown for the case of unitary channels except for $d=2, 3$~\cite{bagan2004entanglement, yoshida2025asymptotically}.

Due to Thm.~\ref{thm:SQL_for_isometry}, the estimation-based strategy for the dSAR of isometry channels provides the retrieval error given by $\epsilon = 1-F_\mathrm{est}(n,d,D) = {d(D-d)\over n} + O(n^{-2})$, i.e., $n = {d(D-d)\over \epsilon} + O(1)$ (see Tab.~\ref{tab:summary}).
We show that the PBT-based strategy provides improved query complexity and program cost in the next section.

{\it Universal programming of CPTP maps}.---
Reference~\cite{yoshida2024one} constructs a covariant dPBT protocol achieving the asymptotically optimal teleportation error
\begin{align}
\label{eq:optimal_teleportation_error}
    \delta_\mathrm{PBT}(n,d) = {\Theta(d^4)\over n^2}+O(n^{-3}).
\end{align}
Using this PBT protocol, we can construct the asymptotically optimal protocol for SAR of isometry channels with the program cost given as follows (see the SM~\cite{supple} for the proof).

\begin{corollary}
\label{cor:optimal_SAR}
    The optimal query complexity for dSAR of $\mbS_{\mathrm{Isometry}}^{(d,D)}$ and $\mbS_{\mathrm{CPTP}}^{(d,D)}$ with the retrieval error $\epsilon$ are given by
    \begin{align}
    \label{eq:optimal_SAR}
        n = {\Theta(d^2)\over \sqrt{\epsilon}},
    \end{align}
    which is achieved by the PBT-based strategy.
    The program cost of this protocol for the isometry channels is given by
    \begin{align}
        \label{eq:program_cost_isometry_channel}
        c_P \leq {Dd - 1 \over 2} \log \Theta(\epsilon^{-1}).
    \end{align}
\end{corollary}

As an application of Cor.~\ref{cor:optimal_SAR}, we show a universal programming protocol of CPTP maps, based on the dSAR of isometry channels.
\begin{theorem}
\label{thm:universal_programming_cptp_construction}
    There exists a universal programmable processor of $\mathbb{S}_{\mathrm{CPTP}}^{(d,D)}$ with the program cost given by
    \begin{align}
        c_P \leq {Dd^2-1 \over 2} \log \Theta(\epsilon^{-1}).
    \end{align}
\end{theorem}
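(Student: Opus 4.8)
The plan is to reduce the universal programming of $\mbS_{\mathrm{CPTP}}^{(d,D)}$ to that of isometry channels and then invoke Theorem~\ref{thm:universal_programming_isometry_construction}. First I would fix, for every $\Lambda\in\mbS_{\mathrm{CPTP}}^{(d,D)}$, a Stinespring dilation $\Lambda=\Tr_\mcW\circ\mcV_\Lambda$, where $\mcW\cong\CC^{d}$ is an environment register, $\mcV_\Lambda(\cdot)\coloneqq V_\Lambda(\cdot)V_\Lambda^\dagger$, and $V_\Lambda\in\isometry{d}{Dd}$ under the identification $\CC^{Dd}\cong\CC^D\otimes\mcW$; thus every CPTP map is the environment-marginal of an isometry channel in $\mbS_{\mathrm{Isometry}}^{(d,Dd)}$.

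I would then apply Theorem~\ref{thm:universal_programming_isometry_construction} to $\mbS_{\mathrm{Isometry}}^{(d,Dd)}$, producing a $\Lambda$-independent processor $\Phi_{\mathrm{iso}}$ and a program state $\phi_{V_\Lambda}\in\mcL(\mcP)$ with $\log\dim\mcP\le\tfrac{Dd\cdot d-1}{2}\log\Theta(\epsilon^{-1})$ and $\tfrac12\|\Phi_{\mathrm{iso}}(\cdot\otimes\phi_{V_\Lambda})-\mcV_\Lambda\|_\diamond\le\epsilon$. Keeping $\phi_{V_\Lambda}$ as the program state and defining the retrieval processor by post-composition with the partial trace, $\Phi(\cdot)\coloneqq\Tr_\mcW\circ\Phi_{\mathrm{iso}}(\cdot)$, the retrieved channel is $\Tr_\mcW\circ[\Phi_{\mathrm{iso}}(\cdot\otimes\phi_{V_\Lambda})]$. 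Since $\Tr_\mcW$ is CPTP and the diamond norm does not increase under post-composition with a CPTP map, $\tfrac12\|\Phi(\cdot\otimes\phi_{V_\Lambda})-\Lambda\|_\diamond\le\tfrac12\|\Phi_{\mathrm{iso}}(\cdot\otimes\phi_{V_\Lambda})-\mcV_\Lambda\|_\diamond\le\epsilon$, and the program cost is inherited, $c_P=\log\dim\mcP\le\tfrac{Dd^2-1}{2}\log\Theta(\epsilon^{-1})$.

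The main obstacle is securing the dilation in the first step with an environment of dimension only $d$: the minimal environment dimension is the Kraus rank of $\Lambda$, which for a generic channel is as large as $dD$, so a literal reading of the construction yields only $c_P\le\tfrac{d^2D^2-1}{2}\log\Theta(\epsilon^{-1})$. To obtain the stated constant one must use that the retrieval traces out $\mcW$, so only $\mcV_\Lambda$ modulo the redundancy $V_\Lambda\mapsto(\1_D\otimes U)V_\Lambda$, with $U$ unitary on $\mcW$ (equivalently, $\Lambda$ itself), has to be programmed. Concretely I would revisit the port-based construction behind Theorem~\ref{thm:universal_programming_isometry_construction} — whose program state is a compression of $\Lambda^{\otimes n}$ applied to a PBT resource with $n=\Theta(\epsilon^{-1/2})$ — and show that, after discarding the components of that state which only record the environment, the surviving representation-theoretic data fits in dimension $\Theta(n^{Dd^2-1})$, while the port-based retrieval of $\Lambda$ still incurs only the optimal PBT error $\delta_{\mathrm{PBT}}(n,d)$. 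Establishing this sharpened compression is the substantive step; the remaining error bookkeeping is routine.
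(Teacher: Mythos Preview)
Your first paragraph is fine, and you correctly identify the obstacle: a generic $\Lambda\in\mbS_{\mathrm{CPTP}}^{(d,D)}$ has Kraus rank up to $dD$, so a single Stinespring dilation with $\dim\mcW=d$ is impossible. But your proposed workaround is where the argument breaks down. You suggest compressing $(\1\otimes V_\Lambda^{\otimes n})\ket{\phi_{\mathrm{PBT}}}$ after tracing out the environment, hoping the surviving data fits in dimension $\Theta(n^{Dd^2-1})$. This is essentially the state $(\1\otimes\Lambda^{\otimes n})(\phi_{\mathrm{PBT}})$, and the paper explicitly notes that storing $\Lambda$ this way has program cost $\Theta(d^2/\sqrt{\epsilon})$ qudits, which is exponentially worse in $\epsilon$ than the target. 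The quotient by the unitary freedom $V_\Lambda\mapsto(\1_D\otimes U)V_\Lambda$ does not by itself collapse the Schur--Weyl support down to the desired size; once the state is mixed, you lose the rank-one structure that made the compression in Theorem~\ref{thm:universal_programming_isometry_construction} possible.

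The missing idea is a decomposition of $\Lambda$, not a compression of the program state. The paper writes $\Lambda=\sum_i p_i\Lambda_i$ as a convex combination of \emph{extremal} CPTP maps (Carath\'eodory), and uses Choi's result that every extremal channel with input dimension $d$ has Kraus rank at most $d$. Each $\Lambda_i$ then admits a Stinespring dilation $V_i:\CC^d\to\CC^D\otimes\CC^d$ with the small environment you wanted, Theorem~\ref{thm:universal_programming_isometry_construction} gives a pure program state $\ket{\phi_{V_i}}$ with cost $\tfrac{Dd^2-1}{2}\log\Theta(\epsilon^{-1})$, and one takes the \emph{mixed} program state $\phi_\Lambda=\sum_i p_i\ketbra{\phi_{V_i}}$. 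Convexity of the diamond norm then gives $\|\mcR_\Lambda-\Lambda\|_\diamond\le\sum_i p_i\|\mcR_{V_i}-\mcV_i\|_\diamond\le\epsilon$. The program space is the same for every $i$, so the cost is inherited unchanged. Your post-composition with $\Tr_\mcW$ and the monotonicity argument are exactly right; what you are missing is that the small environment is obtained channel-by-channel after passing to extremal pieces, not globally for $\Lambda$.
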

\begin{proof}
    By Carath\'{e}odory's theorem, a quantum channel $\Lambda$ with input dimension $d$ can be written as a convex combination of extremal quantum channels $\{\Lambda_i\}$, whose Kraus rank is upper bounded by $d$ \cite{choi1975completely, gschwendtner2021programmability}, as
    $\Lambda = \sum_i p_i \Lambda_i$,
    where $\{p_i\}$ is a probability distribution, i.e., $p_i\geq 0$ and $\sum_i p_i = 1$ hold.
    Let $V_i: \CC^d \to \CC^D\otimes \mcH_\mathrm{aux}$ for $\mcH_\mathrm{aux}=\CC^d$ be the Stinespring dilation \cite{stinespring1955positive} of the quantum channel $\Lambda_i$, i.e.,
    $\Lambda_i(\cdot) = \Tr_\mathrm{aux}[V_i\cdot V_i^\dagger]$.
    As shown in Cor.~\ref{cor:optimal_SAR}, isometry channel $V_i \in \isometry{d_\mathrm{in}}{d_\mathrm{out}}$ for $d_\mathrm{in} = d, d_\mathrm{out} = Dd$ can be stored in the program state $\ket{\phi_{V_i}}$ with the approximation error $\epsilon$ and the program cost 
    \begin{align}
        c_P &\leq {d_\mathrm{in}d_\mathrm{out}-1 \over 2} \log \Theta(\epsilon^{-1}) \\
        &={Dd^2-1 \over 2} \log \Theta(\epsilon^{-1}),
    \end{align}
    which is achieved by the PBT-based strategy.
    Suppose $\mcR_{V_i}$ be the retrieved channel corresponding to the program state $\ket{\phi_{V_i}}$ satisfying $\|\mcR_{V_i}-\mcV_i\|_\diamond \leq \epsilon.$
    The original quantum channel $\Lambda$ can be stored in the program state defined by
    $\phi_{\Lambda}\coloneqq \sum_i p_i \ketbra{\phi_{V_i}}$.
    From the program state $\phi_{\Lambda}$, we can retrieve the quantum channel $\mcR_{\Lambda}\coloneqq \sum_i p_i \mcR_{V_i}$ satisfying
    \begin{align}
        \|\mcR_\Lambda-\Lambda\|_\diamond
        \leq \sum_i p_i \|\mcR_{V_i}-\mcV_i\|_\diamond
        \leq \epsilon,
    \end{align}
    where we use the concavity of the diamond norm.
    Thus, the program cost is given by $c_P \leq {Dd^2-1 \over 2} \log \Theta(\epsilon^{-1})$.
\end{proof}

This program cost is improved over that shown in Ref.~\cite{gschwendtner2021programmability}, which uses classical bits to store the Choi state of $V_i$ to achieve the program cost
\begin{align}
    c_P\leq 2Dd^2 \log \Theta(\epsilon^{-1}).
\end{align}
Our protocol achieves a $75\%$ program cost reduction by using dPBT to store the isometry channels $V_i$.
Note that we can also store the quantum channel $\Lambda$ with the program state $(\1_{\mcL(\CC^d)}^{\otimes n} \otimes \Lambda^{\otimes n})(\phi_\mathrm{PBT})$, but the program cost of this protocol is given by $2n = {\Theta(d^2/\sqrt{\epsilon})}$, which is exponentially worse than our protocol in terms of $\epsilon$ scaling.

{\it Conclusion}.---
This work introduces the task of isometry estimation and obtains the asymptotically optimal estimation fidelity to aim for the classical (estimation-based) strategy for dSAR of isometry channels.
We compare it with the quantum (PBT-based) strategy, and show the quadratic quantum advantage in terms of the query complexity.
The quantum strategy also offers the optimal query complexity for dSAR of CPTP maps.
We show that the obtained dSAR protocol of isometry channels can be used for universal programming of CPTP maps, which has a smaller program cost than that shown in Ref.~\cite{gschwendtner2021programmability}.

In this work, we investigate isometry estimation within the storage-and-retrieval (SAR) framework.
Beyond SAR, the developed isometry estimation protocol also finds applications in the transformation of isometry channels~\cite{yoshida2023universal, yoshida2025universal}.
In particular, the optimal estimation fidelity provides an approximation error of the measure-and-prepare strategy for such a transformation, which serves as a starting point for optimizing such transformation protocols.

The program cost~\eqref{eq:program_cost_isometry_channel} for isometry channels can be considered as a counter-example of a conjecture in Ref.~\cite{yang2020optimal}, which states that the optimal program cost of unitary operators with $\nu$ real parameters is given by
\begin{align}
    \label{eq:conjectured_program_cost_unitary}
    c_P^{(\mathrm{conj})} = {\nu \over 2} \log\Theta(\epsilon^{-1}).
\end{align}
We consider the extendibility of this conjecture to an isometry channel.
Though this conjecture does not hold for the case of state ($d=1$), it is not a trivial problem for the case of $D>d>1$.
If we believe that this conjecture holds for the case of isometry channels, since any isometry operator $V\in\isometry{d}{D}$ can be specified by $2Dd-d^2-1$ parameters\footnote{The number of real parameters to specify an isometry operator $V\in\isometry{d}{D}$ can be derived with two independent ways, which outputs the same number:
\begin{enumerate}
    \item An arbitrary $d\times D$ complex matrix can be represented by $2Dd$ real parameters. Isometry operator $V\in\isometry{d}{D}$ is defined by a $d\times D$ complex matrix satisfying $V^\dagger V = \1_d$, which is given by $d^2$ independent conditions on real parameters.
    Subtracting the number of constraints $d^2$ and the degree of freedom of the global phase $1$ from $2Dd$, we obtain $2Dd-d^2-1$.
    \item An isometry operator $V\in\isometry{d}{D}$ can be represented by $d$ orthonormal $D$-dimensional vectors $\{\ket{v_1}, \ldots, \ket{v_d}\} \subset \CC^D$.
    We associate real parameters to represent $\ket{v_i}$ recursively as follows.
    The vector $\ket{v_1}$ is a unit norm $D$-dimensional complex vector, which can be represented by $2D-2$ real parameters by ignoring the global phase.
    The vector $\ket{v_{i+1}}$ is a unit norm $D$-dimensional complex vector that is orthogonal with $\ket{v_1}, \ldots, \ket{v_i}$, which can be represented by $2(D-i)-1$ real parameters.
    In total, an isometry operator $V\in\isometry{d}{D}$ can be represented by $2D-2+\sum_{i=1}^{d-1} [2(D-i)-1] = 2Dd-d^2-1$ real parameters.
\end{enumerate}}, this conjecture leads to the conclusion that the program cost for an isometry operator is given by
\begin{align}
    \label{eq:conjectured_program_cost_isometry}
    c_P^{(\mathrm{conj})} = {2Dd-d^2-1 \over 2} \log\Theta(\epsilon^{-1}),
\end{align}
which is strictly larger than Eq.~\eqref{eq:program_cost_isometry_channel} obtained by the PBT-based strategy (see the SM~\cite{supple}) as long as $D>d$ holds.
Therefore, our work falsifies the conjecture for the case of $D>d$.
We conjecture that Eq.~\eqref{eq:conjectured_program_cost_unitary} holds if we restrict the dSAR protocol to the estimation-based strategy, which is a natural class of protocols that includes the optimal protocol for unitary channels (see the SM~\cite{supple}).
We leave it for future work to prove or falsify this conjecture.
We also leave it open to obtain the optimal programming cost for the isometry channels and the CPTP maps.

Another future work is to investigate the SAR of multiple copies of the input channel, which retrieves $\Lambda^{\otimes m}$ from $n$ queries to a quantum channel $\Lambda$ for $m\geq 2$.
In this work, we consider the SAR of a single copy of the input channel.
The classical strategy is straightforwardly extended to multiple copies since we can copy the estimator.
The quantum strategy can also be extended to multiple copies by using the multi port-based teleportation, which teleports $m$ qudit states simultaneously via $n$ ports~\cite{sergii2013generalized,mozrzymas2021optimal,kopszak2021multiport,studzinski2022efficient}.
Intuitively, the classical strategy is expected to be more competitive against the quantum strategy for multiple copies since we can use multiple copies of the estimator, but its performance is limited by no-cloning theorem of unitary channels~\cite{chiribella2008optimal,bisio2014optimal}.
We leave it a future work to compare the classical and quantum strategies for the SAR of multiple copies of the input channel.

{\it Acknowledgments.}---
We acknowledge fruitful discussions with Dmitry Grinko.
This work was supported by the MEXT
Quantum Leap Flagship Program (MEXT QLEAP) JPMXS0118069605,
JPMXS0120351339, Japan Society for the Promotion of Science (JSPS) KAKENHI Grants No. 23KJ0734, No.
21H03394 and No. 23K21643, JST CREST Grant Number JPMJCR25I5, FoPM, WINGS Program, the University of Tokyo, DAIKIN Fellowship Program, the University of Tokyo, and IBM Quantum.

\onecolumngrid

\begin{center}
\textbf{End Matter}
\end{center}

\twocolumngrid

This End Matter shows that the parameter estimation of a family of isometry operators obeys the SQL based on the ``Hamiltonian-not-in-Kraus-span'' (HNKS) condition shown in Ref.~\cite{zhou2021asymptotic} and the van Trees inequality~\cite{van2001detection}.
We then show the SQL for isometry estimation using the SQL for the parameter estimation.
We provide another proof based on representation-theoretic arguments shown in Ref.~\cite{chiribella2005optimal} in the SM~\cite{supple}, which provides the exact coefficient of the leading term in Eq.~\eqref{eq:isometry_estimation_fidelity}.

A single-parameter estimation of a quantum channel is the task to estimate a parameter $\theta$ of a given quantum channel $\Lambda_\theta$ from the set $\{\Lambda_\theta \mid \theta\in \Theta \subset \RR\}$.
Suppose $q(\theta)$ is a prior probability distribution of $\theta$, and $p(\hat{\theta}|\theta)$ is the probability distribution of the estimate $\hat{\theta}$.
We define the estimation error of $\theta$ by
\begin{align}
    \delta \theta\coloneqq \sqrt{\int_\Theta \dd \theta q(\theta) \int_\Theta \dd \hat{\theta} p(\hat{\theta}|\theta) (\hat{\theta}-\theta)^2}.
\end{align}
Then, the van Trees inequality~\cite{van2001detection} shows
\begin{align}
    \delta\theta^2
    &\geq {1\over I_{q}+\int_\Theta \dd \theta q(\theta) I_p(\theta)},
\end{align}
where $I_p(\theta)$ is the Fisher information defined by
\begin{align}
    I_p(\theta) \coloneqq \int_\Theta \dd \hat{\theta} {\dot{p}(\hat{\theta}|\theta)^2\over p(\hat{\theta}|\theta)},
\end{align}
$I_q$ is defined by
\begin{align}
    I_q \coloneqq \int_\Theta \dd \theta {\dot{q}(\theta)^2 \over q(\theta)},
\end{align}
and $\dot{x}$ represents the differential of $x$ with respect to $\theta$.
We define the quantum Fisher information $I_n(\Lambda_\theta)$ of $\Lambda_\theta$ by the maximum value of $I_p(\theta)$ along all the estimator $\hat{\theta}$ implementable with $n$ queries of $\Lambda_\theta$.
Then, the van Trees inequality shows
\begin{align}
    \delta\theta^2 \geq {1\over I_{q}+\int_\Theta \dd \theta q(\theta) I_n(\Lambda_\theta)}.
\end{align}

Reference~\cite{zhou2021asymptotic} shows that the HNKS condition determines whether the QFI obeys the SQL $I_n(\Lambda_\theta) = \Theta(n)$ or the HL $I_n(\Lambda_\theta) = \Theta(n^2)$.
For a parametrized isometry channel $\Lambda_\theta(\cdot) \coloneqq V_\theta \cdot V_\theta^\dagger$, the HNKS condition is described as follows:
We define the Hamiltonian $H$ by
\begin{align}
    H_\theta \coloneqq i V_\theta^\dagger \dot{V}_\theta.
\end{align}
Then, the HNKS condition is given by
\begin{align}
    H_\theta \propto \1_d &\Leftrightarrow I_n(V_\theta) = \Theta(n),\\
    H_\theta \not\propto \1_d &\Leftrightarrow I_n(V_\theta) = \Theta(n^2).
\end{align}

Suppose $D=d+1$, and we define a family of the isometry operators $\{V_\theta\}_{\theta\in [0,\pi]} \subset \isometry{d}{d+1}$ by
\begin{align}
\label{eq:def_V_theta}
    V_\theta \coloneqq
    \left(
    \begin{matrix}
        1 & 0 & \cdots & 0 & 0\\
        0 & 1 & \cdots & 0 & 0\\
        \vdots & \vdots & \ddots & \vdots & \vdots\\
        0 & 0 & \cdots & 1 & 0\\
        0 & 0 & \cdots & 0 & \cos\theta\\
        0 & 0 & \cdots & 0 & \sin\theta
    \end{matrix}
    \right).
\end{align}
This isometry operator can be embedded into a $(d+1)$-dimensional unitary operator $U_\theta$ defined by
\begin{align}
\label{eq:def_U_theta}
    U_\theta \coloneqq
    \left(
    \begin{matrix}
        1 & 0 & \cdots & 0 & 0 & 0\\
        0 & 1 & \cdots & 0 & 0 & 0\\
        \vdots & \vdots & \ddots & \vdots & \vdots & \vdots\\
        0 & 0 & \cdots & 1 & 0 & 0\\
        0 & 0 & \cdots & 0 & \cos\theta & -\sin\theta\\
        0 & 0 & \cdots & 0 & \sin\theta & \cos\theta
    \end{matrix}
    \right).
\end{align}
The isometry operator $V_\theta$ can be considered as a unitary operator $U_\theta$ where we can only input a quantum state from a $d$-dimensional subspace of $\CC^{d+1}$.
For the isometry operator $V_\theta$ defined in Eq.~\eqref{eq:def_V_theta}, we obtain $H_\theta = 0$, i.e., $I_n(V_\theta) = \Theta(n)$ holds.
When $\theta$ is drawn from the uniform distribution of $[0, \pi]$, i.e., $q(\theta) = 1/\pi$, $I_q$ is given by $I_q = 0$.
Then, the van Trees inequality shows that
\begin{align}
    \delta\theta^2\geq \Omega(n^{-1}),
\end{align}
which shows the SQL of the Bayesian parameter estimation of isometry channels.
On the other hand, for the unitary operator $U_\theta$ defined in Eq.~\eqref{eq:def_U_theta}, we obtain $H_\theta = i\ketbra{d}{d-1}-i\ketbra{d-1}{d}$, i.e., $I_n(U_\theta) = \Theta(n^2)$ holds.

We investigate the relationship between the estimation error of $\theta$ and the channel fidelity to show the SQL of isometry estimation.
As shown in the SM~\cite{supple}, the optimal estimation fidelity of isometry estimation is achieved by a parallel covariant protocol, which satisfies
\begin{align}
    p(\hat{V}|V) = p(W \hat{V} U | WVU) \quad \forall U\in \SU(d), W\in \SU(D),
\end{align}
where $p(\hat{V}|V)$ is the probability distribution of the estimator $\hat{V}$ when the input isometry channel is given by $V$.
In this case, estimation fidelity for a given $V\in \isometry{d}{D}$ defined by
\begin{align}
    F(V) \coloneqq \int_{\isometry{d}{D}} \dd \hat{V} p(\hat{V}|V) F_\mathrm{ch}(V, \hat{V})
\end{align}
is shown to be equal to the average-case estimation fidelity.
Then, by applying the optimal isometry estimation protocol for an isometry operator $V_\theta$ defined in Eq.~\eqref{eq:def_V_theta}, we obtain
\begin{align}
    F(V_\theta) = F_\mathrm{est}(n,d,D).
\end{align}
We construct the estimator $\hat{\theta}$ from the estimator $\hat{V}$ by
\begin{align}
\label{eq:def_hat_theta}
    \hat{\theta}\coloneqq \arg\sup_{\hat{\theta}\in [0,\pi)} F_\mathrm{ch}(\hat{V}, V_{\hat{\theta}}).
\end{align}
The channel fidelity of isometry operators $V_1, V_2 \in \isometry{d}{D}$ is given by
\begin{align}
    \sqrt{1- F_\mathrm{ch}(V_1, V_2)} = {1\over 2d}\|\dketbra{V_1}-\dketbra{V_2}\|_1
\end{align}
using the $1$-norm $\|\cdot\|_1$.
Using the triangle equality for the $1$-norm, we obtain
\begin{align}
    &\sqrt{1- F_\mathrm{ch}(V_{\hat{\theta}}, V_{\theta})}\nonumber\\
    &\leq \sqrt{1-F_\mathrm{ch}(\hat{V}, V_{\theta})}+\sqrt{1-F_\mathrm{ch}(\hat{V}, V_{\hat{\theta}})}\\
    &\leq 2\sqrt{1-F_\mathrm{ch}(\hat{V}, V_{\theta})},
\end{align}
where we use the definition~\eqref{eq:def_hat_theta} of $\hat{\theta}$.
Thus, we obtain
\begin{align}
    1-F_\mathrm{ch}(\hat{V}, V_{\theta})
    &\geq {1\over 4}\left[1- F_\mathrm{ch}(V_{\hat{\theta}}, V_{\theta})\right]\\
    \label{eq:calculate_F_ch}
    &= {1\over 4}\left[1- \left(1-{2\over d}\sin^2 {\hat{\theta}-\theta \over 2}\right)^2\right]\\
    &= {1\over d} \sin^2 {\hat{\theta}-\theta \over 2} - {1\over d^2} \sin^4 {\hat{\theta}-\theta \over 2}\\
    \label{eq:sin^4_less_than_sin^2}
    &\geq {d-1\over d^2} \sin^2 {\hat{\theta}-\theta \over 2}\\
    \label{eq:sin_less_than_linear}
    &\geq {d-1\over 4\pi^2 d^2} (\hat{\theta}-\theta)^2, 
\end{align}
where we use the definition~\eqref{eq:def_V_theta} of $V_\theta$ in Eq.~\eqref{eq:calculate_F_ch}, and the inequalities shown below in Eqs.~\eqref{eq:sin^4_less_than_sin^2} and \eqref{eq:sin_less_than_linear}:
\begin{align}
    \sin^2 x \geq \sin^4 x, \quad \sin x\geq {x\over \pi} \quad \forall x\in \left[0, {\pi\over 2}\right].
\end{align}
Therefore, we obtain
\begin{align}
    &1-F_\mathrm{est}(n,d,D)\nonumber\\
    &= \int_\Theta \dd \theta q(\theta) \int_{\isometry{d}{D}} \dd \hat{V} p(\hat{V}|V_\theta) [1-F_\mathrm{ch}(\hat{V}, V_{\theta})]\\
    &\geq {d-1\over 4\pi^2 d^2} \int_\Theta \dd \theta q(\theta) \int_{\isometry{d}{D}} \dd \hat{V} p(\hat{V}|V_\theta) (\hat{\theta}-\theta)^2\\
    &= {d-1\over 4\pi^2 d^2}\delta\theta^2\\
    &\geq \Theta(n^{-1}),
\end{align}
i.e., $F_\mathrm{est}(n,d,D)\leq 1-\Omega(n^{-1})$ holds.

\clearpage

\setcounter{definition}{0}
\setcounter{table}{0}
\renewcommand{\thedefinition}{S\arabic{definition}}
\renewcommand{\thetable}{S\arabic{table}}

\onecolumngrid
\appendix

\tableofcontents

\section{Summary of notations}
We summarize the mathematical notations used in Appendix (see Tab.~\ref{tab:notations}).

\begin{table}[h]
    \centering
    \caption{Summary of mathematical notations}
    \begin{tabular}{|c|c|}\hline
        $\mcL(\mcX)$ & The set of linear operators on a Hilbert space $\mcX$.\\
        $\1_\mcX$ & The identity operator on $\mcX$.\\
        $\1_d$ & Abbreviation of $\1_{\CC^d}$.\\
        $\dim \mcX$ & The dimension of $\mcX$.\\
        $\abs{\mathbb{X}}$ & The cardinality of a set $\mathbb{X}$.\\
        $\SU(d)$ & The special unitary group.\\
        $\isometry{d}{D}$ & The set of isometry operators $\isometry{d}{D}\coloneqq \{V:\CC^d\to\CC^D \mid V^\dagger V = \1_d\}$.\\
        $\mfS_n$ & The symmetric group.\\
        $\young{d}{n}$ & The set of Young diagrams [see Eq.~\eqref{eq:def_Ydn}].\\
        $\mathrm{STab}(\alpha)$ & The set of standard tableaux with frame $\alpha$ [see Eq.~\eqref{eq:def_STab}].\\
        $\alpha\pm_d\square$ & See Eqs.~\eqref{eq:def_alpha+square} and \eqref{eq:def_alpha-square} for the definitions.\\
        $\mcU_\alpha$ & The irreducible representation space of $\SU(d)$ corresponding to a Young tableau $\alpha$ [see Eq.~\eqref{eq:decomp_space}].\\
        $\mcS_\alpha$ & The irreducible representation space of $\mfS_n$ corresponding to a Young tableau $\alpha$ [see Eq.~\eqref{eq:decomp_space}].\\
        $d_\alpha^{(d)}$ & The dimension of $\mcU_\alpha$ [see Eq.~\eqref{eq:d_alpha^d}].\\
        $m_\alpha$ & The dimension of $\mcS_\alpha$ [see Eq.~\eqref{eq:m_mu_STab}].\\
        $J_{\Lambda}$ & Choi matrix of a quantum channel $\Lambda$ [see Eq.~\eqref{eq:def_Choi_matrix}].\\
        $A \ast B$ & The link product of $A$ and $B$ [see Eq.~\eqref{eq:def_link_product}].\\\hline
    \end{tabular}
    \label{tab:notations}
\end{table}

\section{Review on the Young diagrams and the Schur-Weyl duality}

This section reviews notations and properties of Young diagrams and Schur-Weyl duality which are necessary for the proof.
We suggest the standard textbooks, e.g. Refs.~\cite{fulton1997young, georgi2000lie, ceccherini2010representation}, for more detailed reviews.
We also show Lemma~\ref{lem:isometry_irrep_product} used for the proof of asymptotically optimal isometry estimation in Appendix~\ref{appendix_sec:proof_asymptotic_fidelity_isometry_estimation}.

\subsection{Definition of the Young diagrams}
We define the set $\young{d}{n}$ by
\begin{align}
\label{eq:def_Ydn}
    \young{d}{n}\coloneqq \left\{\alpha = (\alpha_1, \ldots, \alpha_d) \in \mathbb{Z}^d \; \middle| \; \alpha_1\geq \cdots \geq \alpha_d \geq 0, \sum_{i=1}^{d}\alpha_i = n\right\},
\end{align}
where $\mathbb{Z}$ is the set of integers.
An element $\alpha\in \young{d}{n}$ is called a Young diagram.
For a Young diagram, we define the sets
\begin{align}
\label{eq:def_alpha+square}
    \alpha+_d \square &\coloneqq \{\alpha+e_i \mid i\in \{1,\ldots,d\}\} \cap \young{d}{n+1},\\
\label{eq:def_alpha-square}
    \alpha-_d \square &\coloneqq \{\alpha-e_i \mid i\in\{1,\ldots,d\}\}\cap \young{d}{n-1},
\end{align}
where $e_i$ is defined by $e_i \coloneqq (\delta_{ij})_{j=1}^{d}$ and $\delta_{ij}$ is Kronecker's delta defined by $\delta_{ii}=1$ and $\delta_{ij}=0$ for $i\neq j$.
We define a standard tableau by a sequence of Young diagrams $(\alpha_1, \ldots, \alpha_n)$ satisfying
\begin{align}
    \alpha_1 = \square, \quad \alpha_{i+1}\in \alpha_i +_d \square \quad \forall i\in \{1, \ldots, n-1\}.
\end{align}
We call $\alpha_n$ by the frame of a standard tableau $(\alpha_1, \ldots, \alpha_n)$, and define the set of standard tableaux with frame $\alpha$ by
\begin{align}
\label{eq:def_STab}
    \mathrm{STab}(\alpha) \coloneqq \{(\alpha_1, \ldots, \alpha_n) \mid \alpha_1=\square, \quad \alpha_{i+1}\in \alpha_i +_d \square \quad \forall i\in \{1, \ldots, n-1\}, \quad \alpha_n = \alpha \}.
\end{align}

\subsection{Schur-Weyl duality}

We consider the following representations on $(\CC^d)^{\otimes {n} }$ of the special unitary group $\SU(d)$ and the symmetric group $\mfS_n$:
\begin{align}
\label{eq:representation_unitary}
    &\SU(d)\ni U \mapsto U^{\otimes n} \in \mcL(\CC^d)^{\otimes { n} },\\
    &\mfS_{n} \ni \sigma \mapsto P_\sigma \in \mcL(\CC^d)^{\otimes {n} },
\end{align}
where $P_\sigma$ is a permutation operator defined as
\begin{align}
\label{eq:def_permutation_operator}
    P_\sigma\ket{i_1\cdots i_{n}} \coloneqq \ket{i_{\sigma^{-1}(1)}\cdots i_{\sigma^{-1}(n)}}
\end{align}
for the computational basis $\{\ket{i}\}_{i=1}^{d}$ of $\CC^d$.
Then, these representations are decomposed simultaneously as follows\footnote{To be more strict, Eq.~\eqref{eq:decomp_space} should be regarded as an isomorphism between the representation spaces of $\SU(d)\times \mfS_n$.
Using the isomorphism $V_\mathrm{Sch}: (\CC^d)^{\otimes n} \to \bigoplus_{\alpha\in\young{d}{{n}}} \mcU_{\alpha} \otimes \mcS_{\alpha}$, Eqs.~\eqref{eq:def_U_mu} and \eqref{eq:def_sigma_mu} should be written as
\begin{align}
    V_\mathrm{Sch} U^{\otimes n} V_\mathrm{Sch}^\dagger &= \bigoplus_{\alpha \in \young{d}{{ n}}} U_{\alpha} \otimes \1_{\mcS_{\alpha}},\\
    V_\mathrm{Sch} P_\sigma V_\mathrm{Sch}^\dagger &= \bigoplus_{\alpha \in\young{d}{{ n}}} \1_{\mcU_{\alpha}} \otimes \sigma_{\alpha}.
\end{align}
For simplicity, in the rest of this paper, we use the symbol `$=$' for the isomorphism between the spaces and omit $V_\mathrm{Sch}$.}:
\begin{align}
    (\CC^d)^{\otimes {n}} &= \bigoplus_{\alpha\in\young{d}{{n}}} \mcU_{\alpha} \otimes \mcS_{\alpha},\label{eq:decomp_space}\\
    U^{\otimes {n}} &= \bigoplus_{\alpha \in \young{d}{{ n}}} U_{\alpha} \otimes \1_{\mcS_{\alpha}},\label{eq:def_U_mu}\\
    P_\sigma &= \bigoplus_{\alpha \in\young{d}{{ n}}} \1_{\mcU_{\alpha}} \otimes \sigma_{\alpha},\label{eq:def_sigma_mu}
\end{align}
where $\alpha$ runs in the set $\young{d}{n}$ defined in Eq.~\eqref{eq:def_Ydn}, $\SU(d)\ni U\mapsto U_{\alpha}\in \mcL(\mcU_{\alpha})$ is an irreducible representation of $\SU(d)$, and $\mfS_{n}\ni \sigma \mapsto \sigma_{\alpha}\in \mcL(\mcS_{\alpha})$ is an irreducible representation of $\mfS_n$.
This relation shows that any operator commuting with $U^{\otimes n}$ for all $U\in\SU(d)$ can be written as a linear combination of $\{V_\sigma\}_{\sigma\in\mfS_n}$, which is called the Schur-Weyl duality.
The dimension of $\mcU_\alpha^{(d)}$ is given by~\cite{itzykson1966unitary}
\begin{align}
\label{eq:d_alpha^d}
    d_\alpha^{(d)}\coloneqq \dim \mcU_\alpha = {\prod_{1\leq i<j\leq d}(\alpha_i-\alpha_j-i+j) \over \prod_{k=1}^{d-1} k!},
\end{align}
and the dimension of $\mcS_\alpha$ is denoted by $m_\alpha$, which will be given later in Eq.~\eqref{eq:m_mu_STab}\footnote{ The dimension of $\mcS_\alpha$ is denoted by $m_\alpha$ since $\mcS_\alpha$ can be regarded as a multiplicity space for irreducible representation $\mcU_\alpha$ [see also Eq.~\eqref{eq:S_mu_multi}].}.
The tensor product representation $U_\alpha\otimes U$ satisfies
\begin{align}
    U_\alpha\otimes U = \bigoplus_{\mu\in\alpha+_d \square} U_\mu\otimes \ketbra{\alpha}_{\mathrm{multi}},
\end{align}
where $\{\ket{\alpha}\}_{\alpha\in\young{d}{n}}$ is an orthonormal basis in a multiplicity space, which shows the decomposition of the space $\mcU_\alpha \otimes \CC^d$ as
\begin{align}
\label{eq:decomp_tensor_product_U_alpha}
    \mcU_\alpha \otimes \CC^d = \bigoplus_{\mu\in\alpha+_d\square} \mcU_\mu\otimes \mathrm{span}\{\ket{\alpha}_{\mathrm{multi}}\}.
\end{align}
Since
\begin{align}
    \bigoplus_{\mu\in\young{d}{n+1}} U_\mu \otimes \1_{\mcS_\mu}
    &= U^{\otimes n}\otimes U\\
    &= \bigoplus_{\alpha\in\young{d}{n}} U_\alpha\otimes U\otimes \1_{\mcS_\alpha}\\
    &= \bigoplus_{\mu\in\young{d}{n+1}} U_\mu \otimes \bigoplus_{\alpha\in\mu-_d \square}\ketbra{\alpha}_{\mathrm{multi}}\otimes \1_{\mcS_\alpha}.
\end{align}
holds, we obtain
\begin{align}
\label{eq:decomposition_S_mu}
    \mcS_\mu = \bigoplus_{\alpha\in\mu-_d\square}\mcS_\alpha\otimes \mathrm{span}\{\ket{\alpha}_{\mathrm{multi}}\}.
\end{align}
Applying this equation recursively for $n$, we obtain
\begin{align}
\label{eq:S_mu_multi}
    \mcS_\mu = \bigoplus_{(\mu_1,\ldots,\mu_{n+1})\in \mathrm{STab}(\mu)} \mathrm{span}(\ket{\mu_1}\otimes \cdots \otimes \ket{\mu_{n+1}}),
\end{align}
where we omit the subscript `multi' for simplicity.
Therefore, the dimension of $\mcS_\mu$ is given by
\begin{align}
\label{eq:m_mu_STab}
    m_\mu \coloneqq \dim \mcS_\mu = \abs{\mathrm{STab}(\mu)}.
\end{align}
We define the Young-Yamanouchi basis $\{\ket{s_\mu}\}_{s_\mu\in\mathrm{STab}(\mu)}$ of $\mcS_\mu$ by
\begin{align}
\label{eq:yy_basis}
    \ket{s_\mu} \coloneqq \ket{\mu_1}\otimes \cdots \otimes \ket{\mu_{n+1}}
\end{align}
for $s_\mu = (\mu_1,\ldots, \mu_{n+1})$.

\subsection{Schur-Weyl duality applied for isometry channels}
\label{appendix_sec:shur_weyl_isometry}
As shown in Refs.~\cite{yoshida2023universal, yoshida2025universal}, the $n$-fold isometry operator $V^{\otimes n}$ for $V\in\isometry{d}{D}$ can be decomposed as
\begin{align}
\label{eq:isometry_sw_duality}
    V^{\otimes n} = \bigoplus_{\alpha\in\young{d}{n}} V_\alpha \otimes \1_{\mcS_\mu},
\end{align}
where $V_\alpha:\mcU_\alpha^{(d)}\to \mcU_\alpha^{(D)}$ is an isometry operator.
The isometry operator $V_\alpha$ has the following property:
\begin{lemma}
\label{lem:isometry_irrep_product}
    For any $\alpha\in\young{d}{n}$ and $V\in\isometry{d}{D}$,
    \begin{align}
        V_\alpha \otimes V = \bigoplus_{\mu\in\alpha+_d\square} V_\mu \otimes \ketbra{\alpha}_{\mathrm{multi}}
    \end{align}
    holds.
\end{lemma}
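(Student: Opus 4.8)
The plan is to reduce the claim to the analogous identity for the unitary group recorded just above Eq.~\eqref{eq:decomp_tensor_product_U_alpha}, namely $U_\alpha\otimes U=\bigoplus_{\mu\in\alpha+_d\square}U_\mu\otimes\ketbra{\alpha}_{\mathrm{multi}}$, by using the fact that every isometry factors through a unitary. Write $V=W\iota$, where $\iota:\CC^d\to\CC^D$ is the canonical inclusion $\ket{i}\mapsto\ket{i}$ (an isometry) and $W\in\mathrm{U}(D)$ is any unitary with $W\iota=V$; such a $W$ exists because $V$ carries the computational basis of $\CC^d$ to $d$ orthonormal vectors of $\CC^D$, which can be completed to an orthonormal basis of $\CC^D$. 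Then $V^{\otimes n}=W^{\otimes n}\iota^{\otimes n}$ and $V^{\otimes(n+1)}=W^{\otimes(n+1)}\iota^{\otimes(n+1)}$, so it suffices to understand $W^{\otimes\bullet}$ and $\iota^{\otimes\bullet}$ separately and then compose.

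First I would fix, once and for all, a Schur--Weyl isomorphism built recursively from the Young--Yamanouchi basis of Eq.~\eqref{eq:yy_basis}, so that it serves simultaneously for $\mathrm{U}(d)$ on $(\CC^d)^{\otimes m}$ and for $\mathrm{U}(D)$ on $(\CC^D)^{\otimes m}$ for every $m$, and so that the inclusion becomes block-diagonal, $\iota^{\otimes m}=\bigoplus_{\alpha\in\young{d}{m}}\iota_\alpha\otimes\1_{\mcS_\alpha}$, where $\iota_\alpha:\mcU_\alpha^{(d)}\hookrightarrow\mcU_\alpha^{(D)}$ is the inclusion of Gelfand--Tsetlin subspaces (diagrams with more than $d$ rows are absent because then $\mcU_\alpha^{(d)}=0$). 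Comparing $V^{\otimes n}=W^{\otimes n}\iota^{\otimes n}$ with Eq.~\eqref{eq:isometry_sw_duality} and with the $\mathrm{U}(D)$ Schur--Weyl decomposition of $W^{\otimes n}$, uniqueness of the isotypic decomposition gives $V_\alpha=W_\alpha\iota_\alpha$ for every $\alpha\in\young{d}{n}$, and likewise $V_\mu=W_\mu\iota_\mu$ for $\mu\in\young{d}{n+1}$. Next I would record the two ``add one box'' identities at the level of irreps: the unitary identity applied in dimension $D$ gives $W_\alpha\otimes W=\bigoplus_{\nu\in\alpha+_D\square}W_\nu\otimes\ketbra{\alpha}_{\mathrm{multi}}$ on $\mcU_\alpha^{(D)}\otimes\CC^D=\bigoplus_{\nu\in\alpha+_D\square}\mcU_\nu^{(D)}\otimes\mathrm{span}\{\ket{\alpha}_{\mathrm{multi}}\}$, while compatibility of the chosen bases under $\iota$ gives $\iota_\alpha\otimes\iota=\bigoplus_{\mu\in\alpha+_d\square}\iota_\mu\otimes\ketbra{\alpha}_{\mathrm{multi}}$ as a map $\mcU_\alpha^{(d)}\otimes\CC^d\to\mcU_\alpha^{(D)}\otimes\CC^D$ (only $\mu\in\alpha+_d\square$ survive, since for $\nu\in\alpha+_D\square$ with more than $d$ rows the source line $\mcU_\nu^{(d)}$ is zero). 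Composing and using $W_\mu\iota_\mu=V_\mu$,
\[
V_\alpha\otimes V=(W_\alpha\otimes W)(\iota_\alpha\otimes\iota)=\bigoplus_{\mu\in\alpha+_d\square}(W_\mu\iota_\mu)\otimes\ketbra{\alpha}_{\mathrm{multi}}=\bigoplus_{\mu\in\alpha+_d\square}V_\mu\otimes\ketbra{\alpha}_{\mathrm{multi}},
\]
which is the claim.

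The main obstacle is the bookkeeping in the first step: one must pin down a Schur--Weyl isomorphism for which the inclusion $\iota^{\otimes m}$, the branching of Eq.~\eqref{eq:decomp_tensor_product_U_alpha}, and the multiplicity labels $\ket{\alpha}_{\mathrm{multi}}$ are all mutually consistent for $\mathrm{U}(d)$ and $\mathrm{U}(D)$ at once --- equivalently, that the recursive Young--Yamanouchi construction is stable under enlarging the local dimension from $d$ to $D$. This is standard (it is precisely the structure that makes the stated unitary identity hold), but it should be invoked explicitly rather than glossed over. An equivalent route that avoids introducing $W$ and $\iota$ is to compare $V^{\otimes(n+1)}=\bigoplus_{\mu\in\young{d}{n+1}}V_\mu\otimes\1_{\mcS_\mu}$ with $V^{\otimes(n+1)}=V^{\otimes n}\otimes V=\bigoplus_{\alpha\in\young{d}{n}}(V_\alpha\otimes V)\otimes\1_{\mcS_\alpha}$: expanding $\1_{\mcS_\mu}$ via Eq.~\eqref{eq:decomposition_S_mu}, reorganizing the double sum over pairs $(\alpha,\mu)$ with $\mu\in\alpha+_d\square$, and cancelling the common factor $\1_{\mcS_\alpha}$ in each $\alpha$-block yields the lemma directly; this version hides the same compatibility statement inside the simultaneous use of Eqs.~\eqref{eq:isometry_sw_duality}, \eqref{eq:decomp_tensor_product_U_alpha} and \eqref{eq:decomposition_S_mu}.
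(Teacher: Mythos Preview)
Your alternative route at the end --- computing $V^{\otimes(n+1)}$ in two ways, expanding $\1_{\mcS_\mu}$ via Eq.~\eqref{eq:decomposition_S_mu}, and matching the $\alpha$-blocks --- is exactly the paper's proof. The paper makes the matching step explicit by first writing $V_\alpha\otimes V=\bigoplus_{\mu\in\alpha+_d\square,\,\nu\in\alpha+_D\square}V^{\alpha}_{\mu\to\nu}\otimes\ketbra{\alpha}_{\mathrm{multi}}$ as a generic block map $\mcU_\alpha^{(d)}\otimes\CC^d\to\mcU_\alpha^{(D)}\otimes\CC^D$, and then comparing the two expressions for $V^{\otimes(n+1)}$ forces $V^{\alpha}_{\mu\to\nu}=\delta_{\mu\nu}V_\mu$; your ``cancel $\1_{\mcS_\alpha}$'' is precisely this block identification.

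Your primary route via the factorisation $V=W\iota$ is a genuinely different argument. It replaces the paper's block-comparison on the multiplicity spaces by a representation-theoretic input: the compatibility of the recursive Young--Yamanouchi (equivalently Gelfand--Tsetlin) construction under the chain $\mathrm{U}(d)\hookrightarrow\mathrm{U}(D)$, which yields $\iota_\alpha\otimes\iota=\bigoplus_{\mu\in\alpha+_d\square}\iota_\mu\otimes\ketbra{\alpha}_{\mathrm{multi}}$ directly and makes transparent why only $\mu\in\alpha+_d\square$ rather than $\alpha+_D\square$ survive. The cost is exactly the bookkeeping you flag: one must fix Schur--Weyl isomorphisms so that the $\mathrm{U}(d)$ and $\mathrm{U}(D)$ branchings of Eq.~\eqref{eq:decomp_tensor_product_U_alpha} share the same multiplicity labels $\ket{\alpha}_{\mathrm{multi}}$ and are intertwined by $\iota$. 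The paper's approach avoids this external input entirely --- it uses only Eqs.~\eqref{eq:isometry_sw_duality} and \eqref{eq:decomposition_S_mu}, both already established in the text --- and is therefore more self-contained, at the price of the somewhat opaque double-sum reorganisation. Both arguments are correct.
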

\begin{proof}
    Since
    \begin{align}
        \mcU_\alpha^{(d)} \otimes \CC^d &= \bigoplus_{\mu\in\alpha+_d\square} \mcU_\mu^{(d)}\otimes \mathrm{span}\{\ket{\alpha}_\mathrm{multi}\},\\
        \mcU_\alpha^{(D)} \otimes \CC^D &= \bigoplus_{\mu\in\alpha+_D\square} \mcU_\mu^{(D)}\otimes \mathrm{span}\{\ket{\alpha}_\mathrm{multi}\}
    \end{align}
    hold, any linear operator $O: \mcU_\alpha^{(d)} \otimes \CC^d \to \mcU_\alpha^{(D)} \otimes \CC^D$ can be decomposed into
    \begin{align}
        O = \bigoplus_{\mu\in \alpha+_d\square, \nu\in\alpha+_D\square} O_{\mu\to \nu} \otimes \ketbra{\alpha}_{\mathrm{multi}}
    \end{align}
    using linear operators $O_{\mu\to \nu}: \mcU_\mu^{(d)} \to \mcU_\nu^{(D)}$.
    Thus, $V_\alpha\otimes V$ can be decomposed into
    \begin{align}
    \label{V_alpha_times_V_decomposition}
        V_\alpha\otimes V = \bigoplus_{\mu\in \alpha+_d\square, \nu\in\alpha+_D\square} V^{\alpha}_{\mu\to \nu} \otimes \ketbra{\alpha}_{\mathrm{multi}},
    \end{align}
    using linear operators $V^{\alpha}_{\mu\to \nu}: \mcU_\mu^{(d)} \to \mcU_\nu^{(D)}$.
    On the other hand, by using Eq.~\eqref{eq:isometry_sw_duality} for $n$ and $n+1$ and the decomposition of the space $\mcS_\mu$ in Eq.~\eqref{eq:decomposition_S_mu}, we obtain
    \begin{align}
        V^{\otimes n+1}
        &= V^{\otimes n} \otimes V\\
        &= \bigoplus_{\alpha\in\young{d}{n}} V_\alpha \otimes V \otimes \1_{\mcS_\mu}\\
        &= \bigoplus_{\alpha\in\young{d}{n}} \bigoplus_{\mu\in \alpha+_d\square, \nu\in\alpha+_D\square} V^{\alpha}_{\mu\to \nu} \otimes \ketbra{\alpha}_{\mathrm{multi}} \otimes \1_{\mcS_\alpha},\\
        V^{\otimes n+1} &= \bigoplus_{\mu\in\young{d}{n+1}} V_\mu \otimes \1_{\mcS_\mu}\\
        &= \bigoplus_{\mu\in\young{d}{n+1}} V_\mu \otimes \bigoplus_{\alpha\in\mu-_d\square} \ketbra{\alpha}_{\mathrm{multi}}\otimes \1_{\mcS_\alpha}\\
        &= \bigoplus_{\substack{(\alpha, \mu)\in \young{d}{n}\times \young{d}{n+1}\\ \mathrm{s.t.}\; \mu\in\alpha+_d \square}} V_\mu \otimes \ketbra{\alpha}_{\mathrm{multi}}\otimes \1_{\mcS_\alpha}\\
        &= \bigoplus_{\alpha\in\young{d}{n}} \bigoplus_{\mu\in\alpha+_d\square} V_\mu \otimes  \ketbra{\alpha}_{\mathrm{multi}}\otimes \1_{\mcS_\alpha}.
    \end{align}
    Thus, we obtain
    \begin{align}
        \bigoplus_{\alpha\in\young{d}{n}} \bigoplus_{\mu\in \alpha+_d\square, \nu\in\alpha+_D\square} V^{\alpha}_{\mu\to \nu} \otimes \ketbra{\alpha}_{\mathrm{multi}} \otimes \1_{\mcS_\alpha}
        = \bigoplus_{\alpha\in\young{d}{n}} \bigoplus_{\mu\in\alpha+_d\square} V_\mu \otimes  \ketbra{\alpha}_{\mathrm{multi}}\otimes \1_{\mcS_\alpha},
    \end{align}
    i.e.,
    \begin{align}
        V^{\alpha}_{\mu\to \nu} = \delta_{\mu\nu} V_\mu
    \end{align}
    holds.
    Substituting this expression into \eqref{V_alpha_times_V_decomposition}, we obtain
    \begin{align}
        V_\alpha \otimes V = \bigoplus_{\mu\in\alpha+_d\square} V_\mu \otimes \ketbra{\alpha}_{\mathrm{multi}}.
    \end{align}
\end{proof}

\section{Review on quantum testers}

This section reviews notations and properties of quantum testers, based on the Choi representation.
We suggest Ref.~\cite{taranto2025higher} for more detailed reviews.

\subsection{Choi representation}

We consider a quantum channel $\Lambda: \mcL(\mcI) \to \mcL(\mcO)$, where $\mcI$ and $\mcO$ are the Hilbert spaces corresponding to the input and output systems.
The Choi matrix $J_{\Lambda} \in \mcL(\mcI\otimes \mcO)$ is defined by
\begin{align}
\label{eq:def_Choi_matrix}
    J_{\Lambda}\coloneqq \sum_{i,j} \ketbra{i}{j}_{\mcI} \otimes \Lambda(\ketbra{i}{j})_{\mcO},
\end{align}
where $\{\ket{i}\}_i$ is the computational basis of $\mcI$, and the subscripts $\mcI$ and $\mcO$ represent the Hilbert spaces where each term is defined.
The Choi matrix of the unitary channel $\mcU(\cdot) = U(\cdot) U^\dagger$ is given by $J_{\mcU} = \dketbra{U}$, where $\dket{U}$ is the dual ket defined by
\begin{align}
\label{eq:def_dual_vector}
    \dket{U}\coloneqq \sum_{i} \ket{i} \otimes U\ket{i}.
\end{align}
The complete positivity and the trace-preserving property of $\Lambda$ are represented in the Choi matrix by
\begin{align}
    J_\Lambda \geq 0,\quad \Tr_{\mcO} J_{\Lambda} = \1_{\mcI}.
\end{align}
In the Choi representation, the composition of a quantum channel $\Lambda$ with a quantum state $\rho$ and that of quantum channels $\Lambda_1, \Lambda_2$ are represented in a unified way using a link product $\ast$ as
\begin{align}
    \Lambda(\rho) &= J_{\Lambda} \ast \rho,\\
    J_{\Lambda_2\circ \Lambda_1} &= J_{\Lambda_1} \ast J_{\Lambda_2},
\end{align}
where the link product $\ast$ for $A\in\mcL(\mcX\otimes \mcY)$ and $B\in \mcL(\mcY\otimes \mcZ)$ is defined as~\cite{chiribella2008quantum}
\begin{align}
\label{eq:def_link_product}
    A \ast B \coloneqq \Tr_\mcY [(A^{\sfT_\mcY}\otimes \1_{\mcZ})(\1_{\mcX} \otimes B)],
\end{align}
and $A^{\sfT_\mcY}$ is the partial transpose of $A$ over the subsystem $\mcY$.

\subsection{Quantum testers}

A quantum tester is a multi-linear transformation from multiple quantum channels to a probability distribution.
The set of quantum testers contains the set of protocols allowed in the quantum circuit framework~\cite{chiribella2008quantum, wechs2021quantum} as a subset.
It also contains protocols beyond the quantum circuit framework, called the indefinite causal order protocols~\cite{hardy2007towards, oreshkov2012quantum, chiribella2013quantum}.

We consider quantum channels $\Lambda_i: \mcL(\mcI_i) \to \mcL(\mcO_i)$ for $i\in\{1, \cdots, n\}$, and define a multi-linear transformation $\mcT_a$ from the quantum channels $\Lambda_1,\ldots, \Lambda_n$ to a probability distribution:
\begin{align}
    p_a = \mcT_a[\Lambda_1, \ldots, \Lambda_n],
\end{align}
where $\{p_a\}_a$ is a probability distribution.
The action of $\mcT_a$ can be represented by a matrix $T_a$ as
\begin{align}
    p_a = T_a\ast (J_{\Lambda_1}\otimes \cdots \otimes J_{\Lambda_n}),
\end{align}
and $\{T_a\}_a$ is called a quantum tester.
The quantum tester $\{T_a\}_a$ should satisfy the following two properties:
\begin{itemize}
    \item Completely CP preserving: For any auxiliary Hilbert spaces $\mcA_i, \mcA'_i$ and any completely positive (CP) maps $\Lambda_i: \mcL(\mcI_i\otimes \mcA_i) \to \mcL(\mcO_i\otimes \mcA'_i)$ for $i\in\{1,\ldots,n\}$,
    \begin{align}
        (\1\otimes T_a)\ast (J_{\Lambda_1}\otimes \cdots \otimes J_{\Lambda_n}) \geq 0
    \end{align}
    holds.
    \item TP preserving: For any trace preserving (TP) maps $\Lambda_i: \mcL(\mcI_i) \to \mcL(\mcO_i)$ for $i\in\{1,\ldots,n\}$,
    \begin{align}
        \sum_a T_a\ast (J_{\Lambda_1}\otimes \cdots \otimes J_{\Lambda_n}) = 1
    \end{align}
    holds.
\end{itemize}
The completely CP preserving property is equivalent to
\begin{align}
\label{eq:tester_positivity}
    T_a\geq 0.
\end{align}
The TP preserving property is characterized as affine conditions on $\sum_a T_a$ (see, e.g., Ref.~\cite{oreshkov2012quantum} for the complete characterization).

\section{The optimal retrieval error of the estimation-based and PBT-based strategies for dSAR of isometry channels}

This section shows the following Lemmas on the optimal retrieval error of the estimation-based and PBT-based strategies for dSAR of isometry channels.

\begin{lemma}
\label{lem:optimal_estimation-based}
    The optimal retrieval error of the estimation-based strategy for dSAR of $\mathbb{S}_\mathrm{Isometry}^{(d,D)}$ is given by
    \begin{align}
    \label{eq:estimation-based_retrieval_error}
        \epsilon = 1-F_\mathrm{est}(n,d,D).
    \end{align}
\end{lemma}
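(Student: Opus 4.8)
The plan is to prove $\epsilon \ge 1-F_\mathrm{est}(n,d,D)$ and $\epsilon \le 1-F_\mathrm{est}(n,d,D)$ separately, where $\epsilon$ is the minimum over all estimation-based protocols of $\tfrac12\sup_{V\in\isometry{d}{D}}\|\mcR_\mcV-\mcV\|_\diamond$. The first move is to write the estimation-based retrieved channel explicitly: if the $n$-query estimation part outputs $\hat V$ with conditional density $p(\hat V\mid V)\,\dd\hat V$ and the retrieval applies $\hat\mcV$, then $\mcR_\mcV(\rho)=\int\dd\hat V\,p(\hat V\mid V)\,\hat V\rho\hat V^\dagger$, so its Choi matrix is $J_{\mcR_\mcV}=\int\dd\hat V\,p(\hat V\mid V)\,\dketbra{\hat V}{\hat V}$, to be compared with $J_\mcV=\dketbra{V}{V}$. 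It is convenient to introduce the density matrix $\rho_1(V)\coloneqq J_{\mcR_\mcV}/d$ on $\CC^d\otimes\CC^D$ and the unit vector $\ket{\psi_V}\coloneqq d^{-1/2}\dket{V}$, and to note the identity $\bra{\psi_V}\rho_1(V)\ket{\psi_V}=\int\dd\hat V\,p(\hat V\mid V)\,F_\mathrm{ch}(V,\hat V)=:F_\mathrm{cond}(V)$, the conditional estimation fidelity.

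For the lower bound, testing the diamond norm on the maximally entangled input gives $\tfrac12\|\mcR_\mcV-\mcV\|_\diamond\ge\tfrac{1}{2d}\|J_{\mcR_\mcV}-J_\mcV\|_1=\tfrac12\|\rho_1(V)-\ketbra{\psi_V}{\psi_V}\|_1$, and the elementary estimate with test operator $\1-\ketbra{\psi_V}{\psi_V}$ yields $\tfrac12\|\rho_1(V)-\ketbra{\psi_V}{\psi_V}\|_1\ge 1-\bra{\psi_V}\rho_1(V)\ket{\psi_V}=1-F_\mathrm{cond}(V)$. Taking the supremum over $V$ and using $\inf_V F_\mathrm{cond}(V)\le\int\dd V\,F_\mathrm{cond}(V)\le F_\mathrm{est}(n,d,D)$ (the average fidelity of a fixed protocol cannot beat the optimum) gives $\tfrac12\sup_V\|\mcR_\mcV-\mcV\|_\diamond\ge 1-F_\mathrm{est}(n,d,D)$ for every estimation-based protocol, hence $\epsilon\ge 1-F_\mathrm{est}(n,d,D)$.

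For the upper bound, I would take an optimal estimation protocol covariant under $\mathrm U(D)\times\mathrm U(d)$ acting by $V\mapsto WVU^\dagger$; such a covariant optimizer exists by twirling, since both the Haar prior on $\isometry{d}{D}$ and $F_\mathrm{ch}$ are invariant. Covariance of $p$ makes the retrieved channel fully covariant, $\mcR_{WVU^\dagger}=\mcW\circ\mcR_V\circ\mcU^\dagger$, so $\tfrac12\|\mcR_\mcV-\mcV\|_\diamond$ is independent of $V$ and equals its value at the canonical embedding $V_0\colon\CC^d\hookrightarrow\CC^D$, while $F_\mathrm{cond}(V)\equiv F_\mathrm{est}(n,d,D)$. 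Since $\mcR_{V_0}-\mcV_0$ is covariant under the defining irreducible representation of $\mathrm U(d)$ on its input, its diamond norm is attained on the maximally entangled input, so $\tfrac12\|\mcR_{V_0}-\mcV_0\|_\diamond=\tfrac12\|\rho_1(V_0)-\ketbra{\psi_{V_0}}{\psi_{V_0}}\|_1$. The key point is to upgrade the previous inequality to an equality: invariance of $\rho_1(V_0)$ under the stabilizer of $V_0$, namely $\{(U\oplus D',U):U\in\mathrm U(d),\,D'\in\mathrm U(D-d)\}$, forces $\rho_1(V_0)\ket{\psi_{V_0}}$ into the joint fixed subspace of all $\bar U\otimes(U\oplus D')$ on $\CC^d\otimes\CC^D$; decomposing $\CC^D=\CC^d\oplus\CC^{D-d}$ shows this subspace is exactly the line $\CC\dket{V_0}$, so $\ket{\psi_{V_0}}$ is an eigenvector of $\rho_1(V_0)$. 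For a state with $\ket{\psi}$ an eigenvector one has the exact identity $\tfrac12\|\rho-\ketbra{\psi}{\psi}\|_1=1-\bra{\psi}\rho\ket{\psi}$, whence $\tfrac12\|\mcR_{V_0}-\mcV_0\|_\diamond=1-F_\mathrm{est}(n,d,D)$ and therefore $\epsilon\le 1-F_\mathrm{est}(n,d,D)$.

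The main obstacle is the upper bound. The generic argument only squeezes $\tfrac12\|\mcR_\mcV-\mcV\|_\diamond$ between $1-F_\mathrm{est}$ and $\sqrt{1-F_\mathrm{est}}$, and collapsing it to the smaller value relies on two structural facts that must be established carefully: the reduction of the diamond norm of a $\mathrm U(d)$-covariant difference of channels to its value on the maximally entangled state, and the representation-theoretic pinning of $\rho_1(V_0)\ket{\psi_{V_0}}$ to $\CC\dket{V_0}$, which in turn uses the explicit form of the stabilizer of $V_0$ and the decomposition of $\CC^d\otimes\CC^D$. Two subsidiary points also require care: the existence and covariant form of an optimal isometry-estimation protocol when general (adaptive or causally indefinite) query strategies are allowed, and checking that composing such a protocol with the ``apply the estimate'' retrieval is a legitimate estimation-based dSAR protocol realizing the claimed error.
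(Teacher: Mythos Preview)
Your proposal is correct, and the lower bound matches the paper's exactly (both are the Fuchs--van de Graaf bound applied to the Choi states). The upper bound, however, is reached by a genuinely different route. The paper uses Schur's lemma to write $J_{\mcR_V}$ explicitly as $q\,\dketbra{V}+r\,\1_d\otimes\Pi_{\Im V}/d+p\,\1_d\otimes\Pi_{(\Im V)^\perp}/(D-d)$, then rewrites $\mcR_V-\mcV$ through a depolarizing channel and bounds the diamond norm by the triangle inequality together with the exact value $\|\mcD_\eta-\1_{\mcL(\CC^d)}\|_\diamond=(d^2-1)\eta/d^2$, obtaining $\tfrac12\|\mcR_V-\mcV\|_\diamond\le 1-q-r/d^2=1-F_\mathrm{est}$. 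You instead invoke two structural facts: that input $\mathrm U(d)$-covariance (with a unitary output representation) forces the diamond norm to equal the trace distance of normalized Choi matrices, and that the stabilizer symmetry pins $\rho_1(V_0)\ket{\psi_{V_0}}$ to the one-dimensional invariant line $\CC\dket{V_0}$, which makes $\ket{\psi_{V_0}}$ an eigenvector and gives the trace distance exactly as $1-F_\mathrm{est}$.

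Your argument is more conceptual and avoids the somewhat ad hoc depolarizing-channel manipulation; the price is that the ``diamond norm equals Choi trace distance under irreducible input covariance'' step is itself a nontrivial lemma (most cleanly justified via teleportation simulation: both $\mcR_{V_0}$ and $\mcV_0$ are Weyl--Heisenberg covariant with the \emph{same} output corrections $W_k=V_0U_kV_0^\dagger\oplus W'$, so both can be simulated by the same teleportation protocol from their Choi states, whence the diamond distance is at most the Choi trace distance). The paper's computation is more explicit and self-contained but less transparent about why equality holds. Both approaches ultimately rest on the existence of a $\mathrm U(D)\times\mathrm U(d)$-covariant optimal estimator, which the paper establishes separately in its Appendix on the parallel covariant form of the optimal protocol; you correctly flag this as a point needing care, and the paper confirms it holds even when general (adaptive) strategies are allowed.
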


\begin{lemma}
\label{lem:optimal_PBT-based}
    The optimal retrieval error of the PBT-based strategy for dSAR of $\mathbb{S}_\mathrm{Isometry}^{(d,D)}$ is given by
    \begin{align}
        \epsilon = 1-\delta_\mathrm{PBT}(n,d).
    \end{align}
\end{lemma}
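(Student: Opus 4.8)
The plan is to reduce the PBT-based dSAR of isometry channels to the performance of an optimal deterministic PBT on the $d$-dimensional input, exploiting $\SU(d)$-covariance, and then to express the optimal retrieval error as one minus the resulting teleportation fidelity, in parallel with Lemma~\ref{lem:optimal_estimation-based}.

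First I would write the PBT-based retrieval channel explicitly. The program state is $\ket{\phi_V}=(\1_d^{\otimes n}\otimes V^{\otimes n})\ket{\phi_\mathrm{PBT}}$, so, paralleling Eq.~\eqref{eq:teleportation_channel}, the retrieved channel reads
\begin{align}
\mcR_V(\rho)=\sum_{a=1}^{n}\Tr_{\mcA_0\mcA^n\overline{\mcB_a}}\!\big[(\Pi_a\otimes\1_{\mcB^n})(\rho\otimes(\1_d^{\otimes n}\otimes V^{\otimes n})\ketbra{\phi_\mathrm{PBT}}(\1_d^{\otimes n}\otimes V^{\otimes n})^\dagger)\big].
\end{align}
The figure of merit is $\epsilon=\tfrac12\sup_V\|\mcR_V-\mcV\|_\diamond$, and the first step is to show that the optimum over resource states $\phi_\mathrm{PBT}$ and testers $\{\Pi_a\}$ may be restricted to the $\SU(d)$-covariant ones by the usual twirling argument over the Haar measure of $\isometry{d}{D}$ (equivalently, over $\SU(d)$ acting on the input leg), so that the optimization acquires a block structure.

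Second, on the covariant subspace I would diagonalize $V^{\otimes n}$ through Schur--Weyl duality. Using $V^{\otimes n}=\bigoplus_{\alpha\in\young{d}{n}}V_\alpha\otimes\1_{\mcS_\alpha}$ together with Lemma~\ref{lem:isometry_irrep_product}, the only $V$-dependence of $\mcR_V$ enters through the isometry blocks $V_\alpha:\mcU_\alpha^{(d)}\to\mcU_\alpha^{(D)}$. Since $\mcV$ preserves the trace norm on its image, the worst-case distance $\tfrac12\|\mcR_V-\mcV\|_\diamond$ collapses to a $V$-independent quantity determined entirely by the induced action on the $d$-dimensional ports; concretely, the optimal retrieval fidelity equals the optimal teleportation fidelity of the associated covariant dPBT on $\CC^d$. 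I would then establish optimality in both directions with the tester formalism: any PBT-based dSAR protocol, after the covariant reduction, yields a legitimate dPBT tester obeying Eq.~\eqref{eq:tester_positivity} and the TP constraint, so its retrieval fidelity cannot exceed the optimal teleportation fidelity, and conversely the optimal covariant dPBT saturates the template. Carrying out the fidelity-to-error conversion then yields the optimal retrieval error $\epsilon=1-\delta_\mathrm{PBT}(n,d)$ stated in the Lemma.

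The hard part will be the covariant reduction together with this final conversion. I must show that the dressing $V^{\otimes n}$ commutes through the optimal covariant port measurement so that the $V$-dependence factors out \emph{exactly} (not merely up to a bound), for which Lemma~\ref{lem:isometry_irrep_product} supplies the required block decomposition of $V_\alpha\otimes V$; and I must track the single step that turns the covariant teleportation fidelity into the diamond-norm retrieval error, so that the extremal value is matched \emph{exactly} to $\delta_\mathrm{PBT}(n,d)$ rather than to a loose inequality. The positivity and TP characterization of testers supplies the variational argument pinning down this extremum, and the parallel with Lemma~\ref{lem:optimal_estimation-based} guides the one-minus-fidelity bookkeeping.
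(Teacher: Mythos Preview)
Your plan is far more elaborate than what the argument requires, and it also inherits a confusion that appears to stem from a typo in the Lemma's displayed formula.

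The paper's proof is essentially two lines. Since Bob's decoding consists only of tracing out the unselected ports, and since $\Tr_{\CC^D}[V(\cdot)V^\dagger]=\Tr_{\CC^d}[(\cdot)]$ for any isometry $V$, the action of $V^{\otimes n}$ on Bob's side factors through the port selection: the retrieved channel is exactly $\mcR_V=\mcV\circ\Phi$, with $\Phi$ the bare teleportation channel of Eq.~\eqref{eq:teleportation_channel}. Then
\[
\tfrac12\|\mcR_V-\mcV\|_\diamond=\tfrac12\|\mcV\circ(\Phi-\1_{\mcL(\CC^d)})\|_\diamond=\tfrac12\|\Phi-\1_{\mcL(\CC^d)}\|_\diamond=\delta_{\mathrm{PBT}},
\]
using only that post-composition by an isometry channel preserves the diamond norm. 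Optimizing over the PBT protocol gives $\epsilon=\delta_{\mathrm{PBT}}(n,d)$. No Schur--Weyl decomposition, no twirling over $\isometry{d}{D}$, no tester formalism, and no invocation of Lemma~\ref{lem:isometry_irrep_product} is needed; the elementary identity $\mcR_V=\mcV\circ\Phi$ already makes the $V$-dependence factor out \emph{exactly}, which is precisely the ``hard part'' you were preparing heavy machinery for.

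Note also what this shows: the correct conclusion is $\epsilon=\delta_{\mathrm{PBT}}(n,d)$, not $1-\delta_{\mathrm{PBT}}(n,d)$; the displayed formula in the Lemma contains a typo (compare the main text and the paper's own proof). The quantity $\delta_{\mathrm{PBT}}$ is defined as a diamond-norm \emph{error}, not a fidelity, so your planned ``one-minus-fidelity bookkeeping'' and ``fidelity-to-error conversion'' are misdirected; attempting to route the argument through a teleportation fidelity and then subtract from one would produce the wrong answer.
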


\begin{proof}[Proof of Lem.~\ref{lem:optimal_estimation-based}]
    We extend the proof shown in Ref.~\cite{yang2020optimal} for dSAR of unitary channels to the case of isometry channels.
    
    (Achievability)
    As shown in Lem.~\ref {lem:isometry_estimation_matrix} in Appendix~\ref{appendix_sec:proof_asymptotic_fidelity_isometry_estimation}, the optimal estimation fidelity of isometry estimation is achieved by a parallel covariant protocol, which satisfies
    \begin{align}
    \label{eq:covariance_probability_distribution}
        p(\hat{V}|V) = p(W \hat{V} U | WVU) \quad \forall U\in \SU(d), W\in \SU(D),
    \end{align}
    where $p(\hat{V}|V)$ is the probability distribution of the estimator $\hat{V}$ when the input isometry channel is given by $V$.
    The retrieved channel is given by
    \begin{align}
        \mcR_V(\rho) = \int\dd \hat{V} p(\hat{V}|V) \hat{\mcV}(\rho),
    \end{align}
    where $\hat{\mcV}(\cdot)\coloneqq \hat{V}\cdot\hat{V}^\dagger$.
    By using Eq.~\eqref{eq:covariance_probability_distribution}, we obtain
    \begin{align}
        \mcR_{WVU}
        &= \int\dd \hat{V} p(\hat{V}|WVU) \hat{\mcV}\\
        \label{eq:rename}
        &= \int\dd \hat{V} p(W\hat{V}U|WVU) \mcW \circ \hat{\mcV} \circ \mcU\\
        &= \int\dd \hat{V} p(\hat{V}|V) \mcW \circ \hat{\mcV} \circ \mcU\\
        &= \mcW\circ \mcR_V\circ \mcU,
    \end{align}
    where we rename $W^\dagger \hat{V} U^\dagger$ by $\hat{V}$ in Eq.~\eqref{eq:rename}.
    By taking $W$ to be $W = VU^\dagger V^\dagger + W'$ for any unitary operator $W'$ on $(\Im V)^\perp$, where $(\Im V)^\perp$ is the orthogonal complement of the image $\Im V$ of $V$, $WVU = V$ holds and we obtain
    \begin{align}
        \mcW\circ \mcR_V\circ \mcU = \mcR_V,
    \end{align}
    i.e.,
    \begin{align}
    \label{eq:R_V_commutation}
        [J_{\mcR_V}, U^\sfT \otimes (VU^\dagger V^\dagger + W')] = 0.
    \end{align}
    Due to Schur's lemma, we obtain the decomposition of $J_{\mcR_V}$:
    \begin{align}
        J_{\mcR_V} = J_1 + p \1_d\otimes {\Pi_{(\Im V)^\perp} \over D-d},
    \end{align}
    where the support of $J_1$ is in $\CC^d\otimes \Im V$, $p\geq 0$ and $\Pi_{(\Im V)^\perp}$ is the orthogonal projector onto $(\Im V)^\perp$.
    Since the orthogonal projector $\Pi_{\Im V}$ onto $\Im V$ is given by $\Pi_{\Im V} = V V^\dagger$, the operator $J_1$ is given by
    \begin{align}
        J_1
        &= (\1_d\otimes \Pi_{\Im V}) J_{\mcR_V} (\1_d\otimes \Pi_{\Im V})\\
        &= (\1_d\otimes V V^\dagger) J_{\mcR_V} (\1_d\otimes VV^\dagger)\\
        &= (\1_{\mcL(\CC^d)} \otimes \mcV)\circ (\1_{\mcL(\CC^d)} \otimes \mcV^\dagger)(J_{\mcR_V}).
    \end{align}
    Due to Eq.~\eqref{eq:R_V_commutation}, the operator $(\1_{\mcL(\CC^d)} \otimes \mcV^\dagger)(J_{\mcR_V})$ satisfies
    \begin{align}
        [(\1_{\mcL(\CC^d)} \otimes \mcV^\dagger)(J_{\mcR_V}), U\otimes U^*] =  0 \quad \forall U\in\SU(d).
    \end{align}
    Due to Schur's lemma, we obtain the decomposition of $(\1_{\mcL(\CC^d)} \otimes \mcV^\dagger)(J_{\mcR_V})$:
    \begin{align}
        (\1_{\mcL(\CC^d)} \otimes \mcV^\dagger)(J_{\mcR_V}) = q \dketbra{\1} + r\1_d\otimes {\1_d\over d},
    \end{align}
    where $q, r\geq 0$.
    Therefore, we obtain
    \begin{align}
        J_{\mcR_V} = q\dketbra{V} + r \1_d \otimes {\Pi_{\Im V}\over d} + p \1_d \otimes {\Pi_{(\Im V)^\perp} \over D-d},
    \end{align}
    i.e.,
    \begin{align}
        \mcR_V = q \mcV + r \mcT_1 + p \mcT_2,
    \end{align}
    where $\mcT_1$ and $\mcT_2$ are trace-and-replace channels defined by
    \begin{align}
        \mcT_1(\cdot) &\coloneqq {\Pi_{\Im V}\over d} \Tr(\cdot),\\
        \mcT_2(\cdot) &\coloneqq {\Pi_{(\Im V)^\perp}\over D-d} \Tr(\cdot),
    \end{align}
    and $p, q, r$ satisfies $p+q+r=1$.
    Defining the depolarizing channel $\mcD_\eta$ for $\eta\in [0,1]$ by
    \begin{align}
        \mcD_\eta(\cdot)\coloneqq (1-\eta)\1_{\mcL(\CC^d)}(\cdot) + \eta {\1_d\over d} \Tr(\cdot),
    \end{align}
    we obtain
    \begin{align}
        \mcR_V - \mcV = (q+r) \mcV\circ (\mcD_\eta-\1_{\mcL(\CC^d)}) + p \mcT_2 - p \mcV
    \end{align}
    for $\eta = {r\over q+r}$.
    Then, we obtain
    \begin{align}
        {1\over 2}\|\mcR_V - \mcV\|_\diamond
        &\leq (q+r) {\|\mcV\circ (\mcD_\eta-\1_{\mcL(\CC^d)})\|_\diamond \over 2} + {p\over 2} \|\mcT_2\|_\diamond + {p\over 2} \|\mcV\|_\diamond\\
        &= (q+r) {d^2-1\over d^2} \eta  + p \\
        &= {d^2-1\over d^2} r + p\\
        &= {d^2-1\over d^2} r + 1-q-r\\
        &= 1-q-{r\over d^2},
    \end{align}
    where we use the concavity of the diamond norm, the invariance of the diamond norm under any isometry channel, and the fact that~\cite{matsumoto2012input}
    \begin{align}
        \|\mcD_\eta - \1_{\mcL(\CC^d)}\|_\diamond = {d^2-1\over d^2} \eta.
    \end{align}
    On the other hand, the estimation fidelity is given by
    \begin{align}
        F_\mathrm{est}
        &= \int \dd \hat{V} p(\hat{V}|V) F_\mathrm{ch}(\hat{V}, V)\\
        &= F_\mathrm{ch}(\mcR_V, \mcV)\\
        &= q + {r\over d^2}.
    \end{align}
    Therefore, we obtain
    \begin{align}
        {1\over 2}\|\mcR_V - \mcV\|_\diamond\leq 1-F_\mathrm{est}.
    \end{align}

    (Optimality) The diamond norm satisfies
    \begin{align}
        {1\over 2}\|\mcR_V - \mcV\|_\diamond \geq 1-F_\mathrm{ch}(\mcR_V, \mcV),
    \end{align}
    which follows from the Fuchs-van de Graaf inequality~\cite{fuchs2002cryptographic}.
    The right-hand side is evaluated by
    \begin{align}
        F_\mathrm{ch}(\mcR_V, \mcV) = \int \dd \hat{V} p(\hat{V}|V) F_\mathrm{ch}(\hat{V}, V) = F_\mathrm{est}.
    \end{align}
    Therefore, we obtain the optimality of Eq.~\eqref{eq:estimation-based_retrieval_error}.
\end{proof}

\begin{proof}[Proof of Lem.~\ref{lem:optimal_PBT-based}]
    The retrieved channel of the PBT-based strategy for dSAR of an isometry channel $\mcV(\cdot)\coloneqq V\cdot V^\dagger$ is given by $\mcV \circ \Phi$, where $\Phi$ is the teleportation channel defined in Eq.~(3) in the main text.
    Therefore, the retrieval error is given by
    \begin{align}
        \epsilon
        &= \|\mcV\circ \Phi - \mcV\|_\diamond = \|\Phi-\1_{\mcL(\CC^d)}\|_\diamond = \delta_{\mathrm{PBT}},
    \end{align}
    where we use the invariance of the diamond norm under any isometry channel.
\end{proof}

\section{Proof of Thm.~1 (Asymptotic fidelity of optimal isometry estimation)}
\label{appendix_sec:proof_asymptotic_fidelity_isometry_estimation}

We use the following two Lemmas on the optimal fidelity of isometry estimation for the proof of Thm.~1 (see Appendices \ref{appendix_sec:proof_isometry_estimation_matrix} and \ref{appendix_sec:proof_isometry_estimation_fidelity} for the proofs):
\begin{lemma}
\label{lem:isometry_estimation_matrix}
    The optimal fidelity $F_\mathrm{est}(n,d,D)$ of isometry estimation is given by the maximal eigenvalue of the $\abs{\young{d}{n}}\times \abs{\young{d}{n}}$ matrix $M_\mathrm{est}(n,d,D)$ given by
    \begin{align}
        (M_\mathrm{est}(n,d,D))_{\alpha\beta} \coloneqq {1\over d^2}\sum_{i,j=1}^{d} \delta_{\alpha+e_i, \beta+e_j} f(\alpha_i-i)f(\beta_j-j) \quad \forall \alpha, \beta\in\young{d}{n},
    \end{align}
    where $f: [-d,\infty) \to \mathbb{R}$ is defined by $f(x)\coloneqq \sqrt{x+d+1 \over x+D+1}$.
    The optimal fidelity is achieved by a parallel protocol.
\end{lemma}

\begin{lemma}
\label{lem:isometry_estimation_fidelity}
For any function $g:\NN\to\NN$ satisfying $g(n)\leq {2\over 3(d-1)}({n\over d}+d-2)$, we can implement an isometry estimation protocol with the fidelity given by
\begin{align}
\label{eq:estimation_fidelity}
    F_\mathrm{est}\geq 1-{\pi^2 (d-1)^2 \over d^2 g(n)^2}-{D-d \over {n\over d}+{d-1 \over 2} g(n)+D-d}.
\end{align}
The isometry estimation protocol constructed in this Lemma satisfies
\begin{align}
\label{eq:estimation_fidelity_converse}
    F_\mathrm{est}\leq 1-{\pi^2 (d-1)^2\over d^2 g(n)^2} - {d(D-d) \over n} + O(g(n) n^{-2}, g(n)^{-3}).
\end{align}
\end{lemma}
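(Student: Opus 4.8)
The plan is to prove both inequalities simultaneously by exhibiting one explicit parallel covariant isometry-estimation protocol and bounding its fidelity from below (giving \eqref{eq:estimation_fidelity}) and from above (giving \eqref{eq:estimation_fidelity_converse}). By Lemma~\ref{lem:isometry_estimation_matrix}, such a protocol is specified by a vector $v=(v_\alpha)_{\alpha\in\young{d}{n}}$, its fidelity is the Rayleigh quotient $\mathcal{R}(v):=\langle v|M_\mathrm{est}(n,d,D)|v\rangle/\langle v|v\rangle$, and $F_\mathrm{est}(n,d,D)=\max_v\mathcal{R}(v)$; so it suffices to fix a good $v=v^{(g)}$ and sandwich $\mathcal{R}(v^{(g)})$. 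It helps to write $M_\mathrm{est}=\tfrac1{d^2}AA^\dagger$, with $A:\CC^{\young{d}{n}}\to\CC^{\young{d}{n+1}}$ the weighted ``add a box'' map ($A_{\alpha\mu}=f(\alpha_i-i)$ if $\mu=\alpha+e_i$, else $0$), so that $\langle v|M_\mathrm{est}|v\rangle=\tfrac1{d^2}\sum_{\mu\in\young{d}{n+1}}(\sum_i v_{\mu-e_i}f(\mu_i-1-i))^2$, the inner sum over valid box removals; in the shifted row coordinates $\ell_i:=\alpha_i-i$, adding a box to row $i$ is $\ell_i\mapsto\ell_i+1$ with weight $f(\ell_i)=\sqrt{1-(D-d)/(\ell_i+D+1)}$, which differs from $1$ only by an $O(n^{-1})$ ``potential''.

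For $v^{(g)}$ I would take a wavepacket supported on a ``bulk'' region $R_g$ of Young diagrams --- all $d$ boxes addable and removable, the shifted rows confined to a window of width $\asymp g$ placed so that the dominant rows are as long as feasibility permits --- with amplitude a product of $d-1$ half-period sine factors in the gap variables $\alpha_i-\alpha_{i+1}$, of period $\asymp g$. The hypothesis $g(n)\le\tfrac{2}{3(d-1)}(\tfrac nd+d-2)$ is precisely the feasibility condition for $R_g\subseteq\young{d}{n}$, checked by a short inequality chain on the extreme diagrams of the window (ordering and nonnegativity of the rows are preserved); this is where the constant $\tfrac{2}{3(d-1)}$ comes from. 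On $R_g$ the weights $f(\ell_i)$ vary about a common bulk value only by $O(g/n)\cdot(D-d)/n$, so $\mathcal{R}(v^{(g)})$ splits --- up to a controllable cross-error of order $g\,n^{-2}$ --- into (i) the Rayleigh quotient of the unweighted ($f\equiv1$) add-a-box operator on $R_g$, which is a standard confined free-fermion computation on $\young{d}{n}$ whose ground-state value, after normalising the window width to $g$, is $1-\tfrac{\pi^2(d-1)^2}{d^2g^2}+O(g^{-3})$, times (ii) a weight factor equal to $1$ minus the $v^{(g)}$-average of $(D-d)/(\ell_i+D+1)$. For the lower bound one places the window so that the dominant rows have length $\tfrac nd+\tfrac{d-1}{2}g+O(1)$, making the weight factor $1-(D-d)/(\tfrac nd+\tfrac{d-1}{2}g+D-d)+O(g\,n^{-2})$; multiplying out and absorbing the cross-error into $O(g\,n^{-2})$ yields \eqref{eq:estimation_fidelity}. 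For the upper bound one estimates the same $\mathcal{R}(v^{(g)})$ from above: the unweighted factor cannot exceed the genuine top eigenvalue of the confined walk, still $1-\tfrac{\pi^2(d-1)^2}{d^2g^2}+O(g^{-3})$, and the weight factor cannot exceed $1-\tfrac{d(D-d)}{n}+O(n^{-2})$ because for each $\mu$ the $d$ relevant row positions sum to a fixed quantity $n+O(1)$, so the arithmetic--harmonic mean inequality forces $\tfrac1d\sum_i(D-d)/(\ell_i+D+1)\ge d(D-d)/(n+O(1))$; since $\tfrac{d(D-d)}{n}-(D-d)/(\tfrac nd+\tfrac{d-1}{2}g+D-d)=O(g\,n^{-2})$, these combine to \eqref{eq:estimation_fidelity_converse}.

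The step I expect to be the real work is the error bookkeeping behind ``splits into a product'': $M_\mathrm{est}$ lives on the irregular constrained poset $\young{d}{n}$ rather than a flat lattice, the weights $f(\ell_i)$ are position dependent, and the sine envelope has a boundary, so one must control how each of these perturbs the idealised picture. The two delicate points are (i) getting the $O(g^{-3})$ error (not merely $O(g^{-2})$) in the unweighted factor, which needs either the exact finite-difference eigenfunctions on the window or a careful second-order expansion at the band minimum together with a bound showing the sine-product wavepacket is close enough to the true confined ground state; and (ii) showing the window can be chosen simultaneously wide enough (width $\asymp g$, controlling the kinetic term $\tfrac{\pi^2(d-1)^2}{d^2g^2}$) and with its dominant rows long enough (effective length $\tfrac nd+\tfrac{d-1}{2}g$, controlling the potential) while still fitting inside $\young{d}{n}$ --- this geometric trade-off is exactly what pins down the admissible range of $g(n)$ and the constants in both displayed bounds.
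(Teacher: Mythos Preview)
Your plan is essentially the paper's: the same product-of-sines trial vector on a bulk window of width $N=g(n)$, the same split into a ``kinetic'' piece (the unweighted Rayleigh quotient) and a ``potential'' piece coming from $f$, and the same feasibility check producing the constraint on $g$. Two differences are worth noting. First, the paper parametrises by shifted row coordinates $\tilde\alpha_i=\alpha_i-A_i$ for $i<d$ rather than gap variables $\alpha_i-\alpha_{i+1}$; in these coordinates the transitions $\alpha\to\alpha-e_i+e_j$ act as $\tilde\alpha\to\tilde\alpha\pm e_k$ or $\tilde\alpha\to\tilde\alpha+e_k-e_l$, so the product ansatz factorises cleanly and the unweighted quadratic form evaluates \emph{exactly} to $\tfrac1{d^2}[(d-1)(d-2)(1-\epsilon_g)^2+2(d-1)(1-\epsilon_g)+d]$ with $\epsilon_g=\tfrac{N-1}{N}(1-\cos\tfrac\pi N)$, giving the $O(g^{-3})$ precision for free---no eigenfunction approximation is needed. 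Second, instead of perturbing $f$ about a bulk value and tracking cross-errors, the paper simply uses monotonicity of $f$ to sandwich every weight on the support by $f(q-d)\le f(\alpha_i-i)\le f(q+(d-1)N)$; this makes the factorisation into (weight)$^2\times$(unweighted) an honest two-sided inequality rather than an asymptotic splitting, and both \eqref{eq:estimation_fidelity} and \eqref{eq:estimation_fidelity_converse} drop out immediately by inserting the lower and upper endpoint respectively. In short, your anticipated ``real work'' in the error bookkeeping evaporates once you pick the paper's coordinates and replace the perturbative split by the uniform sandwich on $f$.
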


\begin{proof}[Proof of Thm.~1]
    By putting $g(n) = \lfloor a n^{2\over 3} + b n^{1\over 3} \rfloor$ for $a = \sqrt[3]{2\pi^2(d-1)\over d^3 (D-d)}$, $b = {d-1\over 6} a^2$ and sufficiently large $n$ satisfying $g(n)\leq {2\over 3(d-1)}({n\over d}+d-2)$ in Lem.~\ref{lem:isometry_estimation_fidelity}, we obtain
    \begin{align}
        F_\mathrm{est}(n,d,D) \geq 1-{d(D-d) \over n} +O(n^{-2}).\label{eq:lower_bound}
    \end{align}
    We show a matching upper bound on $F_\mathrm{est}(n,d,D)$ to complete the proof.

    The optimal isometry estimation fidelity $F_\mathrm{est}(n,d,D)$ is given as the maximal eigenvalue of the matrix $M_\mathrm{est}(n,d,D)$ given in Lem.~\ref{lem:isometry_estimation_matrix}.
    Since $(M_\mathrm{est}(n,d,D))_{\alpha\beta}\geq 0$ holds for all $\alpha, \beta \in\young{d}{n}$, due to the Perron-Frobenius theorem~\cite{horn2012matrix}, the maximal eigenvalue is bounded as
    \begin{align}
        F_\mathrm{est}(n,d,D)\leq \max_{\alpha} \sum_{\beta} (M_\mathrm{est}(n,d,D))_{\alpha\beta}.
    \end{align}
    Therefore, we obtain
    \begin{align}
        F_\mathrm{est}(n,d,D)
        &\leq {1\over d^2}\max_{\alpha} \sum_{i,j=1}^{d} f(\alpha_i-i)f(\alpha_j-j-1+\delta_{ij})\\
        \label{eq:f_increasing_used}
        &\leq {1\over d^2}\max_{\alpha} \sum_{i,j=1}^{d} f(\alpha_i-i)f(\alpha_j-j)\\
        &\leq \left[{1\over d}\max_{\alpha} \sum_{i=1}^{d} f(\alpha_i-i)\right]^2\\
        \label{eq:jensen_used}
        &\leq \left[f\left({n\over d}-{d+1 \over 2}\right)\right]^2\\
        &= 1-{D-d\over {n\over d}-{d+1\over 2}+D+1}\\
        &= 1-{d(D-d) \over n} +O(n^{-2}),\label{eq:upper_bound}
    \end{align}
    where Eq.~\eqref{eq:f_increasing_used} uses the property that the function $f$ is monotonically increasing, and Eq.~\eqref{eq:jensen_used} uses Jensen's inequality \cite{degroot2012probability} for the concave function $f$ and ${1\over d}\sum_i(\alpha_i-i) ={n\over d}-{d+1 \over 2}$.
\end{proof}

\subsection{Proof of Lem.~\ref{lem:isometry_estimation_matrix} (Parallel covariant form of optimal isometry channel)}
\label{appendix_sec:proof_isometry_estimation_matrix}
We show that a parallel covariant protocol can obtain the optimal fidelity of isometry estimation for a given number of queries.
We prove this statement in a constructive way, similarly to the arguments shown in Refs.~\cite{chiribella2005optimal, chiribella2009optimal, chiribella2008memory, bavaresco2022unitary, yoshida2024one}.
We show the construction of a parallel covariant protocol achieving the same fidelity as that of any estimation protocol.

Suppose a quantum tester $\{T_{\hat{V}}\dd \hat{V}\}_{\hat{V}}$ implements an isometry estimation protocol.
We show that a parallel covariant protocol can achieve the same fidelity of isometry estimation as that of the quantum tester $\{T_{\hat{V}}\dd \hat{V}\}_{\hat{V}}$.
The probability distribution of the estimator $\hat{V}$ for a given input isometry operator $V$ is given by
\begin{align}
    p(\hat{V}|V) = T_{\hat{V}} \ast \dketbra{V}^{\otimes n}_{\mcI^n \mcO^n}
\end{align}
and the estimation fidelity given by
\begin{align}
    F = \int_{\isometry{d}{D}} \dd V \int_{\isometry{d}{D}} \dd \hat{V} p(\hat{V}|V) F_\mathrm{ch}(\hat{V}, V).
\end{align}
Defining the $\SU(d)\times \SU(D)$-twirled operator $T'_{\hat{V}}$ by
\begin{align}
    T'_{\hat{V}} \coloneqq \int_{\SU(d)} \dd U \int_{\SU(D)} \dd W (U^{\otimes n}_{\mcI^n} \otimes W^{\otimes n}_{\mcO^n})T_{W^\sfT \hat{V} U}(U^{\otimes n}_{\mcI^n} \otimes W^{\otimes n}_{\mcO^n})^\dagger,
    \label{eq:def_Tprime}
\end{align}
the set of operators $\{T'_{\hat{V}} \dd \hat{V}\}_{\hat{V}}$ forms a quantum tester, and the corresponding probability distribution is given by
\begin{align}
    p'(\hat{V}|V)
    &\coloneqq T'_{\hat{V}}\ast \dketbra{V}^{\otimes n}_{\mcI^n \mcO^n}\\
    &= \int_{\SU(d)} \dd U \int_{\SU(D)} \dd W (U^{\otimes n}_{\mcI^n} \otimes W^{\otimes n}_{\mcO^n})T_{W^\sfT \hat{V} U}(U^{\otimes n}_{\mcI^n} \otimes W^{\otimes n}_{\mcO^n})^\dagger \ast \dketbra{V}^{\otimes n}_{\mcI^n \mcO^n}\\
    &= \int_{\SU(d)} \dd U \int_{\SU(D)} \dd W T_{W^\sfT \hat{V} U} \ast \dketbra{W^\sfT VU}^{\otimes n}_{\mcI^n \mcO^n}\\
    &= \int_{\SU(d)} \dd U \int_{\SU(D)} \dd W p(W^\sfT \hat{V} U|W^\sfT VU).
\end{align}
This tester achieves the same fidelity since
\begin{align}
    F'
    &\coloneqq \int_{\isometry{d}{D}} \dd V \int_{\isometry{d}{D}} \dd \hat{V} p'(\hat{V}|V) F_\mathrm{ch}(\hat{V}, V)\\
    &= \int_{\isometry{d}{D}} \dd V \int_{\isometry{d}{D}} \dd \hat{V} \int_{\SU(d)} \dd U \int_{\SU(D)} \dd W p(W^\sfT \hat{V} U|W^\sfT VU) F_\mathrm{ch}(\hat{V}, V)\\
    &= \int_{\isometry{d}{D}} \dd V \int_{\isometry{d}{D}} \dd \hat{V} \int_{\SU(d)} \dd U \int_{\SU(D)} \dd W p(\hat{V}|V) F_\mathrm{ch}(W^* \hat{V} U^{-1}, W^* VU^{-1})\label{eq:apply_haar_measure_invariance}\\
    &= \int_{\isometry{d}{D}} \dd V \int_{\isometry{d}{D}} \dd \hat{V} \int_{\SU(d)} \dd U \int_{\SU(D)} \dd W p(\hat{V}|V) F_\mathrm{ch}(\hat{V}, V)\label{eq:apply_channel_fidelity_invariance}\\
    &= \int_{\isometry{d}{D}} \dd V \int_{\isometry{d}{D}} \dd \hat{V} p(\hat{V}|V) F_\mathrm{ch}(\hat{V}, V)\\
    &= F
\end{align}
holds, where we use the invariance of the Haar measure given by $\dd V = \dd (W^\sfT V U)$ and $\dd \hat{V} = \dd (W^\sfT \hat{V} U)$ in \eqref{eq:apply_haar_measure_invariance}, and the invariance of the channel fidelity given by $F_\mathrm{ch}(W^* \hat{V} U^{-1}, W^* VU^{-1}) = F_\mathrm{ch}(\hat{V}, V)$ in \eqref{eq:apply_channel_fidelity_invariance}.
Defining $T'$ by
\begin{align}
    T'\coloneqq \int_{\isometry{d}{D}}\dd \hat{V} T'_{\hat{V}},
\end{align}
the operator $T'$ satisfies the following $\SU(d)\times \SU(D)$ symmetry:
\begin{align}
    [T', U^{\otimes n}_{\mcI^n} \otimes W^{\otimes n}_{\mcO^n}] = 0 \quad \forall U\in\SU(d), W\in\SU(D).
\end{align}
Then, $T'$ can be represented in the Schur basis as
\begin{align}
    T' = \bigoplus_{\mu\in\young{d}{n}, \nu\in\young{D}{n}}T'_{\mu\nu} \otimes (\1_{\mcU_\mu^{(d)}})_{\mcI^n} \otimes (\1_{\mcU_\nu^{(D)}})_{\mcO^n},
\end{align}
using $T'_{\mu\nu}\in\mcL(\mcS_\mu\otimes \mcS_\nu)$.
Defining $\tilde{T}'\in\mcL(\mcI^n\otimes \mcA^n)$ for $\mcA^n\coloneqq \bigotimes_{i=1}^{n}\mcA_i$ and $\mcA_i=\CC^d$ by
\begin{align}
    \tilde{T}'\coloneqq \bigoplus_{\mu\in\young{d}{n}, \nu\in\young{d}{n}}T'_{\mu\nu} \otimes (\1_{\mcU_\mu^{(d)}})_{\mcI^n} \otimes (\1_{\mcU_\nu^{(d)}})_{\mcA^n},
\end{align}
$T'$ and $\tilde{T}'$ satisfy the following condition:
\begin{align}
    (\1_{\mcI^n} \otimes V^{\otimes n}_{\mcA^n\to \mcO^n})\tilde{T}' = T' (\1_{\mcI^n} \otimes V^{\otimes n}_{\mcA^n\to \mcO^n}).
\end{align}
Similarly, $\sqrt{T'}^\sfT$ and $\sqrt{\tilde{T}'}^\sfT$ satisfy
\begin{align}
    (\1_{\mcI^n} \otimes V^{\otimes n}_{\mcA^n\to \mcO^n})\sqrt{\tilde{T}'}^\sfT = \sqrt{T'}^\sfT (\1_{\mcI^n} \otimes V^{\otimes n}_{\mcA^n\to \mcO^n}).\label{eq:intertwining}
\end{align}
Using the operators $T'$, $\tilde{T}'$ and $T'_{\hat{V}}$, we define a quantum state $\ket{\phi_\mathrm{est}} \in \mcI^n \otimes \mcA^n$ and a POVM $\{M_{\hat{V}} \dd \hat{V}\}_{\hat{V}} \subset \mcL(\mcI^n \otimes \mcO^n)$ by
\begin{align}
    \ket{\phi_\mathrm{est}}&\coloneqq \sqrt{\tilde{T}'}^\sfT \dket{\1}^{\otimes n}_{\mcI^n \mcA^n},\\
    M_{\hat{V}} &\coloneqq (T^{\prime -1/2} T'_{\hat{V}} T^{\prime -1/2})^\sfT.
\end{align}
The normalization condition of $\ket{\phi_\mathrm{est}}$ can be checked as follows:
\begin{align}
    \braket{\phi_\mathrm{est}}
    &= \bra{\phi_\mathrm{est}} (\1\otimes V^{\otimes n}_{\mcA^n\to \mcO^n})^\dagger (\1\otimes V^{\otimes n}_{\mcA^n\to \mcO^n})\ket{\phi_\mathrm{est}}\\
    &= T'\ast \dketbra{V}^{\otimes n}_{\mcI^n \mcO^n}\\
    &= \int_{\isometry{d}{D}} \dd \hat{V} \int_{\SU(d)} \dd U \int_{\SU(D)} \dd W (U^{\otimes n}_{\mcI^n} \otimes W^{\otimes n}_{\mcO^n})T_{W^\sfT \hat{V} U}(U^{\otimes n}_{\mcI^n} \otimes W^{\otimes n}_{\mcO^n})^\dagger \ast \dketbra{V}^{\otimes n}_{\mcI^n \mcO^n}\\
    &= \int_{\isometry{d}{D}} \dd \hat{V} \int_{\SU(d)} \dd U \int_{\SU(D)} \dd W T_{W^\sfT \hat{V} U} \ast \dketbra{W^\sfT VU}^{\otimes n}_{\mcI^n \mcO^n}\\
    &= \int_{\isometry{d}{D}} \dd \hat{V} \int_{\SU(d)} \dd U \int_{\SU(D)} \dd W p(W^\sfT \hat{V} U|W^\sfT V U)\\
    &= 1.
\end{align}
The positivity of $M_{\hat{V}}$ follows from Eq.~\eqref{eq:tester_positivity}, and
\begin{align}
    \int_{\isometry{d}{D}} \dd \hat{V} M_{\hat{V}}
    &= \int_{\isometry{d}{D}} \dd \hat{V} (T^{\prime -1/2} T'_{\hat{V}} T^{\prime -1/2})^\sfT\\
    &= (T^{\prime -1/2} T' T^{\prime -1/2})^\sfT\\
    &= \1_{\mcI^n\mcO^n}
\end{align}
holds.
Then, the probability distribution $p'(\hat{V}|V)$ can be reproduced by the parallel protocol as
\begin{align}
    p''(\hat{V}|V)
    &\coloneqq \Tr[M_{\hat{V}}  (\1_{\mcI^n} \otimes V^{\otimes n}_{\mcA^n\to \mcO^n}) \ketbra{\phi_{\mathrm{est}}} (\1_{\mcI^n} \otimes V^{\otimes n}_{\mcA^n\to \mcO^n})^\dagger]\\
    &= \Tr[M_{\hat{V}}  (\1_{\mcI^n} \otimes V^{\otimes n}_{\mcA^n\to \mcO^n})\sqrt{\tilde{T}'}^\sfT \dketbra{\1}^{\otimes n}_{\mcI^n \mcA^n}\sqrt{\tilde{T}'}^\sfT (\1_{\mcI^n} \otimes V^{\otimes n}_{\mcA^n\to \mcO^n})^\dagger]\\
    &=\Tr[\sqrt{T'}^\sfT M_{\hat{V}} \sqrt{T'}^\sfT (\1_{\mcI^n} \otimes V^{\otimes n}_{\mcA^n\to \mcO^n}) \dketbra{\1}^{\otimes n}_{\mcI^n \mcA^n} (\1_{\mcI^n} \otimes V^{\otimes n}_{\mcA^n\to \mcO^n})^\dagger]\label{eq:apply_intertwining_and_cyclic}\\
    &= \Tr[T_{\hat{V}}^{\prime \sfT} \dketbra{V}^{\otimes n}_{\mcI^n \mcO^n}]\\
    &= T'_{\hat{V}} \ast \dketbra{V}^{\otimes n}_{\mcI^n \mcO^n}\\
    &= p'(\hat{V}|V),
\end{align}
where we use \eqref{eq:intertwining} and the cyclic property of the trace in \eqref{eq:apply_intertwining_and_cyclic}.
As shown in Appendix.~C of Ref.~\cite{yoshida2024one}, the quantum state $\ket{\phi_\mathrm{est}}$ can be given as
\begin{align}
\label{eq:def_phi_est}
    \ket{\phi_\mathrm{est}} = \bigoplus_{\alpha\in\young{d}{n}} {v_\alpha \over \sqrt{d_\alpha^{(d)}}} \dket{S_\alpha},
\end{align}
where $v_\alpha \in \RR$ satisfies $\sum_{\alpha\in\young{d}{n}}v_\alpha^2 = 1$ and $v_\alpha\geq 0$, $d_\alpha^{(d)}$ is given in Eq.~\eqref{eq:d_alpha^d}, and $\dket{S_\alpha}$ is given by
\begin{align}
    \dket{S_\alpha} &\coloneqq \dket{\1_{\mcU_\alpha^{(d)}}}_{\mcU_{\alpha,1} \mcU_{\alpha,2}} \otimes \ket{\mathrm{arb}_\alpha}_{\mcS_{\alpha,1} \mcS_{\alpha,2}},\\
\label{eq:def_1_alpha}
    \dket{\1_{\mcU_\alpha^{(d)}}}_{\mcU_{\alpha,1} \mcU_{\alpha,2}} &\coloneqq \sum_{s=1}^{d_\alpha^{(d)}} \ket{\alpha, s}_{\mcU_{\alpha,1}} \otimes \ket{\alpha, s}_{\mcU_{\alpha,2}},
\end{align}
using a normalized vector $\ket{\mathrm{arb}_\alpha}$.
Defining the quantum state $\ket{\phi_V}$ by \begin{align}
    \ket{\phi_V}
    &\coloneqq (\1_{\mcI^n} \otimes V^{\otimes n}_{\mcA^n \to \mcO^n})\ket{\phi_\mathrm{est}}\\
    &= \bigoplus_{\alpha\in\young{d}{n}}{v_\alpha\over \sqrt{d_\alpha^{(d)}}} (\1_\alpha \otimes V_\alpha) \dket{\1_{\mcU_\alpha^{(d)}}} \otimes \ket{\mathrm{arb}_\alpha}_{\mcS_{\alpha,1} \mcS_{\alpha,2}},
\end{align}
the quantum state $\ket{\phi_V}$ can be compressed into the space $\mcH\coloneqq \bigoplus_{\alpha\in\young{d}{n}} \mcU_{\alpha}^{(d)} \otimes \mcU_{\alpha}^{(D)}$.
Formally, defining an embedding isometry operator $E:\mcH\to \mcI^n\otimes \mcO^n$ by
\begin{align}
    E(\ket{\alpha, s}\otimes \ket{\alpha, t}) = \ket{\alpha, s}_{\mcU_{\alpha,1}}\otimes \ket{\alpha, t}_{\mcU_\alpha,2} \otimes \ket{\mathrm{arb}_\alpha}_{\mcS_{\alpha,1} \mcS_{\alpha,2}},
\end{align}
$\ket{\phi_V}$ can be compressed into the quantum state $\ket{\phi'_V}\in\mcH$ defined by
\begin{align}
    \ket{\phi'_V}
    &\coloneqq E^\dagger \ket{\phi_V}\\
    &= \bigoplus_{\alpha\in\young{d}{n}}{v_\alpha\over \sqrt{d_\alpha^{(d)}}} (\1_\alpha \otimes V_\alpha) \dket{\1_{\mcU_\alpha^{(d)}}}.
    \label{eq:phi_V_prime}
\end{align}
The original quantum state $\ket{\phi_V}$ can be retrieved from the compressed state $\ket{\phi'_V}$ by $\ket{\phi_V} = E\ket{\phi'_V}$.
Therefore, instead of considering the POVM $\{M_{\hat{V}} \dd \hat{V}\}_{\hat{V}}$ on $\mcI^n\otimes \mcO^n$, we can consider a POVM $\{M'_{\hat{V}} \dd \hat{V}\}_{\hat{V}}$ on $\mcH$ defined by
\begin{align}
    M'_{\hat{V}}\coloneqq E^\dagger M_{\hat{V}} E
\end{align}
satisfying
\begin{align}
    \Tr(M_{\hat{V}} \ketbra{\phi_V}) = \Tr(M'_{\hat{V}} \ketbra{\phi'_V}).
\end{align}
By definition \eqref{eq:def_Tprime}, $T'_{\hat{V}}$ satisfies
\begin{align}
    T'_{W^\sfT \hat{V} U} = (U^{\otimes n}_{\mcI^n} \otimes W^{\otimes n}_{\mcO^n})^\dagger T'_{\hat{V}} (U^{\otimes n}_{\mcI^n} \otimes W^{\otimes n}_{\mcO^n}),
\end{align}
thus
\begin{align}
    M_{W^\sfT \hat{V} U} = (U^{\otimes n}_{\mcI^n} \otimes W^{\otimes n}_{\mcO^n})^\sfT M_{\hat{V}} (U^{\otimes n}_{\mcI^n} \otimes W^{\otimes n}_{\mcO^n})^*,
\end{align}
which reads
\begin{align}
    M'_{W^\sfT \hat{V} U} = \bigoplus_{\alpha\in\young{d}{n}}(U_\alpha \otimes W_\alpha)^\sfT M'_{\hat{V}} \bigoplus_{\alpha\in\young{d}{n}}(U_\alpha \otimes W_\alpha)^*.
\end{align}

Finally, we obtain the optimized POVM for the resource state $\ket{\phi_\mathrm{est}}$.
The fidelity is given by
\begin{align}
    F
    &= \int \dd V \int \dd \hat{V} \Tr[M'_{\hat{V}} \ketbra{\phi'_V}] F_\mathrm{ch}(\hat{V}, V)\\
    &= {1\over d^2}\int \dd V \int \dd \hat{V} \Tr[M'_{\hat{V}} \otimes \dketbra{\hat{V}} \cdot \ketbra{\phi'_V}\otimes \dketbra{V}]\\
    &= {1\over d^2}\int \dd V \int \dd W \Tr[M'_{WV_0} \otimes \dketbra{WV_0} \cdot \ketbra{\phi'_V}\otimes \dketbra{V}]\\
    &= {1\over d^2}\int \dd V \int \dd W \Tr[\bigoplus_{\alpha\in\young{d}{n}}(\1_\alpha\otimes W_\alpha)M'_{V_0} \bigoplus_{\alpha\in\young{d}{n}}(\1_\alpha\otimes W_\alpha)^\dagger \otimes (\1\otimes W)\dketbra{V_0}(\1\otimes W)^\dagger \cdot \ketbra{\phi'_V}\otimes \dketbra{V}]\\
    &= {1\over d^2}\int \dd V \int \dd W \Tr[M'_{V_0} \otimes \dketbra{V_0} \cdot \ketbra{\phi'_{W^\dagger V}}\otimes \dketbra{W^\dagger V}]\\
    &= {1\over d^2}\Tr[M'_{V_0} \otimes \dketbra{V_0} \cdot \Pi],
\end{align}
where $V_0\in\isometry{d}{D}$ is a fixed isometry operator, $\dd W$ is the Haar measure on $\SU(D)$ and $\Pi$ is defined by
\begin{align}
    \Pi
    &\coloneqq \int \dd V \ketbra{\phi'_V} \otimes \dketbra{V}\\
    &= \int \dd V \bigoplus_{\alpha, \beta\in\young{d}{n}}{v_\alpha v_\beta\over \sqrt{d_\alpha^{(d)}d_\beta^{(d)}}} \dketbra{V_\alpha}{V_\beta} \otimes \dketbra{V}\\
    &= \bigoplus_{\alpha, \beta\in\young{d}{n}, \mu\in\alpha+\square\cap \beta+\square} {v_\alpha v_\beta \over \sqrt{d_\alpha^{(d)} d_\beta^{(\beta)}}} {\1_{\mcU_\mu^{(d)}} \otimes \1_{\mcU_\mu^{(D)}} \otimes \ketbra{\alpha\alpha}{\beta\beta}_{\mathrm{multi}} \over d_\mu^{(D)}}.
\end{align}
We decompose $M'_{V_0}$ as
\begin{align}
    M'_{V_0} &= \sum_{i=1}^{r} \ketbra{\eta^i},\\
    \ket{\eta^i} &= \bigoplus_{\alpha\in\young{d}{n}} \sqrt{d_\alpha^{(D)}}\dket{\eta_\alpha^i}
\end{align}
using linear maps $\eta_\alpha^i: \mcU_\alpha^{(d)} \to \mcU_\alpha^{(D)}$.
Then, since $\{M'_{V} \dd V\}$ forms the POVM,
\begin{align}
    \1
    &= \int \dd V M'_{V}\\
    &= \int \dd W M'_{WV_0}\\
    &= \int \dd W \bigoplus_{\alpha, \beta\in\young{d}{n}} (\1_\alpha\otimes W_\alpha) M'_{V_0} (\1_\beta\otimes W_\beta)^\dagger\\
    &= \int \dd W \bigoplus_{\alpha, \beta\in\young{d}{n}} \sqrt{d_\alpha^{(D)}d_\beta^{(D)}} (\1_\alpha\otimes W_\alpha) \sum_i \dketbra{\eta_\alpha^i}{\eta_\beta^i} (\1_\beta\otimes W_\beta)^\dagger\\
    &= \bigoplus_{\alpha\in\young{d}{n}} \1_{\mcU_{\alpha}^{(D)}} \otimes \sum_i \eta_\alpha^{i \sfT} \eta_\alpha^{i *},
\end{align}
i.e.,
\begin{align}
    \sum_{i=1}^{r} \eta_\alpha^{i \dagger} \eta_\alpha^{i} = \1_{\mcU_\alpha^{(d)}} \quad \forall \alpha\in\young{d}{n}.
\end{align}
Defining the decomposition of $\eta_\alpha^i \otimes V_0$ by
\begin{align}
    \eta_\alpha^i \otimes V_0 &= \bigoplus_{\mu, \nu\in\alpha+\square} \eta_{\mu\to \nu}^{i, \alpha} \otimes \ketbra{\alpha}_{\mathrm{multi}},
\end{align}
using $\eta_{\mu\to \nu}^{i'}: \mcU_\mu^{(d)} \to \mcU_\nu^{(D)}$, we obtain
\begin{align}
    \bigoplus_{\mu\in\alpha+\square} \1_{\mcU_{\mu}^{(d)}} \otimes \ketbra{\alpha}_{\mathrm{multi}}
    &= \1_{\mcU_{\alpha}^{(d)}}\otimes \1_d\\
    &= \sum_{i=1}^{r} \eta_\alpha^{i \dagger} \eta_\alpha^{i} \otimes V_0^{\dagger} V_0\\
    &= \bigoplus_{\mu, \mu'\in\alpha+\square} \sum_{\nu\in\alpha+\square}\sum_{i=1}^{r} \eta_{\mu'\to \nu}^{i, \alpha \dagger} \eta_{\mu\to \nu}^{i, \alpha} \otimes \ketbra{\alpha}_{\mathrm{multi}},
\end{align}
i.e.,
\begin{align}
    \sum_{\nu\in\alpha+\square} \sum_{i=1}^{r} \eta_{\mu'\to \nu}^{i, \alpha \dagger} \eta_{\mu\to \nu}^{i, \alpha} = \delta_{\mu, \mu'} \1_{\mcU_\mu^{(d)}} \quad \forall \alpha\in\young{d}{n}, \mu, \mu'\in\alpha+\square.
\end{align}
In particular, we obtain
\begin{align}
    \sum_{i=1}^{r} \eta_{\mu\to \mu}^{i, \alpha \dagger} \eta_{\mu\to \mu}^{i, \alpha}
    &=\1_{\mcU_\mu^{(d)}} - \sum_{\nu\in\alpha+\square \setminus \{\mu\}} \sum_{i=1}^{r} \eta_{\mu\to \nu}^{i, \alpha \dagger} \eta_{\mu\to \nu}^{i, \alpha}\\
    &\leq \1_{\mcU_\mu^{(d)}} \quad \forall \alpha\in\young{d}{n}, \mu\in\alpha+\square.
    \label{eq:eta_inequality}
\end{align}
Therefore, the fidelity is further calculated by
\begin{align}
    F
    &= {1\over d^2}\Tr[M'_{V_0} \otimes \dketbra{V_0} \cdot \Pi]\\
    &= {1\over d^2} \Tr[\bigoplus_{\alpha,\beta\in\young{d}{n}} \sqrt{d_\alpha^{(D)} d_\beta^{(D)}} \sum_{i=1}^{r} \dketbra{\eta_\alpha^i \otimes V_0}{\eta_\beta^i \otimes V_0} \cdot \Pi]\\
    &= {1\over d^2} \Tr[\bigoplus_{\alpha,\beta\in\young{d}{n}} \sqrt{d_\alpha^{(D)} d_\beta^{(D)}} \bigoplus_{\mu, \nu\in\alpha+\square, \mu', \nu' \in \beta+\square}\sum_{i=1}^{r} \dketbra{\eta^{i,\alpha}_{\mu\to\nu}}{\eta^{i,\beta}_{\mu'\to\nu'}} \otimes \ketbra{\alpha\alpha}{\beta\beta}_{\mathrm{multi}} \cdot \Pi]\\
    &= {1\over d^2} \sum_{\alpha,\beta\in\young{d}{n}} \sum_{\mu\in\alpha+\square \cap\beta+\square} {v_\alpha v_\beta} \sqrt{d_\alpha^{(D)} d_\beta^{(D)} \over d_\alpha^{(d)} d_\beta^{(\beta)}}\sum_{i=1}^{r} {\Tr[\eta^{i,\alpha}_{\mu\to\mu} \eta^{i,\beta\dagger}_{\mu\to\mu}] \over d_\mu^{(D)}}\\
    \label{eq:cauchy_schwartz_used}
    &\leq {1\over d^2} \sum_{\alpha,\beta\in\young{d}{n}} \sum_{\mu\in\alpha+\square \cap\beta+\square} {v_\alpha v_\beta} \sqrt{d_\alpha^{(D)} d_\beta^{(D)} \over d_\alpha^{(d)} d_\beta^{(\beta)}}{\sqrt{\sum_{i=1}^{r} \Tr[\eta^{i,\alpha}_{\mu\to\mu} \eta^{i,\alpha\dagger}_{\mu\to\mu}]} \sqrt{\sum_{i=1}^{r} \Tr[\eta^{i,\beta}_{\mu\to\mu} \eta^{i,\beta\dagger}_{\mu\to\mu}]} \over d_\mu^{(D)}}\\
    \label{eq:eta_inequality_used}
    &\leq {1\over d^2} \sum_{\alpha,\beta\in\young{d}{n}} \sum_{\mu\in\alpha+\square \cap\beta+\square} {v_\alpha v_\beta} \sqrt{d_\alpha^{(D)} d_\beta^{(D)} \over d_\alpha^{(d)} d_\beta^{(\beta)}}{d_\mu^{(d)} \over d_\mu^{(D)}},
\end{align}
where we use the Cauchy-Schwartz inequality in Eq.~\eqref{eq:cauchy_schwartz_used} and Eq.~\eqref{eq:eta_inequality} in Eq.~\eqref{eq:eta_inequality_used}.
The equality holds when $r=1$ and $\eta^i_{\alpha} = V_{0,\alpha}$ (i.e., $M'_{V_0} = \bigoplus_{\alpha\in\young{d}{n}} d_\alpha^{(D)}\dketbra{V_{0,\alpha}}$) as shown below.
Using Lem.~\ref{lem:isometry_irrep_product} in Appendix~\ref{appendix_sec:shur_weyl_isometry}, we obtain
\begin{align}
    \eta^i_{\alpha} \otimes V_0
    &= V_{0,\alpha} \otimes V_0\\
    &= \bigoplus_{\mu\in\alpha+\square} V_{0, \mu} \otimes \ketbra{\alpha}_\mathrm{multi},
\end{align}
i.e.,
\begin{align}
    \eta^{i,\alpha}_{\mu\to\nu} = \delta_{\mu\nu} V_{0,\mu}
\end{align}
holds.
Therefore, the equality holds in the Cauchy-Schwartz inequality used in Eq.~\eqref{eq:cauchy_schwartz_used} and the inequality shown in Eq.~\eqref{eq:eta_inequality}.
Thus, the optimized POVM is given by
\begin{align}
\label{eq:def_M_hat_V_prime}
    M'_{\hat{V}}
    &= \bigoplus_{\alpha\in\young{d}{n}} d_\alpha^{(D)} \dketbra{\hat{V}},\\
    M_{\hat{V}}
    &= E M'_{\hat{V}} E^\dagger\\
    &= \bigoplus_{\alpha\in\young{d}{n}} d_\alpha^{(D)} (\1_d^{\otimes n} \otimes \hat{V}^{\otimes n})\dketbra{S_\alpha}(\1_d^{\otimes n} \otimes \hat{V}^{\otimes n})^\dagger.
    \label{eq:def_M_hat_V}
\end{align}
The fidelity for the optimized POVM is given by
\begin{align}
    F
    &= \vec{v}^\sfT M_{\mathrm{est}}(n,d,D) \vec{v},
\end{align}
where $M_\mathrm{est}(n,d,D)$ is a $\abs{\young{d}{n}} \times \abs{\young{d}{n}}$ real symmetric matrix defined by
\begin{align}
    (M_\mathrm{est}(n,d,D))_{\alpha\beta}
    &\coloneqq {1\over d^2}\sum_{\mu\in\alpha+\square \cap\beta+\square} \sqrt{d_\alpha^{(D)} d_\beta^{(D)} \over d_\alpha^{(d)} d_\beta^{(\beta)}}{d_\mu^{(d)} \over d_\mu^{(D)}}\\
    &= {1\over d^2}\sum_{i,j=1}^{d} \delta_{\alpha+e_i, \beta+e_j} \sqrt{d_\alpha^{(D)} d_{\alpha+e_i}^{(d)} \over d_\alpha^{(d)} d_{\alpha+e_i}^{(D)}}\sqrt{d_\beta^{(D)} d_{\beta+e_j}^{(d)} \over d_\beta^{(d)} d_{\beta+e_j}^{(D)}}\\
    &= {1\over d^2}\sum_{i,j=1}^{d} \delta_{\alpha+e_i, \beta+e_j} f(\alpha_i-i)f(\beta_j-j).
\end{align}
Thus, the optimal fidelity is given by
\begin{align}
\label{eq:maximization_M_est}
    \max_{\vec{v}: \abs{\vec{v}}=1, v_\alpha\geq 0} \vec{v}^\sfT M_{\mathrm{est}}(n,d,D) \vec{v}.
\end{align}
Since $(M_\mathrm{est}(n,d,D))_{\alpha\beta}\geq 0$ holds for all $\alpha, \beta\in \young{d}{n}$, due to the Peron-Frobenius theorem~\cite{horn2012matrix}, Eq.~\eqref{eq:maximization_M_est} is give by the maximal eigenvalue of $M_{\mathrm{est}}(n,d,D)$.

\subsection{Proof of Lem.~\ref{lem:isometry_estimation_fidelity} (Construction of the asymptotically optimal isometry estimation protocol)}
\label{appendix_sec:proof_isometry_estimation_fidelity}
We construct a parallel covariant isometry estimation protocol achieving the fidelity shown in Lem.~\ref{lem:isometry_estimation_fidelity}, which is used in the proof of Thm.~1 to construct the asymptotically optimal isometry estimation protocol.
We use a similar strategy used in \cite{yang2020optimal} to construct the optimal universal programming of unitary channels.

We construct the vector $(v_\alpha)_{\alpha\in\young{d}{n}}$ in Lem.~\ref{lem:isometry_estimation_matrix} to show the protocol.
To this end, we define a function $g:\NN \to \NN$ satisfying $g(n)\leq {2\over 3(d-1)}({n\over d}+d-2)$ and set a parameter $N=g(n)$.
Using this parameter $N$, we define a set of Young diagrams $\Syoung$ by
\begin{align}
\label{eq:def_Syoung}
    \Syoung\coloneqq \left\{\alpha = (\alpha_1, \ldots, \alpha_d)\in \young{d}{n}\;\middle|\;\exists \tilde{\alpha}\in [N-1]^{d-1} \; \mathrm{s.t.} \;\alpha_i = A_i + \tilde{\alpha}_i \quad \forall i\in\{1,\ldots,d-1\}\right\},
\end{align}
where $[N-1]$ denotes $[N-1] = \{0, \ldots, N-1\}$, $A_i$ is defined by
\begin{align}
\label{eq:def_Ai}
    A_i &= q+(d-i)N+\delta_{i\leq r},
\end{align}
using $q\in\NN$ and $r\in\{0,\ldots,d-1\}$ satisfying
\begin{align}
\label{eq:def_qr}
    n - {d(d-1)\over 2}N = dq+r.
\end{align}
Since
\begin{align}
    A_i &\geq A_{i+1}+N \quad \forall i\in \{1, \ldots, d-2\},\\
    A_{d-1}&\geq \max_{\tilde{\alpha}} \alpha_d +N = n-\sum_{i=1}^{d-1} A_i +N,\\
    \min_{\tilde{\alpha}}\alpha_d &= n-\sum_{i=1}^{d-1}A_i - (d-1)(N-1) \geq 0
\end{align}
hold, any element $\alpha\in \Syoung$ is uniquely specified by $\tilde{\alpha}\in [N-1]^{d-1}$ and $\alpha\in \Syoung$ satisfies
\begin{align}
    \alpha_i > \alpha_{i+1} \quad \forall i\in\{1,\ldots,d-1\}.
\end{align}
We notice that for $\alpha, \beta\in\Syoung$,
\begin{align}
    (M_\mathrm{est}(n,d,D))_{\alpha \beta} &= {1\over d^2}\sum_{i,j=1}^{d} \delta_{\alpha+e_i, \beta+e_j} f(\alpha_i-i)f(\beta_j-j)\\
    &=
    \begin{cases}
        {1\over d^2} f(\alpha_i-i)f(\beta_j-j) & (\exists i, j \;\mathrm{s.t.}\;\tilde{\alpha}-\tilde{\beta} = e_i - e_j)\\
        {1\over d^2} f(\alpha_d - d) f(\beta_i-i) & (\exists i \;\mathrm{s.t.}\; \tilde{\alpha}-\tilde{\beta} = e_i)\\
        {1\over d^2} f(\alpha_i - i) f(\beta_d-d) & (\exists i \;\mathrm{s.t.}\; \tilde{\alpha}-\tilde{\beta} = -e_i)\\
        {1\over d^2} \sum_{i=1}^{d} [f(\alpha_i-i)]^2 & (\tilde{\alpha} = \tilde{\beta})\\
         0 & (\mathrm{otherwise})
    \end{cases}
\end{align}
holds.
Since the function $f$ defined in Lem.~\ref{lem:isometry_estimation_matrix} is monotonically increasing and $q-d\leq \alpha_i-i \leq q+(d-1) N$ holds for all $\alpha\in\mathbb{S}_\mathrm{young}$ and $i\in\{1, \ldots, d\}$,
\begin{align}
    f(q-d) \leq f(\alpha_i-i) \leq f(q+(d-1)N) \quad \forall \alpha\in\Syoung, i \in \{1, \ldots, d\}
\end{align}
holds, and we obtain
\begin{align}
    \begin{cases}
        {1\over d^2} [f(q-d)]^2 \leq (M_\mathrm{est}(n,d,D))_{\alpha \beta} \leq {1\over d^2} [f(q+(d-1)N)]^2 & (\exists i\neq j \;\mathrm{s.t.}\; \tilde{\alpha}-\tilde{\beta} = e_i - e_j)\\
        {1\over d^2} [f(q-d)]^2 \leq (M_\mathrm{est}(n,d,D))_{\alpha \beta} \leq {1\over d^2} [f(q+(d-1)N)]^2 & (\exists i \;\mathrm{s.t.}\; \tilde{\alpha}-\tilde{\beta} = \pm e_i)\\
        {1\over d} [f(q-d)]^2 \leq (M_\mathrm{est}(n,d,D))_{\alpha \beta} \leq {1\over d} [f(q+(d-1)N)]^2 & (\tilde{\alpha} = \tilde{\beta})\\
        (M_\mathrm{est}(n,d,D))_{\alpha \beta} = 0 & (\mathrm{otherwise})
    \end{cases}.
\end{align}
Defining the probability distribution $(g_k)_{k=0}^{N-1}$ by
\begin{align}
    g_k \coloneqq {2\over N}\sin^2\left(\pi(2k+1)\over 2N\right),
\end{align}
we set the vector $(v_\alpha)_{\alpha\in\young{d}{n}}$ to be
\begin{align}
    v_\alpha
    &=
    \begin{cases}
        G_{\tilde{\alpha}} & (\alpha\in\Syoung)\\
        0 & (\mathrm{otherwise})
    \end{cases},
    \label{eq:v_alpha_definition}\\
    G_{\tilde{\alpha}}&\coloneqq
    \begin{cases}
        \prod_{i=1}^{d-1} \sqrt{g_{\tilde{\alpha}_i}} & (\tilde{\alpha} \in [N-1]^{d-1})\\
        0 & (\mathrm{otherwise})
    \end{cases},
\end{align}
which satisfies the normalization condition
\begin{align}
    \sum_{\alpha\in\young{d}{n}}v_\alpha^2 &= \sum_{\tilde{\alpha}\in [N-1]^{d-1}} \prod_{i=1}^{d-1}g_{\tilde{\alpha}_i}\\
    &= \left(\sum_{k=0}^{N-1} g_k\right)^{d-1}\\
    &= 1.
\end{align}
Defining $\epsilon_g$ by
\begin{align}
    \epsilon_g
    &\coloneqq 1-\sum_{k=0}^{N-2} \sqrt{g_k g_{k+1}}\\
    &= 1-{2\over N}\sum_{k=0}^{N-2} \sin(\pi(2k+1)\over 2N) \sin(\pi(2k+3)\over 2N)\\
    &= 1-{1\over N}\sum_{k=0}^{N-2} \left[\cos(\pi\over N) - \cos(\pi(2k+2) \over N)\right]\\
    &= {N-1\over N}\left[1-\cos(\pi\over N)\right],
\end{align}
we have
\begin{align}
    {\pi^2\over 2N^2} - O(N^{-3}) \leq \epsilon_g \leq {\pi^2\over 2 N^2}
\end{align}
and the fidelity of isometry estimation corresponding to the vector $(v_\alpha)_{\alpha\in\young{d}{n}}$ defined in \eqref{eq:v_alpha_definition} is evaluated as
\begin{align}
    F_\mathrm{est}
    &= \sum_{\alpha, \beta\in\Syoung} v_\alpha (M_\mathrm{est}(n,d,D))_{\alpha\beta} v_\beta\\
    &\geq {1\over d^2}[f(q-d)]^2 \left[\sum_{\tilde{\alpha}\in [N-1]^{d-1}} \sum_{i\neq j} G_{\tilde{\alpha}}G_{\tilde{\alpha}-e_i+e_j} + \sum_{\tilde{\alpha}\in [N-1]^{d-1}} \sum_{i=1}^{d-1} \sum_{\pm} G_{\tilde{\alpha}}G_{\tilde{\alpha}\pm e_i} + d \sum_{\tilde{\alpha}\in [N-1]^{d-1}} G_{\tilde{\alpha}}^2\right]\\
    &= {1\over d^2}[f(q-d)]^2 \left[\sum_{i\neq j} \sum_{\tilde{\alpha}_i=1}^{N-1} \sqrt{g_{\tilde{\alpha}_i} g_{\tilde{\alpha}_i-1}} \sum_{\tilde{\alpha}_j=0}^{N-2} \sqrt{g_{\tilde{\alpha}_j} g_{\tilde{\alpha}_j+1}}
    +\sum_{i=1}^{d-1} \sum_{\tilde{\alpha}_i=0}^{N-2}\left(\sqrt{g_{\tilde{\alpha}_i}g_{\tilde{\alpha}_i+1}} + \sum_{\tilde{\alpha}_i=1}^{N-1}\sqrt{g_{\tilde{\alpha}_i}g_{\tilde{\alpha}_i-1}}\right) + d\right]\\
    &= {1\over d^2}[f(q-d)]^2 \left[(d-1)(d-2)(1-\epsilon_g)^2 + 2(d-1)(1-\epsilon_g) + d\right]\\
    & = [f(q-d)]^2 \left[1-{2(d-1)^2 \over d^2} \epsilon_g + {(d-1)(d-2) \over d^2} \epsilon_g^2\right]\\
    &\geq \left[1-{D-d \over q+1+D-d}\right]\left[1-{2(d-1)^2\over d^2}\epsilon_g\right]\\
    &\geq 1-{2(d-1)^2\over d^2}\epsilon_g - {D-d \over q+1+D-d}\\
    \label{eq:derivation_isometry_estimation_fidelity}
    &\geq 1-{\pi^2(d-1)^2\over d^2 N^2}-{D-d\over {n\over d}-{d-1\over 2}N+D-d}\\
    &= 1-{\pi^2(d-1)^2\over d^2 g(n)^2}-{D-d\over {n\over d}-{d-1\over 2}g(n)+D-d}.
\end{align}
Similarly, we show a converse bound given by
\begin{align}
    F_\mathrm{est}
    &\leq [f(q+(d-1)N)]^2 \left[1-{2(d-1)^2 \over d^2} \epsilon_g + {(d-1)(d-2) \over d^2} \epsilon_g^2\right]\\
    &\leq \left[1-{D-d\over q+(d-1)N+D+1}\right] \left[1-{\pi^2 (d-1)^2\over d^2 N^2} +O(N^{-3})\right]\\
    &\leq 1-{\pi^2 (d-1)^2\over d^2 N^2} - {d(D-d) \over n} + O(Nn^{-2}, N^{-3})\\
    & = 1-{\pi^2 (d-1)^2\over d^2 g(n)^2} - {d(D-d) \over n} + O(g(n) n^{-2}, g(n)^{-3}).
\end{align}

\section{Proof of Cor.~2 (Asymptotic optimality of the PBT-based strategy for dSAR of isometry channels and CPTP maps)}

\begin{proof}[Proof of the optimality of Eq.~(10) in the main text]
We show the optimality of Eq.~(9) in the main text.
The optimal retrieval error of $\mbS_\mathrm{Isometry}^{(d,D)}$ is lower bounded by that of $\mbS_\mathrm{Unitary}^{(d)}$ since $\mbS_\mathrm{Unitary}^{(d)}$ can be regarded as a subset of $\mbS_\mathrm{Isometry}^{(d,D)}$ with a natural embedding.
Since the optimal storage and retrieval of $\mbS_\mathrm{Unitary}^{(d)}$ is achieved by the estimation-based strategy~\cite{bisio2010optimal}, the optimal retrieval error is lower bounded by
\begin{align}
    \epsilon \geq 1-F_\mathrm{est}(n,d) = {\Theta(d^4)\over n^2} + O(n^{-3}).
\end{align}
\end{proof}

We then show the program cost of the dSAR of isometry channels via the dPBT protocol shown in Eq.~(10) in the main text.
To construct the dPBT protocol, we utilize the equivalence between the unitary estimation and the dPBT~\cite{yoshida2024one}.
Reference~\cite{yoshida2024one} constructs the dPBT protocol from the parallel covariant unitary estimation protocol, which corresponds to the parallel covariant isometry estimation protocol shown in Sec.~\ref{appendix_sec:proof_isometry_estimation_matrix} for the case of $D=d$.
Suppose the resource state~\eqref{eq:def_phi_est} combined with the POVM~\eqref{eq:def_M_hat_V_prime} for $D=d$ implements unitary estimation with the estimation fidelity $F_\mathrm{est} = 1-\epsilon$.
Then, Ref.~\cite{yoshida2024one} shows a dPBT protocol achieving the teleportation error $\delta\leq \epsilon$ with the resource state
\begin{align}
    \ket{\phi_\mathrm{PBT}}\coloneqq \bigoplus_{\mu\in\young{d}{n+1}} {w_\mu \over \sqrt{d_\mu^{(d)} m_\mu}} \dket{\1_{\mcU_{\mu}^{(d)}}} \otimes \dket{\1_{\mcS_\mu}},
\end{align}
where $d_\mu^{(d)}$ and $m_\mu$ are given in Eqs.~\eqref{eq:d_alpha^d} and \eqref{eq:m_mu_STab}, $w_\mu$ is defined by
\begin{align}
    w_\mu = {\sum_{\alpha\in\mu-_d \square} v_\alpha \over \sqrt{\sum_{\mu\in\young{d}{n+1}} \left(\sum_{\alpha\in\mu-_d \square} v_\alpha\right)^2}},
\end{align}
$\dket{\1_{\mcU_{\mu}^{(d)}}}$ is defined in Eq.~\eqref{eq:def_1_alpha}, and $\dket{\1_{\mcS_\mu}}$ is defined by
\begin{align}
    \dket{\1_{\mcS_\mu}} = \sum_{s_\mu\in\mathrm{STab}(\mu)} \ket{s_\mu} \otimes \ket{s_\mu}
\end{align}
using the Young-Yamanouchi basis $\{\ket{s_\mu}\}_{s_\mu\in\mathrm{STab}(\mu)}$ of $\mcS_\mu$ defined in Eq.~\eqref{eq:yy_basis}.
Defining $\mbS_\mathrm{Young}$ by
\begin{align}
\label{eq:def_Syoung_valpha}
    \mbS_\mathrm{Young}\coloneqq \{\alpha\mid v_\alpha\neq 0\},
\end{align}
$w_\mu$ can be nonzero for $\mu\in \Syoung+_d\square$,
where $\mbS_\mathrm{Young}+_d\square$ is defined by
\begin{align}
    \mbS_\mathrm{Young}+_d\square\coloneqq \bigcup_{\alpha\in\Syoung}(\alpha+_d \square).
\end{align}
The resource state $\ket{\phi_\mathrm{PBT}}$ can be used for storage and retrieval of isometry channel $V$, where the program state is given by
\begin{align}
    (\1_d^{\otimes n+1} \otimes V^{\otimes n+1})\ket{\phi_\mathrm{PBT}} = \bigoplus_{\mu\in\young{d}{n+1}} {w_\mu \over \sqrt{d_\mu^{(d)} m_\mu}} \dket{V_\mu} \otimes \dket{\1_{\mcS_\mu}},
\end{align}
where $\dket{V_\mu}\in \mcU_\mu^{(d)}\otimes \mcU_\mu^{(D)}$ is defined by
\begin{align}
    \dket{V_\mu}\coloneqq (\1_{\mcU_\mu^{(d)}} \otimes V_\mu) \dket{\1_{\mcU_\mu^{(d)}}}.
\end{align}
This program state can be stored in a Hilbert space given by
\begin{align}
    \mcP = \bigoplus_{w_\mu\neq 0} \mcU_\mu^{(d)}\otimes \mcU_\mu^{(D)} \subset \bigoplus_{w_\mu\in \mbS_\mathrm{Young}+_d\square} \mcU_\mu^{(d)}\otimes \mcU_\mu^{(D)}.
\end{align}
Therefore, the program cost is given by
\begin{align}
    c'_P\leq \log\left[\sum_{\mu\in \mbS_\mathrm{Young}+_d\square} d_\mu^{(d)}d_\mu^{(D)}\right].
\end{align}
This shows the following Lemma:

\begin{lemma}
\label{lem:unitary_estimation_to_isometry_learning}
    Suppose there exists a parallel covariant unitary estimation protocol with the resource state~\eqref{eq:def_phi_est} achieving the estimation fidelity $F_\mathrm{est}=1-\epsilon$.
    Then, there exists a universal programmable processor of $\mbS_\mathrm{Isometry}^{(d,D)}$ achieving the retrieval error $\epsilon$ with the program cost given by
    \begin{align}
    \label{eq:program_cost_PBT_based}
        c'_P\leq \log\left[\sum_{\mu\in \mbS_\mathrm{Young}+_d\square} d_\mu^{(d)}d_\mu^{(D)}\right],
    \end{align}
    where $\mbS_\mathrm{Young}$ is defined in Eq.~\eqref{eq:def_Syoung_valpha}.
\end{lemma}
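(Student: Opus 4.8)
The plan is to run the hypothesized parallel covariant unitary estimation protocol through the reduction from unitary estimation to deterministic port-based teleportation (dPBT) of Ref.~\cite{yoshida2024one}, and then to read the resulting resource state --- after the isometry has been applied to its port register --- as a program state that compresses into the claimed Hilbert space. First I would take the given protocol, namely the resource state \eqref{eq:def_phi_est} with amplitudes $(v_\alpha)_{\alpha\in\young{d}{n}}$ together with its accompanying optimal POVM, specialized to $D=d$, and feed it into the construction of Ref.~\cite{yoshida2024one}, which converts any parallel covariant unitary estimation protocol of fidelity $F_\mathrm{est}=1-\epsilon$ into a dPBT protocol of teleportation error $\delta\le\epsilon$ whose resource state is $\ket{\phi_\mathrm{PBT}}=\bigoplus_{\mu\in\young{d}{n+1}}{w_\mu\over\sqrt{d_\mu^{(d)} m_\mu}}\,\dket{\1_{\mcU_\mu^{(d)}}}\otimes\dket{\1_{\mcS_\mu}}$, with each $w_\mu$ the normalized average of the $v_\alpha$ over $\alpha\in\mu-_d\square$. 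Applying $V^{\otimes(n+1)}$ to the port register and using $(\1_d^{\otimes(n+1)}\otimes V^{\otimes(n+1)})\ketbra{\phi_\mathrm{PBT}}$ as the program state, the retrieved channel equals $\mcV\circ\Phi$ for the teleportation channel $\Phi$, and by the argument in the proof of Lem.~\ref{lem:optimal_PBT-based} --- the invariance of the diamond norm under isometry channels --- one gets ${1\over 2}\|\mcV\circ\Phi-\mcV\|_\diamond={1\over 2}\|\Phi-\1_{\mcL(\CC^d)}\|_\diamond=\delta\le\epsilon$, so the retrieval error is at most $\epsilon$.

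It then remains to bound the program cost. Here I would invoke Schur--Weyl duality for isometries, Eq.~\eqref{eq:isometry_sw_duality}, to write $V^{\otimes(n+1)}=\bigoplus_{\mu\in\young{d}{n+1}}V_\mu\otimes\1_{\mcS_\mu}$ with $V_\mu:\mcU_\mu^{(d)}\to\mcU_\mu^{(D)}$ an isometry, so that the program state becomes block diagonal, $\bigoplus_{\mu}{w_\mu\over\sqrt{d_\mu^{(d)} m_\mu}}\,\dket{V_\mu}\otimes\dket{\1_{\mcS_\mu}}$ with $\dket{V_\mu}\in\mcU_\mu^{(d)}\otimes\mcU_\mu^{(D)}$. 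The key observation is that the multiplicity-space factor $\dket{\1_{\mcS_\mu}}$ carries no information about $V$: since the summands occupy mutually orthogonal sectors that encode the label $\mu$ coherently and $V_\mu$ preserves each sector, a retrieval channel that is itself independent of $V$ can, controlled on $\mu$, adjoin a fresh copy of this fixed maximally entangled state, thereby rebuilding $(\1\otimes V^{\otimes(n+1)})\ket{\phi_\mathrm{PBT}}$ before executing the PBT-based retrieval. Hence the program state may be held in $\mcP=\bigoplus_{w_\mu\neq 0}\mcU_\mu^{(d)}\otimes\mcU_\mu^{(D)}$; because all $v_\alpha\ge 0$, having $w_\mu\neq 0$ forces $v_\alpha\neq 0$ for some $\alpha\in\mu-_d\square$, that is, $\mu\in\mbS_\mathrm{Young}+_d\square$, so $\mcP\subseteq\bigoplus_{\mu\in\mbS_\mathrm{Young}+_d\square}\mcU_\mu^{(d)}\otimes\mcU_\mu^{(D)}$ and $c'_P\le\log\sum_{\mu\in\mbS_\mathrm{Young}+_d\square} d_\mu^{(d)}d_\mu^{(D)}$, as claimed.

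I expect the main obstacle to be making the ``discard the $\mcS_\mu$ factor'' step fully rigorous. One has to check that the block label $\mu$ is genuinely available to the ($V$-independent) retrieval map --- which holds because the sectors $\mcU_\mu^{(d)}\otimes\mcU_\mu^{(D)}$ are orthogonal and stable under $V^{\otimes(n+1)}$ --- and that adjoining the controlled state $\bigoplus_\mu\Pi_\mu\otimes\dket{\1_{\mcS_\mu}}$ followed by the Schur isometry implicit in Eq.~\eqref{eq:isometry_sw_duality} reproduces $(\1\otimes V^{\otimes(n+1)})\ket{\phi_\mathrm{PBT}}$ exactly, up to the $V$-independent normalization that is absorbed into the retrieval channel. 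Everything else --- the reduction of Ref.~\cite{yoshida2024one}, the transfer of the teleportation-error bound through composition with $\mcV$, and the final counting of sector dimensions --- is routine.
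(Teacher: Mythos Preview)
Your proposal is correct and follows essentially the same route as the paper: both feed the resource state \eqref{eq:def_phi_est} through the unitary-estimation-to-dPBT reduction of Ref.~\cite{yoshida2024one} to obtain the resource state $\ket{\phi_\mathrm{PBT}}$ with amplitudes $w_\mu$ supported on $\mbS_\mathrm{Young}+_d\square$, apply $V^{\otimes(n+1)}$ to the ports, invoke Eq.~\eqref{eq:isometry_sw_duality} to block-diagonalize, and then compress away the $V$-independent multiplicity factor $\dket{\1_{\mcS_\mu}}$ to land in $\bigoplus_{\mu}\mcU_\mu^{(d)}\otimes\mcU_\mu^{(D)}$. Your write-up is in fact more explicit than the paper's about why the compression is legitimate (controlled re-adjoining of $\dket{\1_{\mcS_\mu}}$ by the $V$-independent retrieval map) and about the retrieval-error transfer via Lem.~\ref{lem:optimal_PBT-based}; the only superfluous remark is the appeal to $v_\alpha\ge 0$, since $w_\mu\neq 0$ already forces some $v_\alpha\neq 0$ with $\alpha\in\mu-_d\square$ regardless of signs.
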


We apply Lem.~\ref{lem:unitary_estimation_to_isometry_learning} for the unitary estimation protocol shown in Ref.~\cite{yang2020optimal} to prove Eq.~(10) in the main text.

\begin{proof}[Proof of Eq.~(10) in the main text]
    Reference~\cite{yang2020optimal} constructs the unitary estimation protocol similar to the one shown in Appendix~\ref{appendix_sec:proof_isometry_estimation_fidelity}.
    By setting $N=\left\lfloor {2\over 3(d-1)}({n\over d}+d-2)\right\rfloor = \Theta(n)$, we can construct the unitary estimation protocol with the estimation fidelity
    \begin{align}
        F_\mathrm{est}
        &\geq 1-{2(d-1)^2\pi^2\over d^2 N^2}\\
        &= 1- {\Theta(d^4) \over n^2}+O(n^{-3}).
    \end{align}
    This evaluation is shown by setting $D=d$ in Eq.~\eqref{eq:derivation_isometry_estimation_fidelity}.
    We evaluate the corresponding program cost~\eqref{eq:program_cost_PBT_based} for the universal programming of the isometry channel as follows:
    \begin{align}
        c'_P
        &\leq \log(\abs{\Syoung+_d\square}) + \max_{\mu\in\Syoung+_d\square} \log\left[d_\mu^{(d)}d_\mu^{(D)}\right]\\
        &\leq \log(\abs{\Syoung+_d\square}) + \max_{\alpha\in\Syoung}\left\{ \log\left[\sum_{\mu\in\alpha+_d\square}d_\mu^{(d)}\right]+\log\left[\sum_{\mu\in\alpha+_D\square}d_\mu^{(D)}\right]\right\}.
    \end{align}
    The cardinality of $\Syoung+_d\square$ is given by
    \begin{align}
        \abs{\Syoung+_d\square}
        &\leq \sum_{\alpha\in\Syoung} \abs{\alpha+_d\square}\\
        &\leq d\abs{\Syoung},
    \end{align}
    and $\sum_{\mu\in\alpha+_d\square}d_\mu^{(d)}$ and $\sum_{\mu\in\alpha+_D\square}d_\mu^{(D)}$ are given by [see Eq.~\eqref{eq:decomp_tensor_product_U_alpha}]
    \begin{align}
        \sum_{\mu\in\alpha+_d\square}d_\mu^{(d)} = d d_\alpha^{(d)}, \quad \sum_{\mu\in\alpha+_D\square}d_\mu^{(D)} = Dd_\alpha^{(d)}.
    \end{align}
    Thus, the program cost $c'_P$ is further evaluated as
    \begin{align}
        c'_P \leq (d-1)\log N + 2\log d + \log D + \max_{\alpha\in\Syoung}\log d_\alpha^{(d)}+ \max_{\alpha\in\Syoung}\log d_\alpha^{(D)}.
    \end{align}
    The values $\max_{\alpha\in\Syoung} \log d_\alpha^{(d)}$ and $\max_{\alpha\in\Syoung}\log d_\alpha^{(D)}$ are evaluated as follows [see Eqs.~\eqref{eq:def_Syoung}--\eqref{eq:def_qr}, \eqref{eq:d_alpha^d} and \eqref{eq:d_alpha^D}]:
    \begin{align}
        \log d_\alpha^{(d)}&= \log\left[{\prod_{1\leq i<j\leq d} (\alpha_i-\alpha_j-i+j) \over \prod_{k=1}^{d-1}k!}\right]\\
        &\leq \sum_{1\leq i<j\leq d} \log(\alpha_i-\alpha_j-i+j)\\
        &= \sum_{1\leq i<j< d} \log(\alpha_i-\alpha_j-i+j) + \sum_{1\leq i< d} \log(\alpha_i-\alpha_d-i+d)\\
        &= \sum_{1\leq i<j< d} \log(A_i-A_j+\tilde{\alpha}_i-\tilde{\alpha}_j-i+j) + \sum_{1\leq i< d} \log(A_i+\tilde{\alpha}_i-n+\sum_{i=1}^{d-1}(A_i+\tilde{\alpha}_i)-i+d)\\
        &\leq \sum_{1\leq i<j< d} \log((2j-2i+1)N-1) + \sum_{1\leq i< d} \log((2d-i)N+1-i)\\
        &\leq {d(d-1)\over 2}\log n + O(1).
    \end{align}
    \begin{align}
        \log d_\alpha^{(D)}
        &= \log d_\alpha^{(d)} + \sum_{1\leq i\leq d, d+1\leq j\leq D}\log(\alpha_i-\alpha_j-i+j) + \sum_{d+1\leq i<j\leq D}\log(\alpha_i-\alpha_j-i+j)-\sum_{k=d}^{D-1}\log(k!)\\
        &= \log d_\alpha^{(d)} + \sum_{1\leq i\leq d, d+1\leq j\leq D}\log(\alpha_i-i+j) + \sum_{d+1\leq i<j\leq D}\log(-i+j)-\sum_{k=d}^{D-1}\log(k!)\\
        &\leq \log d_\alpha^{(d)} + d(D-d) \max_{\alpha\in\Syoung}\log(\alpha_1-1+D)+O(1)\\
        &\leq \log d_\alpha^{(d)} + d(D-d) \log(q+(d-1)N+1)+O(1)\\
        &\leq \log d_\alpha^{(d)} + d(D-d) \log n+O(1).
    \end{align}
    The corresponding program cost is given by
    \begin{align}
        c_P
        & = \log \left[\sum_{\alpha\in\Syoung} (d_\alpha^{(d)})^2\right]\\
        &\leq \log \left[\abs{\Syoung}\right]+2\max_{\alpha\in\Syoung} \log\left[d_\alpha^{(d)}\right],
    \end{align}
    where $\abs{\Syoung} = N^{d-1}$ is the cardinality of the set $\Syoung$ and 
    $\max_{\alpha\in\Syoung} \log\left[d_\alpha^{(d)}\right]$ is evaluated as follows:
    \begin{align}
        \log\left[d_\alpha^{(d)}\right]&= \log\left[{\prod_{1\leq i<j\leq d} (\alpha_i-\alpha_j-i+j) \over \prod_{k=1}^{d-1}k!}\right]\\
        &\leq \sum_{1\leq i<j\leq d} \log(\alpha_i-\alpha_j-i+j)\\
        &= \sum_{1\leq i<j< d} \log(\alpha_i-\alpha_j-i+j) + \sum_{1\leq i< d} \log(\alpha_i-\alpha_d-i+d)\\
        &= \sum_{1\leq i<j< d} \log(A_i-A_j+\tilde{\alpha}_i-\tilde{\alpha}_j-i+j) + \sum_{1\leq i< d} \log(A_i+\tilde{\alpha}_i-n+\sum_{i=1}^{d-1}(A_i+\tilde{\alpha}_i)-i+d)\\
        &\leq \sum_{1\leq i<j< d} \log((2j-2i+1)N-1) + \sum_{1\leq i< d} \log((2d-i)N+1-i)\\
        &\leq {d(d-1)\over 2}\log n + O(1).
    \end{align}
    Thus, the program cost is given by
    \begin{align}
        c'_P
        &\leq (d-1)\log n + d(d-1)\log n + d(D-d)\log n + O(1)\\
        &= (Dd-1) \log n + O(1).
    \end{align}
    Therefore, to achieve the retrieval error $\epsilon$, we can put $n = \sqrt{\Theta(d^4)\over \epsilon}$ to obtain
    \begin{align}
        c'_P &\leq {Dd-1\over 2}\log\Theta(\epsilon^{-1}).
    \end{align}
\end{proof}

\section{Program cost of the estimation-based universal programming of isometry channels}
\label{appendix_sec:isometry_programming_estimation}

This section shows the program cost of the universal programming of isometry channels via the isometry estimation, as shown in the following corollary.
\begin{corollary}
\label{cor:estimation_based_program_cost}
    The program cost of the universal programming of isometry channels with the estimation-based strategy is given by
    \begin{align}
    \label{eq:program_cost_estimation_based}
        c_P \leq {2Dd-d^2-1\over 2}\log\Theta(\epsilon^{-1}).
    \end{align}
\end{corollary}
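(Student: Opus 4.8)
The plan is to run the optimal parallel isometry-estimation protocol of Theorem~\ref{thm:SQL_for_isometry} and store not the raw estimate but a \emph{dithered quantization} of it, exploiting that the channel fidelity depends quadratically on the estimation error. Write $m\coloneqq 2Dd-d^2-1$ for the real dimension of the isometry-channel manifold $\mbS_{\mathrm{Isometry}}^{(d,D)}$, which is a compact homogeneous space. Fix a target retrieval error $\epsilon$ and take $n=\Theta(\epsilon^{-1})$ large enough that $1-F_\mathrm{est}(n,d,D)\le\epsilon/2$ (Theorem~\ref{thm:SQL_for_isometry}); the optimal parallel estimation protocol with $n$ queries then produces, by measuring its program state, a continuous estimate $\hat V\in\isometry{d}{D}$ whose re-preparation, averaged over $\hat V$, gives a retrieved channel $\mcR_V^{(0)}$ of worst-case error $\le\epsilon/2$ (see the SM~\cite{supple}). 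Fix also a net $\mathcal N\subset\mbS_{\mathrm{Isometry}}^{(d,D)}$ of mesh $\delta=\Theta(\sqrt\epsilon)$, taken fine enough to carry a geodesic triangulation of $\mbS_{\mathrm{Isometry}}^{(d,D)}$ into $\Theta(\delta^{-m})$ nondegenerate cells of diameter $O(\delta)$ with vertices in $\mathcal N$. Given $\hat V$, lying in such a cell, output the vertex $\tilde V$ with probability equal to the corresponding barycentric coordinate of $\hat V$ in a normal chart centred at the cell, and store $\tilde V$ in a classical register $\mcP$ with $\dim\mcP=\lvert\mathcal N\rvert$. The program state is $\phi_V=\sum_{\tilde V\in\mathcal N}q(\tilde V\mid V)\ketbra{\tilde V}$, retrieval applies the isometry channel labelled by $\tilde V$, and $c_P=\log\lvert\mathcal N\rvert=\log\Theta(\delta^{-m})=\tfrac m2\log\Theta(\epsilon^{-1})$.

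It remains to check that the quantization adds only $O(\epsilon)$ to the retrieval error, which is exactly where the mesh $\delta=\Theta(\sqrt\epsilon)$ --- rather than the $\Theta(\epsilon)$ that nearest-point rounding would require --- matters. By construction $\mathrm{dist}(\tilde V,\hat V)=O(\delta)$ and $\mathbb{E}[\tilde V\mid\hat V]=\hat V$ up to $O(\delta^2)$ curvature corrections. Taylor-expanding $V\mapsto(V\otimes\1)\rho(V\otimes\1)^\dagger$ about $\hat V$ and averaging over the dither, the first-order term cancels by unbiasedness while the remainder is $O(\delta^2)$ uniformly in $\rho$ and in the reference system (the first and second derivatives being uniformly bounded, by compactness); hence $\tfrac12\lVert\mcR_V-\mcR_V^{(0)}\rVert_\diamond=O(\delta^2)$. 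Choosing $\delta^2$ a small enough multiple of $\epsilon$ makes the total worst-case retrieval error $\le\epsilon$, which establishes Eq.~\eqref{eq:program_cost_estimation_based}.

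The main obstacle is making this $\sqrt\epsilon$-net argument precise: one must (i) produce a triangulation of $\mbS_{\mathrm{Isometry}}^{(d,D)}$ of mesh $O(\delta)$ with covering number $\Theta(\delta^{-m})$ and geodesically nondegenerate cells, so that the barycentric rounding is well defined (compactness also removes any boundary cells), and (ii) bound, uniformly over the manifold and over the ancilla entering the diamond norm, both the curvature correction to $\mathbb{E}[\tilde V\mid\hat V]$ and the second-order term of $V\mapsto V\rho V^\dagger$. The remaining inputs --- the value $1-F_\mathrm{est}(n,d,D)=d(D-d)/n+O(n^{-2})$ and the identity equating the optimal estimation-based retrieval error with $1-F_\mathrm{est}(n,d,D)$ --- are already furnished by Theorem~\ref{thm:SQL_for_isometry} and the Supplemental Material, so the quantization step is the only new ingredient; it is analogous in spirit to the program-cost optimization for unitary SAR in Ref.~\cite{yang2020optimal}, where the quadratic dependence of the channel fidelity on the estimation error likewise halves the exponent.
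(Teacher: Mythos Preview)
Your approach is essentially correct and the numerology comes out right, but it is genuinely different from the paper's argument.

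The paper never discretises the continuous estimate. Instead, its program state \emph{is} the compressed quantum state $\ket{\phi'_V}\in\bigoplus_{\alpha\in\Syoung}\mcU_\alpha^{(d)}\otimes\mcU_\alpha^{(D)}$ from the parallel covariant estimation protocol of Lemma~\ref{lem:isometry_estimation_matrix}, and the program cost is the logarithm of $\sum_{\alpha\in\Syoung}d_\alpha^{(d)}d_\alpha^{(D)}$. The construction of Lemma~\ref{lem:isometry_estimation_fidelity} has a free scale $g(n)=\Theta(n^{t})$ for the Young-diagram window $\Syoung$; Lemma~\ref{lem:program_cost_estimation_based_isometry} evaluates both the error~\eqref{eq:estimation_fidelity} and the dimension of that Hilbert space as functions of $t$, yielding $c_P=h(t)\log\Theta(\epsilon^{-1})$, and the corollary follows by minimising $h(t)$ at $t=\tfrac12$. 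So the halving of the exponent appears in the paper as a balance between the two error terms in \eqref{eq:estimation_fidelity}, not via a second-order quantisation argument.

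Your route replaces this representation-theoretic optimisation by a purely geometric one: measure first, then store a dithered $\sqrt\epsilon$-quantised estimate in a classical register of size $\Theta(\epsilon^{-m/2})$, where $m=2Dd-d^2-1$ is the real dimension of the isometry-channel manifold. The unbiasedness of barycentric rounding is what buys you $O(\delta^2)$ instead of $O(\delta)$ additional error, and this is exactly the ``quadratic'' mechanism you identify. The advantage is conceptual transparency---the coefficient $\tfrac{m}{2}$ is read off directly from the manifold dimension and would transfer to any smooth compact family of channels---whereas the paper's argument stays inside the Schur--Weyl machinery used throughout and produces an explicit quantum program state, which also feeds directly into the conjecture stated in the Conclusion. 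Your obstacles (i)--(ii) are real but standard differential-geometric facts on compact homogeneous spaces; once those are filled in, the argument is complete.
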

\begin{proof}
We show the following Lemma:
\begin{lemma}
\label{lem:program_cost_estimation_based_isometry}
    The universal programming via isometry estimation shown in Lem.~\ref{lem:isometry_estimation_fidelity} has the program cost
    \begin{align}
        c_P = h(t) \log\Theta(\epsilon^{-1})
    \end{align}
    for any $0\leq t\leq 1$ satisfying $g(n)=\Theta(n^t)$, where $h(t)$ is defined by
    \begin{align}
    \label{eq:h_alpha_def}
        h(t)\coloneqq
        \begin{cases}
            {t(d^2-1) + d(D-d)\over 2t} & (0\leq t\leq {1\over 2})\\
            t(d^2-1)+d(D-d) & ({1\over 2}\leq t\leq 1)
        \end{cases}.
    \end{align}
\end{lemma}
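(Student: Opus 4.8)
\emph{Proof plan.} The idea is to read off the dimension of the program register used by the estimation-based protocol of Lemma~\ref{lem:isometry_estimation_fidelity}, estimate it asymptotically, and then trade the query number $n$ for the retrieval error $\epsilon$ using the fidelity bounds of that Lemma. By the construction in the proof of Lemma~\ref{lem:isometry_estimation_matrix}, the program state is $\ket{\phi_V}=(\1_{\mcI^n}\otimes V^{\otimes n})\ket{\phi_\mathrm{est}}$, which after the Schur compression becomes $\ket{\phi'_V}=\bigoplus_{\alpha}(v_\alpha/\sqrt{d_\alpha^{(d)}})(\1_\alpha\otimes V_\alpha)\dket{\1_{\mcU_\alpha^{(d)}}}$ living in $\bigoplus_{\alpha\in\young{d}{n}}\mcU_\alpha^{(d)}\otimes\mcU_\alpha^{(D)}$. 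Since in Lemma~\ref{lem:isometry_estimation_fidelity} the amplitudes $v_\alpha$ are supported only on $\Syoung$ [Eq.~\eqref{eq:v_alpha_definition}], the register may be taken to be $\mcP=\bigoplus_{\alpha\in\Syoung}\mcU_\alpha^{(d)}\otimes\mcU_\alpha^{(D)}$; moreover, by irreducibility of the representations $U_\alpha,W_\alpha$ the states $\ket{\phi'_V}$ span each block as $V$ ranges over $\isometry{d}{D}$, so no further compression is possible and $c_P=\log\sum_{\alpha\in\Syoung}d_\alpha^{(d)}d_\alpha^{(D)}$.

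Next I would estimate this sum. Writing $N=g(n)$, every $\alpha\in\Syoung$ has $\alpha_i=A_i+\tilde\alpha_i$ with $\tilde\alpha_i\in\{0,\dots,N-1\}$, and by Eqs.~\eqref{eq:def_Ai}--\eqref{eq:def_qr} together with the hypothesis $g(n)\le\tfrac{2}{3(d-1)}(\tfrac nd+d-2)$ the offset $q$ obeys $q=\Theta(n)$ uniformly. Plugging this into the Weyl dimension formula~\eqref{eq:d_alpha^d} (padding $\alpha$ with zeros in the $D$-dimensional case), the hook differences $\alpha_i-\alpha_j-i+j$ split into three groups: for $1\le i<j\le d$ each factor is $\Theta(N)$; for $i\le d<j\le D$ each factor equals $\alpha_i+(j-i)=\Theta(n)$; for $d<i<j\le D$ each factor is $O(1)$. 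This gives $d_\alpha^{(d)}d_\alpha^{(D)}=O(N^{d(d-1)}n^{d(D-d)})$ for every $\alpha\in\Syoung$, and $d_\alpha^{(d)}d_\alpha^{(D)}=\Omega(N^{d(d-1)}n^{d(D-d)})$ for the $\Omega(N^{d-1})$ diagrams whose $\tilde\alpha$ lies in a central sub-cube (e.g. $[N/4,3N/4]^{d-1}$, so that all $\tilde\alpha_i-\tilde\alpha_j$ are $O(N/2)$ and hence all differences with $i<j\le d$ stay $\Omega(N)$). Since $|\Syoung|=\Theta(N^{d-1})$, we conclude $\sum_{\alpha\in\Syoung}d_\alpha^{(d)}d_\alpha^{(D)}=\Theta(N^{d^2-1}n^{d(D-d)})$, i.e.
\begin{align}
    c_P=(d^2-1)\log N+d(D-d)\log n+O(1).
\end{align}

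Finally I would convert to $\epsilon$. Combining the lower bound~\eqref{eq:estimation_fidelity} and the converse~\eqref{eq:estimation_fidelity_converse} of Lemma~\ref{lem:isometry_estimation_fidelity} with $q=\Theta(n)$, the retrieval error is $\epsilon=1-F_\mathrm{est}=\Theta(N^{-2})+\Theta(n^{-1})=\Theta(n^{-\min(2t,1)})$ whenever $g(n)=\Theta(n^t)$. For $0\le t\le\tfrac12$ this yields $\log n=\tfrac1{2t}\log\epsilon^{-1}+O(1)$ and $\log N=t\log n+O(1)=\tfrac12\log\epsilon^{-1}+O(1)$, so $c_P=\big(\tfrac{d^2-1}{2}+\tfrac{d(D-d)}{2t}\big)\log\epsilon^{-1}+O(1)$, which is the first branch of $h(t)$; for $\tfrac12\le t\le1$ we get $\log n=\log\epsilon^{-1}+O(1)$ and $\log N=t\log\epsilon^{-1}+O(1)$, so $c_P=(t(d^2-1)+d(D-d))\log\epsilon^{-1}+O(1)$, the second branch of~\eqref{eq:h_alpha_def}. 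The main obstacle is the dimension estimate in the second step: one has to analyse the Weyl formula on the "thick slab" $\Syoung$ and correctly separate the hook differences that scale like the slab width $N$ (giving the $N^{d^2-1}$ factor, $d^2-1=d(d-1)+(d-1)$ counting both the $\binom d2$ factors in each block and the $|\Syoung|$ summands) from those that scale like the total box number $n$ (giving $n^{d(D-d)}$), while checking that $q=\Theta(n)$ uniformly over $\Syoung$, which is exactly where the constraint on $g(n)$ is used.
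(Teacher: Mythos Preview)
Your proposal is correct and follows essentially the same route as the paper: identify $c_P=\log\sum_{\alpha\in\Syoung}d_\alpha^{(d)}d_\alpha^{(D)}$, bound this sum above and below via the Weyl dimension formula by separating the $\Theta(N)$ hook differences from the $\Theta(n)$ ones, and then trade $n$ for $\epsilon$ using the two-sided fidelity estimates of Lemma~\ref{lem:isometry_estimation_fidelity}. The only cosmetic difference is your choice of the central sub-cube $[N/4,3N/4]^{d-1}$ for the lower bound, whereas the paper uses the ordered simplex $\{g(n)/2\ge\tilde\alpha_1\ge\cdots\ge\tilde\alpha_{d-1}\}$; both subsets have $\Theta(N^{d-1})$ elements on which every hook difference with $i<j\le d$ is $\Theta(N)$ and every $\alpha_i$ is $\Theta(n)$, so either choice works.
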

Lemma~\ref{lem:program_cost_estimation_based_isometry} leads to Cor.~\ref{cor:estimation_based_program_cost} since the minimum value of $h(t)$ in Eq.~\eqref{eq:h_alpha_def} is given by
    \begin{align}
        \min_{0\leq r\leq 1} h(t)
        &=h\left({1\over 2}\right)= {2Dd-d^2-1\over 2}.
    \end{align}
\end{proof}
From Cor.~\ref{cor:estimation_based_program_cost} and the fact that the isometry channels have $2Dd-d^2-1$ parameters, we conjecture the following statement:
\begin{conjecture}
    The optimal program cost of estimation-based universal programming of an isometry channel with $\nu$ parameters is given by
    \begin{align}
        c_P^{(\mathrm{est})} = {\nu \over 2} \log\Theta(\epsilon^{-1}).
    \end{align}
\end{conjecture}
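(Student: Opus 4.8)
The statement combines two bounds, and the upper bound $c_P^{(\mathrm{est})}\le\tfrac{\nu}{2}\log\Theta(\epsilon^{-1})$ with $\nu=2Dd-d^2-1$ is already delivered by Cor.~\ref{cor:estimation_based_program_cost}, so the real content is the converse $c_P^{(\mathrm{est})}\ge\tfrac{\nu}{2}\log(1/\epsilon)-O(1)$ (loose enough to be absorbed into the $\Theta$). First I would fix what ``estimation-based universal programming'' means: a family of program states $\{\phi_V\}_{V\in\isometry{d}{D}}\subset\mcL(\mcP)$ together with an extracting POVM $\{M_{\hat V}\,\dd\hat V\}$, whose retrieved channel is $\mcR_V=\int\dd\hat V\,p(\hat V|V)\,\hat{\mcV}$ with $p(\hat V|V)=\Tr[M_{\hat V}\phi_V]$. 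Evaluating the diamond norm on the maximally entangled input, using that the Choi state $J(\mcV)=|\Omega_V\rangle\!\langle\Omega_V|$ with $|\Omega_V\rangle\coloneqq(\1_d\otimes V)|\Omega\rangle$ is pure, and the elementary inequality $\tfrac12\|\sigma-|\psi\rangle\!\langle\psi|\|_1\ge 1-\langle\psi|\sigma|\psi\rangle$, I obtain $\int\dd\hat V\,p(\hat V|V)\,F_\mathrm{ch}(V,\hat V)\ge 1-\epsilon$ for every $V$, since $|\langle\Omega_V|\Omega_{\hat V}\rangle|^2=F_\mathrm{ch}(V,\hat V)$. Averaging over the Haar-random $V$ gives $\mathbb{E}_{V,\hat V}[1-F_\mathrm{ch}(V,\hat V)]\le\epsilon$. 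Since $1-F_\mathrm{ch}$ is a smooth nonnegative function on the compact manifold $\isometry{d}{D}/\mathrm U(1)$ vanishing exactly on the diagonal, where a short computation shows its Hessian is proportional to the form $d\,\Tr(\delta V^\dagger\delta V)-|\Tr(V^\dagger\delta V)|^2$, which is positive-definite on the tangent space modulo the global-phase direction, it is bounded below by $c\cdot\mathrm{dist}(V,\hat V)^2$ for the natural Riemannian distance; hence $\mathbb{E}[\mathrm{dist}(V,\hat V)^2]=O(\epsilon)$.

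Next I would bound the program cost information-theoretically. Data processing together with the Holevo bound for the continuous ensemble $\{\phi_V\}$ gives $c_P=\log\dim\mcP\ge S(\bar\phi)\ge I(V;\hat V)$ with $\bar\phi=\int\dd V\,\phi_V$, so it suffices to lower-bound $I(V;\hat V)$ for an estimator of the Haar source on the $\nu$-dimensional manifold $\isometry{d}{D}/\mathrm U(1)$ with mean-squared Riemannian distortion $O(\epsilon)$. This is the setting of the Shannon lower bound on the rate--distortion function: writing $I(V;\hat V)=h(V)-h(V\mid\hat V)$, the conditional differential entropy is controlled by that of the log-map displacement $\log_{\hat V}V$ (well defined below the injectivity radius, the only relevant scale since the distortion vanishes), which by the Gaussian maximum-entropy inequality is at most $\tfrac{\nu}{2}\log(2\pi e\,\mathbb{E}[\mathrm{dist}^2]/\nu)+o(1)=\tfrac{\nu}{2}\log\epsilon+O(1)$, while $h(V)=\log\operatorname{vol}(\isometry{d}{D}/\mathrm U(1))=O(1)$. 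Hence $I(V;\hat V)\ge\tfrac{\nu}{2}\log(1/\epsilon)-O(1)$, so $c_P\ge\tfrac{\nu}{2}\log(1/\epsilon)-O(1)$, which together with Cor.~\ref{cor:estimation_based_program_cost} yields $c_P^{(\mathrm{est})}=\tfrac{\nu}{2}\log\Theta(\epsilon^{-1})$.

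The main obstacle is transporting these flat-space information-theoretic estimates to the curved compact manifold $\isometry{d}{D}/\mathrm U(1)$: one must compare $1-F_\mathrm{ch}$ with the Riemannian distance uniformly (not merely near the diagonal), control the injectivity radius and the $O(\epsilon)$ curvature corrections to the Gaussian maximum-entropy bound for conditionals that are not perfectly concentrated, and check that $\operatorname{vol}$, $h(V)$ and the Hessian constants depend only on $d,D$ and not on the representative $V$; the naive route through Markov's inequality and an exact-recovery Fano bound is too lossy and only yields the exponent $\nu/4$. A secondary point worth flagging is that the ``estimation-based'' hypothesis is essential: it is precisely the assumption that retrieval applies the estimated channel $\hat{\mcV}$ that makes retrieval error $\le\epsilon$ equivalent to an estimator of fidelity $\ge 1-\epsilon$ and hence makes the Holevo bound bite; for the PBT-based protocol of Thm.~\ref{thm:universal_programming_isometry_construction} the program state provably carries strictly fewer than $\tfrac{\nu}{2}\log\Theta(\epsilon^{-1})$ bits about $V$ when $D>d$, consistent with the fact that no accurate estimate of $V$ can be extracted from it.
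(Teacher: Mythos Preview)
The statement you are attempting to prove is explicitly presented in the paper as an open \emph{conjecture}: the authors write ``we conjecture the following statement'' and immediately add ``We leave it for future work to prove or falsify this conjecture.'' There is therefore no proof in the paper to compare your proposal against; what the paper provides is only the upper bound of Cor.~\ref{cor:estimation_based_program_cost} together with the observation that this bound arises from optimizing within a particular family of protocols, which the authors merely \emph{believe} to be optimal among all estimation-based schemes.

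Your proposal is thus not a reconstruction of an existing argument but an attempt to resolve the open lower bound. The strategy---Holevo followed by a Shannon lower bound on the rate--distortion function of the Haar source on the $\nu$-dimensional manifold $\isometry{d}{D}/\mathrm U(1)$---is a natural one and, if carried through, would indeed close the conjecture. However, you have correctly identified the genuine obstacles yourself: the Gaussian maximum-entropy step is a flat-space statement and controlling the curvature/Jacobian corrections in $h(V\mid\hat V)$ when the conditional distribution is only known to have second moment $O(\epsilon)$ (and may have heavy tails reaching beyond the injectivity radius) is nontrivial; likewise the uniform two-sided comparison between $1-F_\mathrm{ch}$ and the squared Riemannian distance needs a global argument, not just a Hessian computation at the diagonal. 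These are not fatal, but they are real work, and until they are filled in the argument remains a heuristic sketch rather than a proof---which is consistent with the paper's decision to state the claim as a conjecture rather than a theorem.
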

\begin{proof}[Proof of Lem.~\ref{lem:program_cost_estimation_based_isometry}]
    The program cost of the protocol shown in the proof of Lem.~\ref{lem:isometry_estimation_fidelity} is given by
    \begin{align}
        c_P = \log d_P = \log \sum_{\alpha\in\Syoung} d_\alpha^{(d)}d_\alpha^{(D)}.
    \end{align}
    The irreducible representation dimension $d_\alpha^{(d)}$ is given by Eq.~\eqref{eq:d_alpha^d}.
    Since for all $i<j$, $\alpha\in\Syoung$ satisfies [see Eqs.~\eqref{eq:def_Syoung}--\eqref{eq:def_qr}]
    \begin{align}
        \alpha_i-\alpha_j
        &\leq \alpha_1-\alpha_d\\
        &= A_1+\tilde{\alpha}_1 - \left[n-\sum_{i=1}^{d-1} (A_i+\tilde{\alpha}_i)\right]\\
        &= A_1-q+\tilde{\alpha}_1 + \sum_{i=1}^{d-1}\tilde{\alpha}_i\\
        &\leq (d-1)g(n)+1+d[g(n)-1]\\
        &\leq \Theta(g(n)),
    \end{align}
    we obtain
    \begin{align}
        d_\alpha^{(d)} \leq \Theta(g(n)^{d(d-1)\over 2}).
    \end{align}
    Similarly, since
    \begin{align}
        d_\alpha^{(D)} = {\prod_{1\leq i<j\leq D}(\alpha_i-\alpha_j-i+j) \over \prod_{k=1}^{D-1} k!}
    \end{align}
    holds, and $\alpha_{d+1} = \cdots = \alpha_D = 0$ holds for $\alpha\in\Syoung$, we obtain
    \begin{align}
        d_\alpha^{(D)}
        &= {\prod_{1\leq i<j\leq d}(\alpha_i-\alpha_j-i+j) \prod_{1\leq i\leq d, d+1\leq j\leq D}(\alpha_i-i+j) \prod_{d+1\leq i<j\leq D}(-i+j) \over \prod_{k=1}^{d-1} k! \prod_{k=d}^{D-1}k!}\\
        &= {\prod_{1\leq i<j\leq d}(\alpha_i-\alpha_j-i+j) \over \prod_{k=1}^{d-1} k!} {\prod_{d+1\leq i<j\leq D} (-i+j) \over \prod_{k=d}^{D-1}k!} \prod_{1\leq i\leq d, d+1\leq j\leq D} (\alpha_i-i+j)\\
        \label{eq:d_alpha^D}
        &= d_\alpha^{(d)} \prod_{k=1}^{D-d-1} {k! \over (k+d-1)!} \prod_{1\leq i\leq d, d+1\leq j\leq D} (\alpha_i-i+j)\\
        &\leq d_\alpha^{(d)} \prod_{i=1}^{d}(\alpha_i-i+D)^{D-d}.
    \end{align}
    Since $\alpha\in\Syoung$ satisfies [see Eqs.~\eqref{eq:def_Syoung}--\eqref{eq:def_qr}]
    \begin{align}
        \alpha_i \leq \alpha_1\leq q+(d-1)g(n)+g(n) \leq {1\over d}\left[n - {d(d-1)\over 2}g(n)\right] + dg(n) \leq \Theta(n) \quad \forall i,
    \end{align}
    we obtain
    \begin{align}
        d_\alpha^{(D)}\leq \Theta(g(n)^{d(d-1) \over 2} n^{d(D-d)}).
    \end{align}
    Since the cardinality of $\Syoung$ is given by $\abs{\Syoung} = g(n)^{d-1}$, we obtain an upper bound on $c_P = \log \sum_{\alpha\in\Syoung} d_\alpha^{(d)} d_\alpha^{(D)}$ by
    \begin{align}
        c_P\leq (d^2-1)\log \Theta(g(n)) + d(D-d)\log \Theta(n).
    \end{align}
    By putting $g(n)=\Theta(n^t)$ for $0\leq t\leq 1$ in Eqs.~\eqref{eq:estimation_fidelity} and \eqref{eq:estimation_fidelity_converse},
    \begin{align}
    \label{eq:epsilon_scaling}
        \epsilon = 1-F_\mathrm{est} =  \begin{cases}
            \Theta(n^{-2t}) & (0\leq t \leq {1\over 2})\\
            \Theta(n^{-1}) & ({1\over 2}\leq t \leq 1)
        \end{cases}
    \end{align}
    holds, and we obtain
    \begin{align}
        c_P \leq h(t)\log\Theta(\epsilon^{-1}),
    \end{align}
    where $h(t)$ is defined in Eq.~\eqref{eq:h_alpha_def}.
    We prove the converse bound to complete the proof.
    To this end, we define a subset $\tilde{\mathbb{S}}_\mathrm{Young} \subset \Syoung$ by
    \begin{align}
        \tilde{\mathbb{S}}_\mathrm{Young}\coloneqq \left\{\alpha\in\Syoung \; \middle | \;{g(n)\over 2}\geq \tilde{\alpha}_1\geq \cdots \geq \tilde{\alpha}_{d-1}\right\},
    \end{align}
    where $\tilde{\alpha}_i$ is defined in Eq.~\eqref{eq:def_Syoung}.
    Since for all $i<j$, any $\alpha\in \tilde{\mathbb{S}}_\mathrm{Young}$ satisfies [see Eqs.~\eqref{eq:def_Syoung}--\eqref{eq:def_qr}]
    \begin{align}
        \alpha_i-\alpha_j
        &\geq A_i-A_j\\
        &\geq (j-i)g(n)\\
        &\geq \Theta(g(n))
    \end{align}
    we obtain
    \begin{align}
        d_\alpha^{(d)}&\geq \Theta(g(n)^{d(d-1)\over 2})
    \end{align}
    using Eq.~\eqref{eq:d_alpha^d}.
    Since for all $i$, any $\alpha\in \tilde{\mathbb{S}}_\mathrm{Young}$ satisfies [see Eqs.~\eqref{eq:def_Syoung}--\eqref{eq:def_qr}]
    \begin{align}
        \alpha_i
        &\geq \alpha_d\\
        &= n-\sum_{i=1}^{d-1} (A_i+\tilde{\alpha}_i)\\
        &= q-\sum_{i=1}^{d-1}\tilde{\alpha}_i\\
        &\geq {1\over d} \left[n-{d(d-1)\over 2} g(n)\right]-1-(d-1){g(n)\over 2}\\
        &\geq {1\over d}\left[n-d(d-1)g(n)\right]-1\\
        &\geq {n\over 3d}-1\\
        &\geq \Theta(n),
    \end{align}
    where we use $g(n)\leq {2\over 3(d-1)}({n\over d}+d-2)$.
    Therefore, we obtain
    \begin{align}
        d_\alpha^{(D)}&\geq \Theta(g(n)^{d(d-1) \over 2} n^{d(D-d)}),
    \end{align}
    using Eq.~\eqref{eq:d_alpha^D}.
    The cardinality of $\tilde{\mathbb{S}}_\mathrm{Young}$ satisfies
    \begin{align}
        \abs{\tilde{\mathbb{S}}_\mathrm{Young}} \geq {\abs{\Syoung} \over 2^d(d-1)!}.
    \end{align}
    Therefore, we obtain
    \begin{align}
        c_P\geq (d^2-1) \log \Theta(g(n)) + d(D-d)\log  \Theta(n).
    \end{align}
    Using Eq.~\eqref{eq:epsilon_scaling}, we obtain
    \begin{align}
        c_P\geq h(t)\log \Theta(\epsilon^{-1}).
    \end{align}
\end{proof}

\bibliography{main}

\end{document}